\documentclass{article}

\usepackage{algorithm}
\usepackage{algpseudocode}

\usepackage{amsmath,amssymb,amsthm}
\usepackage{listings}
\usepackage{pythonhighlight}
\usepackage{tikz}
\usepackage{pgfplots}
\pgfplotsset{compat=1.18}
\newlength\figH
\newlength\figW

\usepackage{diagbox}

\newtheorem{thm}{Theorem}
\newtheorem{prop}{Proposition}

\newtheorem{fact}{Fact}
\newtheorem{lem}{Lemma}
\newtheorem{assumption}{Assumption}

\usepackage{xcolor}
\usepackage{mathtools}
\mathtoolsset{showonlyrefs=true}
\newcolumntype{P}[1]{>{\centering\arraybackslash}p{#1}} 
\usepackage[numbers]{natbib}

\DeclareMathOperator*{\argmax}{\arg\!\max}
\newtheorem{problem}{Problem}

\title{Approximately Optimal Multi-Stream Quickest Change Detection \thanks{The views and opinions expressed by the authors herein do not necessarily reflect the official policies or positions of the United States Government or its agencies.}}
\author{Joshua Kartzman \thanks{Department of Applied Mathematics and Statistics, Stony Brook University, joshua.kartzman@stonybrook.edu. Supported by Air Force Research Laboratory Information \& Spectrum Warfare through the DoD SMART Scholarship}, \, Calvin Hawkins \thanks{School of Electrical and Computer Engineering, Georgia Institute of Technology, 
chawkins64@gatech.edu}, \,  Matthew Hale \thanks{School of Electrical and Computer Engineering, Georgia Institute of Technology, mhale30@gatech.edu}}

\begin{document}

\maketitle

\begin{abstract}

This paper considers the constrained sampling multi-stream quickest change detection problem, also known as the bandit quickest change detection problem. One stream contains a change-point that shifts its mean by an unknown amount. The goal is to quickly detect this change while controlling for false alarms, while being only able to sample one stream at each time. We propose an algorithm that combines a decaying-$\epsilon$-greedy stream switching rule with a Generalized Likelihood Ratio detection procedure for unknown post-change means.  We provide performance bounds for our algorithm and show it achieves approximate asymptotic first-order optimality with respect to a commonly used surrogate. We are the first to provide guarantees in this setting without assumptions such as a discretized post-change parameter set or a lower bound on the magnitude of change. We provide guarantees for a wide range of light-tailed distributions, including sub-Gaussian and bounded support distributions.

\end{abstract}

\section{INTRODUCTION}

Quick detection of abrupt changes in data is central in several applications, including cybersecurity, sensor networks, industrial process monitoring, and clinical trials. In these settings, the form of the change we seek to detect, such as a change in signal mean, is often unknown and the cost of missing or delaying detection can be high. Furthermore, resource constraints often prevent monitoring all potential data sources simultaneously. This paper studies multi-stream quickest change detection under a sampling constraint: an agent can observe only one source of data at each time step and seeks to minimize detection delay while controlling false alarms when one stream changes at an unknown time.

We consider an agent monitoring $M$ independent data streams, all initially identically generated by a known distribution. At an unknown change-point, the mean of one of the streams shifts by an unknown amount. The agent is constrained to sample only one stream at each time step, introducing an exploration-exploitation trade-off. This could be used to model numerous applications where the data has high-dimensionality and there are power or bandwidth constraints.  For instance, one could imagine a power-constrained radar system in which the known pre-change distribution is background noise. In this setting, activating more than a single antenna requires dividing the limited power, reducing SNR and delaying detection.

Likelihood-based quickest change detection procedures, such as CUSUM, consist of finding the index that maximizes a sum of log-likelihood ratios starting from the most likely change-point location.  When a change-point exists, this evolves as a random walk with positive drift which renews when it crosses below 0. A change-point is flagged when this random walk crosses a fixed threshold. The more the change-containing stream is sampled, the faster a change can be identified. A balance must be struck between having sufficient exploration to identify the change-containing stream and performing enough greedy selection so that the change can be quickly flagged. 

Existing methods rely on assumptions on the post-change distribution or tuning parameters, e.g., a Gaussian generating distribution, a finite post-change parameter set, a lower bound on the size of a change, or a one-sided test that only detects positive shifts. These limit applicability. We develop an algorithm, \emph{Decaying-$\epsilon$-FOCuS}, that avoids these restrictions. Our approach combines a decaying-$\epsilon$-greedy sampling rule with an efficient implementation of the Generalized Likelihood Ratio (GLR) procedure. At each time step, the agent samples an exploration decision. When exploring, the agent randomly selects which stream to observe. Otherwise, the agent selects the stream with the largest GLR statistic. The GLR statistic \cite{siegmund} is an extension of the CUSUM algorithm~\cite{page1954continuous} for dealing with an unknown post-change parameter and in naive implementations requires $O(t)$ per-iteration computations. The FOCuS algorithm~\cite{romano2023fast,romanonp} was introduced to address this and computes the GLR statistic with $O(\log t)$ computations. 

Our contributions can be summarized as follows:
\begin{enumerate}
    \item We propose \emph{Decaying-$\epsilon$-FOCuS}, a bandit-inspired strategy that combines a decaying exploration schedule with an efficient change-point detection procedure.
    \item We provide two implementations of our algorithm based on the Gaussian and Bernoulli GLR statistics. The first allows us to handle detecting mean shifts in Gaussian observations, although we prove that most of the properties of our stream-switching procedure hold for sub-Gaussian observations. The second can detect mean shifts in bounded support distributions, including non-parametric ones.
    \item We provide detection delay and average run length bounds which, with respect to a commonly used surrogate, are approximately optimal. We give a suite of simulations to support our results.
\end{enumerate}

\paragraph*{Related Literature} \cite{xu2021multi,zhangmei,gopalan2021bandit} also consider multi-stream quickest detection with a sampling constraint.  An asymptotically optimal multi-stream detection algorithm is proposed in \cite{xu2021multi}. Their approach only detects one-sided positive mean shifts with a known lower bound. They also assume Gaussian observations. An approach based on Thompson Sampling is given by \cite{zhangmei}.  As noted by \cite{gopalan2021bandit}, though, their analysis is limited and they fail to provide a detection delay bound. An $\epsilon$-greedy approach is proposed in \cite{gopalan2021bandit}. Even though their setting is in some sense more general than ours since they consider a multi-dimensional online data stream with a general action set, which may include observing multiple stream components simultaneously, their formulation and algorithm contain numerous drawbacks. One is their assumption of a finite parameter set. The computational cost of their procedure also scales linearly with the size of this set. As a result of the constant rate of exploration $\epsilon$, their detection delay also contains a sub-optimal $\frac{8}{1-\epsilon}$ constant. By contrast to \cite{xu2021multi} and \cite{gopalan2021bandit}, our GLR procedure permits a continuous post-change parameter set with no lower bound, we avoid an additional constant by reducing $\epsilon$ over time, and we consider a wide range of light-tailed distributions. Our problem formulation is related to piecewise stationary bandits, which is an extension to the stochastic bandit setting that introduces multiple change-points representing fluctuations in the reward distributions over time. Our implementation possesses similarities to \cite{besson,liu,cao} in that they augment a bandit algorithm with a change-point detection algorithm. \cite{besson} even identifies the same advantages we do with the GLR procedure and augments their kl-UCB stochastic bandit procedure with the GLR procedure to handle flagging changes. The problem is different from ours, though, in that it only considers quickest change detection as a sub-problem within the overall goal of reward maximization, whereas for our setting, quickest change detection is the main problem. 

\paragraph*{Notation} Let $\mathbb{N}:=\{1,2,\dots\}$ denote the natural numbers. Let $\mathbb{N}_0:=\{0,1,2,\dots\}$ denote the natural numbers along with 0. For $n\in\mathbb{N}$ we use~~$[n] := \{1,\ldots, n\}.$ We denote a Gaussian distribution with mean~$\mu$ and variance~$\sigma^2$ as $\mathcal{N}(\mu,\sigma^2)$. We denote the standard normal density and distribution functions as $\phi$ and $\Phi$, respectively. We denote a Bernoulli distribution with mean~$\mu$ as $\mathrm{Bernoulli}(\mu)$. The KL-divergence between two distributions~$f_1$ and $f_0$ is denoted as $D(f_1||f_0)$. Given two Bernoulli distributions with means $\mu_1$ and $\mu_0$, we simplify this and write $D(\mu_1||\mu_0)$.
\section{PROBLEM FORMULATION AND BACKGROUND}\label{sec:probform}

We provide a formal problem statement and background on existing quickest change-point detection methods.
\subsection{Problem Formulation}
We consider $M\in\mathbb{N}$ independent data streams and an agent constrained to observe a single stream at a time.
The stream selected and observation generated at time $t$ are denoted as $A_t\in[M]$ and $X_t\in\mathbb{R}$, respectively.
The $i^{th}$ observation from stream $m$ is denoted as $X_i^{(m)}$. We denote the pre-change and post-change densities (mass functions for discrete distributions) as $f_0$ and $f_1$, respectively. We assume $f_1$ contains an unknown parameter $\mu_1$. Let $\nu\in\mathbb{N}_0$ denote the unknown \emph{change-point}. 
\begin{assumption}\label{ass:stream_loc}
    Unknown to the agent, stream~$1$ contains the change-point.
\end{assumption}

The stream ordering is arbitrary and thus Assumption~\ref{ass:stream_loc} does not harm the generality of our work.
If $t>\nu$ and $A_t=1$, then~$X_t \sim f_1$, otherwise, $X_t \sim f_0$. 
We denote the $\sigma$-algebra generated up to time $t\in\mathbb{N}$ as $\mathcal{F}_t=\sigma\left(G_1,A_1,X_1,M_1,\dots,G_t,A_t,X_t,M_t\right)$, with $\mathcal{F}_0=\{\emptyset,\Omega\}$ denoting the trivial sigma algebra.
Given $M$ streams and a change-point at time $\nu$ in stream~$1$, we denote the induced probability measure and expected value as $\mathbb{P}_{M,\nu}$ and $\mathbb{E}_{M,\nu}$, respectively.  When no change-point exists, we use $\mathbb{P}_{M,\infty}$ and $\mathbb{E}_{M,\infty}.$ The assumption of $f_0$ being known is justified since, in practice, abundant pre-change data is available which can be used to accurately estimate the generating distribution. The i.i.d. assumption is quite common in this setting, and is also used in \cite{xu2021multi} and \cite{gopalan2021bandit}, where observations are independently sampled according to a distribution determined by the system parameter and action selected.

In this work a change-point detection procedure consists of a stopping rule $\tau$ and sampling rule $A.$
The stopping rule or stopping time states when to stop collecting samples and flag a change and is, by definition, a random variable where $\left\{\tau\leq t\right\}\in\mathcal{F}_t$ for all $t$, i.e., it is known whether or not the procedure stops at or before $t$ based on the information generated up to time $t$.
The sampling rule $A=(A_1,A_2,\dots)$ denotes the decision rule according to which the agent selects a stream to sample at each point in time.

The Expected Detection Delay (EDD) measures how quickly a procedure detects a change:
\begin{align}
    \mathrm{EDD}_{M}(\tau,A):=\sup_{\nu\geq 0}\mathbb{E}_{M,\nu}[\tau-\nu \mid \tau>\nu]. \nonumber
\end{align}
The Average Run Length (ARL) measures the expected stopping time when no change occurs:
\begin{align}
    \mathrm{ARL}_{M}(\tau,A):=\mathbb{E}_{M,\infty}[\tau]. \nonumber
\end{align}
From \cite{pollak1985optimal}, the goal is minimizing the EDD subject to an ARL constraint:
\begin{problem}\label{prob:main-prob}
Develop a procedure $(\tau,A)$ that solves
\begin{align}
    \min_{(\tau,A)}\quad &\mathrm{EDD}_{M}(\tau,A) \nonumber \\
    \textnormal{subject to } & \mathrm{ARL}_{M}(\tau,A)\geq\gamma \nonumber
\end{align}
for a given constant $\gamma>0.$
\end{problem}

We use~$T_t$ to denote the detection statistic, and the stopping time of the change-point detection algorithm is then $\tau=\inf\left\{t\in\mathbb{N}:T_t\geq\lambda\right\}$, 
where $\lambda$ is the fixed detection threshold set when the algorithm is initialized. Since the threshold $\lambda$ fully parameterizes $\tau$, for our theoretical results we write the stopping time explicitly as $\tau_{\lambda}$. 

\subsection{First-Order Asymptotic Optimality}
\label{sec:pre_optimal}
The asymptotic lower bound in the single-stream setting of the ARL-constrained EDD is
\begin{align}
    \inf_{\tau:\mathbb{E}_\infty[\tau]\geq\gamma}\sup_{\nu\geq0}\mathbb{E}_{\nu}[\tau-\nu \mid \tau>\nu]\geq\frac{(1+o(1))\log\gamma}{D(f_1||f_0)}, \label{eq:eddboundoptimal}
\end{align}
as $\gamma\to\infty$ \cite[Theorem 1]{lai1998}.  A procedure is \emph{asymptotically first-order optimal} if the ratio between the ARL-constrained EDD and $\log\gamma/D(f_1||f_0)$ approaches 1 as $\gamma\to\infty$. We use $\nu=0$ as a surrogate for the EDD for two reasons: (i) CUSUM attains its worst detection delay when $\nu=0$, enabling an easy comparison \cite{tutorial}; and (ii) \cite{siegmund} use the same surrogate.

\subsection{Single-Stream Detection Procedures}
When $f_0$ and $f_1$ are fully specified, the CUSUM statistic, 
\begin{align}
    T_t^{\mathrm{CUSUM}}=\underset{{0\leq k\leq t}}{\operatorname{max}}\sum_{i=k+1}^t\log\left(\frac{f_1(X_i)}{f_0(X_i)}\right), \nonumber
\end{align}
can be used to detect a change \cite{page1954continuous}.  The procedure stops at the earliest time $T_t^{\mathrm{CUSUM}}$ exceeds $\lambda$. It possesses excellent properties: it only imposes a $O(1)$ per-iteration cost, and it is asymptotically optimal \cite{lorden1971procedures}. When $f_1$ contains an unknown parameter $\mu_1$, one can use the GLR procedure: 
\begin{align}
    T_t^{\mathrm{GLR}}=\underset{{0\leq k\leq t}}{\operatorname{max}}\sup_{\theta\in\mathbf{\Theta}}\sum_{i=k+1}^t\log\left(\frac{f_\theta(X_i)}{f_0(X_i)}\right). \nonumber
\end{align}
$f_\theta$ denotes the density given the MLE estimate $\theta$. If $f_0$ and $f_1$ correspond to $\mathcal{N}(0,1)$ and $\mathcal{N}(\mu_1,1)$, respectively,  
\begin{align}
    T_t^{\mathrm{GLR}}=\underset{0\leq k\leq t}{\operatorname{max}}\frac{\left(\sum_{i=k+1}^{t}X_i\right)^2}{2\left(t-k\right)}. \nonumber
\end{align}
When $f_0$ and $f_1$ correspond to $\mathrm{Bernoulli}(\mu_0)$ and $\mathrm{Bernoulli}(\mu_1)$, respectively, 
\begin{align}
    T_t^{\mathrm{GLR}}=\max_{0\leq k\leq t}(t-k)D(\hat{\mu}_{k+1:t}||\mu_0), \nonumber
\end{align}
where $\hat{\mu}_{k+1:t}$ denotes the sample mean of the observations $X_{k+1},\dots,X_t$. The procedure is asymptotically optimal for univariate exponential families \cite{lorden1971procedures}, accommodates a continuous family of post-change parameters, and does not require a lower bound on the amount of change \cite{besson}. 
\subsection{FOCuS Algorithm}
FOCuS, introduced for Gaussian \cite{romano2023fast} and Bernoulli observations \cite{romanonp}, provides an efficient $O(\log t)$ per-iteration implementation for the single-stream setting. It calculates the GLR statistic in terms of a list of quadratic functions. The overall cost function is then a piecewise quadratic function generated by finding the largest value among the list of quadratic functions for each value of the post-change parameter.  The efficiency results from functional pruning: the algorithm eliminates quadratic functions that do not provide any information. We build on FOCuS by using it to implement the GLR procedure in our algorithm. While technically possible to run our algorithm using a naive GLR implementation (such as simply scanning over all possible change-point locations without pruning) it would take prohibitively long. FOCuS makes our algorithm much more applicable in a real-world online setting. Our main contribution is augmenting the GLR statistic (implemented with FOCuS) with our stream switching rule, enabling it to be used in the multi-stream constrained sampling setting with provable guarantees.

\section{THE DECAYING-$\epsilon$-FOCUS ALGORITHM}\label{sec:algorithm}
We give two versions of our algorithm: Gaussian-Decaying-$\epsilon$-FOCuS, based on the Gaussian GLR; and Bernoulli-Decaying-$\epsilon$-FOCuS, based on the Bernoulli GLR. To summarize, our algorithm simply maintains a list of change-point statistics for each stream based on previous observations. It then makes an exploration-exploitation decision according to our decaying-$\epsilon$ procedure. The main idea is decaying the rate of exploration based on the most likely change-point location. Our decaying $O(t^{-1/3})$ exploration rate guarantees the increase in the excess detection delay from exploring streams $m\neq 1$ is sub-linear with respect to the overall detection delay. Specifically, in Theorem \ref{thm:edd}, we prove it is of order $O(\mathbb{E}_{M,0}[\tau_\lambda]^{2/3})$. This allows us to prove asymptotic optimality using the ARL constraint provided in Theorem \ref{thm:arl}, and avoids the issue of an additional $O\left(\frac{1}{1-\epsilon}\right)$ constant encountered in \cite{gopalan2021bandit}'s bound. If the agent explores, it picks a stream randomly. Otherwise, it selects the stream most likely to contain the change-point, which corresponds to the stream with the largest GLR statistic.  

The number of stream $m$'s observations up to and including time $t$ is denoted $N_t^{(m)}$.  A statistic $T_t^{(m)}$ is then generated from these observations. For Gaussian-Decaying-$\epsilon$-FOCuS,
\begin{align}
    T_t^{(m)}=\underset{{0\leq k\leq N_t^{(m)}}}{\operatorname{max}}\frac{\left(\sum_{i=k+1}^{N_t^{(m)}}X_i^{(m)}\right)^2}{2\left(N_t^{(m)}-k\right)}. \nonumber
\end{align}
For Bernoulli-Decaying-$\epsilon$-FOCuS,
\begin{align}
    T_t^{(m)}=\underset{{0\leq k\leq N_t^{(m)}}}{\operatorname{max}}\left(N_t^{(m)}-k\right)D\left(\hat{\mu}_{k+1:N_t^{(m)}}^{(m)}||\mu_0\right), \nonumber
\end{align}
where $\hat{\mu}_{k+1:N_t^{(m)}}^{(m)}$ denotes the sample mean of the observations $X_{k+1}^{(m)},\dots,X_{N_t^{(m)}}$. The case of $k=N_{t}^{(m)}$ is assigned the value 0. If $N_t^{(m)}=0$, then $T_t^{(m)}=0$. The detection statistic is then
\begin{align}
    T_t=\max_{m\in[M]}T_t^{(m)}. \nonumber
\end{align}
We denote the stream most likely to contain the change-point based on the previous $t$ time steps as 
\begin{align}
    M_t\in\argmax_{m\in[M]}T_t^{(m)}. \nonumber
\end{align}
During ties, $M_t$ is randomly chosen from the set of maximizing streams. Stream $m$'s local change-point estimate based on the previous $t$ time steps is denoted as $\hat{\nu}_t^{(m)}$. In the case no new observation is generated for stream $m$, i.e., $A_t\neq m$, then $\hat{\nu}_t^{(m)}=\hat{\nu}_{t-1}^{(m)}$, where $\hat{\nu}_0^{(m)}=0$. Otherwise, we find the index of $\hat{\nu}_t^{(m)}$ relative to stream $m$'s observations, with ties broken by taking the smallest index. In the Gaussian implementation this equals
\begin{align}
    k_t^{(m)}=\min\left\{\argmax_{0\leq k\leq N_{t}^{\left(m\right)}}\frac{\left(\sum_{i=k+1}^{N_{t}^{\left(m\right)}}X_i^{\left(m\right)}\right)^2}{2\left(N_{t}^{\left(m\right)}-k\right)}\right\}.
    \nonumber
\end{align}
In the Bernoulli implementation this equals
\begin{align}
    k_t^{(m)}=\min\left\{\argmax_{0\leq k\leq N_{t}^{\left(m\right)}}\left(N_{t}^{\left(m\right)}-k\right)D\left(\hat{\mu}_{k+1:N_t^{(m)}}^{(m)}||\mu_0\right)\right\}.
    \nonumber
\end{align}
$\hat{\nu}_{t}^{(m)}$ is then the calendar time of the $\left(k_t^{(m)}\right)^{th}$ observation of stream $m$:
\[
    \hat{\nu}_t^{(m)}=\inf\left\{n\geq0:N_n^{(m)}=k_t^{(m)}\right\}.
\]
The final change-point estimate at time $t$ then comes from the stream with the largest GLR statistic: 
\begin{align}
    \hat{\nu}_t=\hat{\nu}_{t}^{(M_t)}. \nonumber
\end{align}
At time $t$, a $\mathrm{Bernoulli}(\epsilon_t)$-distributed exploration decision $G_t$ is sampled, where
\begin{align}
    \epsilon_t=\min\left\{1,\frac{M}{(t-\hat{\nu}_{t-1})^{1/3}}\right\} \nonumber
\end{align} 
denotes the exploration probability, where $\hat{\nu}_{t}\leq t$ for all $t\in\mathbb{N}_0$. If $G_t=1$, the agent randomly samples from $[M]$, otherwise $A_t=M_{t-1}$. After generating an observation $X_t$ from $A_t$, $T_t^{(A_t)}$ is re-calculated and the algorithm stops if $T_t\geq\lambda$. We summarize this process in Algorithm~\ref{alg:decaying_focus}.

\begin{algorithm}[tb]
\caption{Decaying-$\epsilon$-FOCuS}
\label{alg:decaying_focus}
\begin{algorithmic}[1]
\Statex \textbf{Inputs:} Threshold $\lambda>0$
\Statex \textbf{Initialize} For all $m\in[M]$: $T^{(m)}_0\gets0$, $N^{(m)}_0\gets0$; $\hat\nu_0\gets0$
\For{$t=1,2,\ldots$}
  \State $\epsilon_t \gets \min\{1,\; M/\max(1,\,t-\hat\nu_{t-1})^{1/3}\}$
  \State Draw $G_t \sim \mathrm{Bernoulli}(\epsilon_t)$
  \If{$G_t = 1$} \Comment{\textbf{Explore}}
     \State $A_t \gets \mathrm{Uniform}([M])$
  \Else \Comment{\textbf{Exploit}}
     \State $A_t \gets M_{t-1}$
  \EndIf
    
\State Observe $X_t^{(A_t)}$; \ $N_t^{(A_t)} \gets N_{t-1}^{(A_t)}+1$
\State $T_t^{(A_t)} \leftarrow \text{GLR update}(X_t^{(A_t)})$
\State \text{Estimate change-point location} $(\hat{\nu}_t,M_t)$
  \If{$T_t\ge \lambda$} 
     \State \textbf{Stop} and declare change in $M_t$ at time $\hat\nu_t$;
  \EndIf
\EndFor
\end{algorithmic}
\end{algorithm}

\section{PERFORMANCE GUARANTEES}\label{sec:performance}
Section \ref{sec:maintheory} gives distribution-independent bounds on the ARL, EDD, and expected worst-case change-point estimate which hold for any GLR implementation; Sections \ref{sec:gaussiansection} and \ref{sec:bernoullisection} provide distribution-specific bounds for the lower order terms; and Section \ref{sec:optimality} proves the procedure's optimality properties. 

\subsection{Distribution-Independent Results} \label{sec:maintheory}
The following results are independent of the GLR implementation and thus hold for both Gaussian and Bernoulli-Decaying-$\epsilon$-FOCuS. Theorem \ref{thm:arl} establishes the ARL in terms of the 1-stream ARL for the same $\lambda$. 

\begin{thm} \label{thm:arl} Consider using Decaying-$\epsilon$-FOCuS on $M$ streams, all independently and identically distributed according to $f_0$, i.e., no change has occurred. Then for all $\lambda>0$, 
\begin{align}
    \mathbb{E}_{M,\infty}[\tau_{\lambda}]\geq\frac{\mathbb{E}_{1,\infty}[\tau_{\lambda}]}{M}. \nonumber
\end{align}
\end{thm}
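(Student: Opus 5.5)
We prove the bound by showing that the multi-stream procedure is stopped by the stream that first reaches the threshold in \emph{its own} sample count, and each stream's statistic evolves like a single-stream GLR run. Because the sampling rule only decides \emph{when} stream $m$ receives its next observation, we can realize each stream's data from an independent i.i.d.\ $f_0$ sequence $X_1^{(m)},X_2^{(m)},\dots$. Stream $m$ is then fed these values in order, and $T_t^{(m)}$ is a deterministic function of $X_1^{(m)},\dots,X_{N_t^{(m)}}^{(m)}$ alone. Under $\mathbb{P}_{M,\infty}$ every observed value has law $f_0$ whatever the stream choice, so this coupling reproduces the joint law of the procedure. Define
\[
\sigma^{(m)}:=\inf\{n\in\mathbb{N}: \text{the single-stream GLR statistic of } X_1^{(m)},\dots,X_n^{(m)} \text{ is } \geq\lambda\}.
\]
Each $\sigma^{(m)}$ has the law of $\tau_\lambda$ under $\mathbb{P}_{1,\infty}$, so $\mathbb{E}[\sigma^{(m)}]=\mathbb{E}_{1,\infty}[\tau_\lambda]$.

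Next we relate $\tau_\lambda$ to the $\sigma^{(m)}$. The procedure stops at the first $t$ with $\max_m T_t^{(m)}\geq\lambda$, which is the first $t$ at which some stream $m$ has $N_t^{(m)}=\sigma^{(m)}$. At that time every stream has at most $\sigma^{(m)}$ observations, since no stream has yet passed its own crossing. Summing the counts gives
\[
\tau_\lambda=\sum_{m=1}^M N_{\tau_\lambda}^{(m)}\geq N_{\tau_\lambda}^{(m^*)}=\sigma^{(m^*)}\geq \min_m\sigma^{(m)},
\]
where $m^*$ is the stopping stream. For the weaker bound stated in the theorem, it then suffices to show $\mathbb{E}[\min_m\sigma^{(m)}]\geq \mathbb{E}[\sigma^{(1)}]/M$.

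\textbf{Main obstacle.} The $\sigma^{(m)}$ are i.i.d., but $\mathbb{E}[\min]\geq \mathbb{E}[\sigma]/M$ fails for general i.i.d.\ variables, e.g.\ heavy-tailed ones with a point mass near zero. So the bound must use more structure, and there are two routes.

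\textbf{Route 1 (preferred): Wald-type counting.} Track the total number of samples drawn up to $\tau_\lambda$. Consider $\min_m \sigma^{(m)}\wedge$ (samples per stream), and use $\tau_\lambda\geq \max_m N^{(m)}_{\tau_\lambda}$, so some stream received at least $\tau_\lambda/M$ samples. Alternatively, bound directly: $\tau_\lambda\geq \sigma^{(m^*)}$, where $m^*$ is the first stream to cross in sample count. The needed fact is then that the stream that crosses first in calendar time has its crossing index distributed no smaller in expectation than $\sigma/M$.

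\textbf{Route 2: renewal/union bound.} View the single-stream GLR with threshold $\lambda$ through the standard union-bound style estimate. Construct the concatenation of all $M$ streams' observation sequences, interleaved in sampling order. The multi-stream stopping event implies that the max-over-windows statistic of a single stream crossed. Then compare via $\mathbb{P}_{M,\infty}(\tau_\lambda\leq t)\leq \sum_m \mathbb{P}(\sigma^{(m)}\leq t)=M\,\mathbb{P}_{1,\infty}(\tau_\lambda\leq t)$, which comes from $\tau_\lambda\geq\min_m\sigma^{(m)}$ together with a union bound. Integrating tails gives
\[
\mathbb{E}_{M,\infty}[\tau_\lambda]=\sum_{t\ge0}\mathbb{P}(\tau_\lambda>t)\geq\sum_{t\ge0}\bigl(1-M\,\mathbb{P}_{1,\infty}(\tau_\lambda\le t)\bigr)^+ .
\]

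\textbf{Closing the gap.} To reach $\mathbb{E}_{1,\infty}[\tau_\lambda]/M$, I would use the fact that the GLR stopping time is (approximately) geometric-like, i.e.\ it has a submultiplicative tail $\mathbb{P}(\sigma>s+t)\ge\mathbb{P}(\sigma>s)\mathbb{P}(\sigma>t)$, which holds because the GLR statistic is a max over windows. This NBU-type property gives $\mathbb{P}(\min_m\sigma^{(m)}>t)=\mathbb{P}(\sigma>t)^M\geq \mathbb{P}(\sigma>Mt)$ at integer multiples. Summing over $t$ then yields $\mathbb{E}[\min_m\sigma^{(m)}]\geq \mathbb{E}[\sigma]/M$ up to rounding. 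Establishing this supermultiplicativity (the windows restart, and crossing in $[1,s+t]$ implies crossing in $[1,s]$ or in a window meeting $(s,s+t]$, which is dominated by a fresh run) is the step I expect to require the most care.
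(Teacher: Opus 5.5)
Your skeleton is the paper's proof. The coupling and $\tau_\lambda\ge\min_m\sigma^{(m)}$ play the role of the paper's step $N_t^{(m)}\le t$: no stream crosses within its first $t$ observations, so $\tau_\lambda>t$. Your target $\mathbb{P}(\sigma>t)^M\ge\mathbb{P}(\sigma>Mt)$ is the paper's step of placing the $M$ independent length-$t$ blocks end to end into one single-stream sequence of length $Mt$.

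\textbf{The gap.} The key lemma, as you state it, is backwards. You write $\mathbb{P}(\sigma>s+t)\ge\mathbb{P}(\sigma>s)\mathbb{P}(\sigma>t)$ and call it supermultiplicativity. Iterating it gives $\mathbb{P}(\sigma>Mt)\ge\mathbb{P}(\sigma>t)^M$, the opposite of what you use next. It is also false for the GLR. A window that starts in $[1,s]$ and ends in $(s,s+t]$ can cross $\lambda$ even when no window inside either piece does. For example, take the Gaussian case with $s=t=1$ and data near $X_1=X_2=0.9\sqrt{2\lambda}$. Both one-observation statistics are about $0.81\lambda$, but the two-observation statistic is about $1.62\lambda$. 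This is an open set of positive probability, so in fact $\mathbb{P}(\sigma>2)<\mathbb{P}(\sigma>1)^2$. Your parenthetical justification fails at exactly this point: a straddling window is not ``dominated by a fresh run'' started at $s+1$.

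\textbf{The fix.} What you need is the reverse, genuinely submultiplicative (NBU) inequality $\mathbb{P}(\sigma>s+t)\le\mathbb{P}(\sigma>s)\,\mathbb{P}(\sigma>t)$, and it takes one line. If no window in $[1,s+t]$ crosses, then no window inside $[1,s]$ crosses and no window inside $[s+1,s+t]$ crosses. These two events depend on disjoint blocks of i.i.d.\ data, so they are independent, and the second has the law of $\{\sigma>t\}$. Induction gives $\mathbb{P}(\sigma>Mt)\le\mathbb{P}(\sigma>t)^M$, and hence
\[
\mathbb{E}_{M,\infty}[\tau_\lambda]\ge\sum_{t\ge0}\mathbb{P}(\sigma>t)^M\ge\sum_{t\ge0}\mathbb{P}(\sigma>Mt)=\mathbb{E}\left[\lceil\sigma/M\rceil\right]\ge\frac{\mathbb{E}_{1,\infty}[\tau_\lambda]}{M}.
\]
This holds exactly, so no ``approximately geometric'' assumption or rounding caveat is needed. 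With this correction your proof coincides with the paper's.

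Routes 1 and 2 can be dropped. Route 1 only restates the goal. The union bound in Route 2 loses a constant factor, roughly $2$ when $\sigma$ is close to geometric, so it cannot reach $1/M$.
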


Theorem \ref{thm:edd} provides the $M$-stream EDD in terms of the 1-stream EDD for the same $\lambda$.  We partition it as
\begin{align}
    \mathbb{E}_{M,0}\left[\tau_\lambda\right]= \mathbb{E}_{M,0}\left[\sum_{t=1}^{\tau_\lambda}\mathbf{1}\left\{A_t=1\right\}\right]+\mathbb{E}_{M,0}\left[\sum_{t=1}^{\tau_\lambda}\mathbf{1}\left\{G_t=1,A_t\neq1\right\}\right]+\mathbb{E}_{M,0}\left[\sum_{t=1}^{\tau_\lambda}\mathbf{1}\left\{G_t=0,A_t\neq1\right\}\right]. \label{eq:partitiona}
\end{align}
The expected number of times stream 1 is sampled until $\tau_\lambda$ is bounded as the single-stream EDD:
\begin{align}
    \mathbb{E}_{M,0}\left[\sum_{t=1}^{\tau_\lambda}\mathbf{1}\left\{A_t=1\right\}\right]\leq\mathbb{E}_{1,0}\left[\tau_\lambda\right]. \nonumber 
\end{align}
The expected amount of time sampling streams $m\neq 1$ during exploration periods is then bounded as 
\begin{align}
    \mathbb{E}_{M,0}\left[\sum_{t=1}^{\tau_\lambda}\mathbf{1}\left\{G_t=1,A_t\neq1\right\}\right] \leq(M-1)\left(\frac{1}{M}\mathbb{E}_{M,0}\left[\sup_{t\geq 0}\hat{\nu}_t\right]+ \frac{3}{2}\mathbb{E}_{M,0}[\tau_{\lambda}]^{2/3}\right). \nonumber
\end{align}
The expected amount of time sampling streams $m\neq 1$ during exploitation periods is bounded as
\begin{align}
    \mathbb{E}_{M,0}\left[\sum_{t=1}^{\tau_\lambda}\mathbf{1}\left\{G_t=0,A_t\neq1\right\}\right]\leq\sum_{t=1}^\infty\mathbb{P}_{M,0}\left(M_t\neq 1\right). \label{eq:htbound}
\end{align}

\begin{thm} \label{thm:edd}
Consider applying Decaying-$\epsilon$-FOCuS on $M$ streams, all independently and identically distributed according to $f_0$, except stream 1, which is distributed according to $f_1$ ($\nu=0$). Then for all $\lambda>0$, 
\begin{align}
    \mathbb{E}_{M,0}[\tau_{\lambda}]\leq \mathbb{E}_{1,0}[\tau_{\lambda}]+ \sum_{t=1}^\infty\mathbb{P}_{M,0}\left(M_t\neq 1\right)+ (M-1)\left(\frac{1}{M}\mathbb{E}_{M,0}\left[\sup_{t\geq 0}\hat{\nu}_t\right]+ \frac{3}{2}\mathbb{E}_{M,0}[\tau_{\lambda}]^{2/3}
        \right). \notag
\end{align}
\end{thm}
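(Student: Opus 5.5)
Theorem~\ref{thm:edd} follows from the partition \eqref{eq:partitiona} once each of its three terms is bounded as stated just before the theorem. The work is in justifying those three bounds.

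\emph{First term.} I would use a coupling argument. Generate stream~1's observations from a single i.i.d.\ $f_1$ sequence $X_1^{(1)},X_2^{(1)},\dots$, consumed in order whenever stream~1 is sampled. Let $\tau^{(1)}_\lambda$ be the first $n$ at which the single-stream GLR statistic computed from $X_1^{(1)},\dots,X_n^{(1)}$ reaches $\lambda$. Since $T_t\ge T_t^{(1)}$, the $M$-stream procedure stops no later than the moment stream~1 has been sampled $\tau^{(1)}_\lambda$ times. Hence $\sum_{t\le\tau_\lambda}\mathbf 1\{A_t=1\}\le\tau^{(1)}_\lambda$ pathwise. Under the coupling, $\tau^{(1)}_\lambda$ has the law of the stopping time under $\mathbb P_{1,0}$, which gives the bound $\mathbb E_{1,0}[\tau_\lambda]$. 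One subtlety is that the sampling decisions depend on the past. This is harmless, because the $k$-th observation of stream~1 is always a fresh $f_1$ draw independent of how it was selected, which is the usual ``stack of rewards'' construction from bandits.

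\emph{Third term.} On $\{G_t=0\}$ we have $A_t=M_{t-1}$, so $\{G_t=0,A_t\ne1\}\subseteq\{M_{t-1}\ne1\}$. Summing over $t\le\tau_\lambda$, bounding by the sum over all $t$, and taking expectations gives $\sum_{t\ge0}\mathbb P_{M,0}(M_t\ne1)$. The $t=0$ term needs a remark. At $t=0$ all statistics are zero and ties are broken at random, so $\mathbb P(M_0\ne1)\le1$. I would either absorb this term or observe that the first step has $\epsilon_1=\min\{1,M\}=1$, so $G_1=1$ almost surely and the $t=1$ exploit term vanishes. The sum therefore starts effectively at $t=1$, matching \eqref{eq:htbound}.

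\emph{Second term (main obstacle).} Conditionally on $\mathcal F_{t-1}$, exploration picks a stream $\ne1$ with probability $\epsilon_t(M-1)/M$. The event $\{t\le\tau_\lambda\}$ is $\mathcal F_{t-1}$-measurable, so the tower property gives
\[
\mathbb E\Big[\sum_{t\le\tau_\lambda}\mathbf 1\{G_t=1,A_t\ne1\}\Big]=\frac{M-1}{M}\,\mathbb E\Big[\sum_{t\le\tau_\lambda}\epsilon_t\Big].
\]
Now let $V=\sup_t\hat\nu_t$. For $t\le V+1$ I bound $\epsilon_t\le1$, which contributes at most $V$ in total, up to an additive constant that needs care. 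For larger $t$, $t-\hat\nu_{t-1}\ge t-V$, so $\epsilon_t\le M(t-V)^{-1/3}$. Summing,
\[
\sum_{s=1}^{\tau_\lambda}M s^{-1/3}\le M\cdot\tfrac32\,\tau_\lambda^{2/3}.
\]
After multiplying by $(M-1)/M$ this yields $(M-1)\big(V/M+\tfrac32\tau_\lambda^{2/3}\big)$. Jensen's inequality for the concave map $x\mapsto x^{2/3}$ then gives $\mathbb E[\tau_\lambda^{2/3}]\le\mathbb E[\tau_\lambda]^{2/3}$.

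The delicate point is the boundary bookkeeping: whether the times $t\le V$ cost $V$ or $V+1$, and getting the factor $1/M$ in front of $V$ rather than $1$. Because $\epsilon_t\le1$ only contributes $V(M-1)/M$, this already gives the stated $1/M$ coefficient. I would also check that $\hat\nu_{t-1}\le t-1$, so that $t-\hat\nu_{t-1}\ge1$ and the $\max(1,\cdot)$ in the algorithm is never active.

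Adding the three bounds completes the proof of Theorem~\ref{thm:edd}.
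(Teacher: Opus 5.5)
Your proposal is correct and follows essentially the same route as the paper: the three-way partition, the stack-of-observations argument bounding stream-1 samples by $\mathbb{E}_{1,0}[\tau_\lambda]$, the inclusion $\{G_t=0,A_t\neq1\}\subseteq\{M_{t-1}\neq1\}$ together with $G_1=1$ almost surely, and, for the exploration term, the tower property, a split at the worst-case estimate, the integral comparison $\sum_{s\le n}s^{-1/3}\le\frac32 n^{2/3}$, and Jensen. The paper additionally truncates at $\tau_\lambda\wedge n$ and applies monotone convergence; your nonnegativity (Tonelli) argument makes that step unnecessary. Your boundary worry also disappears if you split at $t\le V$ versus $t\ge V+1$: at $t=V+1$ the bound $M(t-V)^{-1/3}=M\ge1$ already dominates $\epsilon_t$, so no additive constant appears.
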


Proposition \ref{prop:changepointmax} bounds the worst-case change-point estimate in terms of (i) the worst-case estimate when $M=1$ and (ii) the time at which stream 1 is conclusively established as the change-containing stream, denoted 
\begin{align}
    t_{\mathrm{stable}}=\inf\left\{t\in\mathbb{N}: \forall i \geq t,M_i=1\right\}. \label{eq:t0definition}
\end{align}
The existence and integrability of $t_{\mathrm{stable}}$ are guaranteed later for the Gaussian and Bernoulli implementations.  

\begin{prop} \label{prop:changepointmax}
Assume the same conditions as Theorem \ref{thm:edd}. Then for all $\lambda>0$, Decaying-$\epsilon$-FOCuS satisfies
\begin{align}
    \mathbb{E}_{M,0}\left[\sup_{t\geq 0}\hat{\nu}_t\right]\leq \mathbb{E}_{M,0}[t_{\mathrm{stable}}]+M\mathbb{E}_{1,0}\left[\sup_{t\geq 0}\hat{\nu}_t\right]. \nonumber
\end{align}
\end{prop}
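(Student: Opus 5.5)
The plan is to split the supremum according to whether $t<t_{\mathrm{stable}}$ or $t\geq t_{\mathrm{stable}}$. Write
\begin{align}
\sup_{t\geq0}\hat{\nu}_t\leq \max\Big\{\sup_{t<t_{\mathrm{stable}}}\hat{\nu}_t,\ \sup_{t\geq t_{\mathrm{stable}}}\hat{\nu}_t^{(1)}\Big\}\leq t_{\mathrm{stable}}+\sup_{t\geq t_{\mathrm{stable}}}\hat{\nu}_t^{(1)} .
\end{align}
The first term is handled by $\hat{\nu}_t\leq t$. The second step uses $M_t=1$, hence $\hat{\nu}_t=\hat{\nu}_t^{(1)}$, for all $t\geq t_{\mathrm{stable}}$. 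This also makes $\sup_{t\geq 0}\hat\nu_t^{(1)}$ the object to control.

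Next, I relate $\hat{\nu}_t^{(1)}$ to the single-stream estimate. Let $k_t^{(1)}$ be the index, among stream $1$'s own observations, of its local change-point estimate. This is a deterministic function of the first $N_t^{(1)}$ observations of stream $1$, which are i.i.d. $f_1$. In the $M=1$ procedure, $N_t=t$ and the calendar time equals the index. So $\sup_t k_t^{(1)}$ under $\mathbb{P}_{M,0}$ has the same distribution as $\sup_t\hat{\nu}_t$ under $\mathbb{P}_{1,0}$, restricted to the horizon $\sup_t N_t^{(1)}$. Stopping at $\tau_\lambda$ only truncates this, so it gives an upper bound. The remaining task is to convert the index $k$ into calendar time $\hat{\nu}_t^{(1)}=\inf\{n:N_n^{(1)}=k\}$. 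Here I need the expected calendar time to reach the $k$-th sample of stream $1$ to be at most $Mk$, plus whatever is already accounted for by $t_{\mathrm{stable}}$.

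I would argue that conversion as follows. Before $t_{\mathrm{stable}}$, the calendar time is bounded by $t_{\mathrm{stable}}$ itself. In the combined bound I would instead write $\hat{\nu}_t^{(1)}\leq t_{\mathrm{stable}} + (\text{calendar gap})$ and handle only the samples after stabilization. After $t_{\mathrm{stable}}$, the exploitation steps always choose stream $1$. During exploration, stream $1$ is still chosen with probability $1/M$. So at every time step stream $1$ is sampled with conditional probability at least $1/M$ given $\mathcal{F}_{t-1}$. The number of calendar steps needed to collect $k$ samples of stream $1$ is then stochastically dominated by a sum of $k$ geometric$(1/M)$ variables. By Wald's identity, or an optional-stopping argument with $\sum_t(\mathbf 1\{A_t=1\}-1/M)$, its expectation is at most $Mk$. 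Taking the random $k=\sup_t k_t^{(1)}$ and bounding the result by $M\,\mathbb{E}_{1,0}[\sup_t\hat\nu_t]$ gives the second term.

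The main obstacle is this last step. The index $k=\sup_t k_t^{(1)}$ is not a stopping time for the sampling process and depends on future observations of stream $1$, so a direct Wald argument needs care. I would first try to exploit independence between the observation values $X_i^{(1)}$ and the sampling coins. The idea is to realize stream $1$'s observations as a pre-drawn i.i.d. sequence, independent of the exploration randomization. Conditioning on that sequence, $k$ becomes a fixed number. The geometric domination then applies, at least for the uniform-exploration part that is independent of stream $1$'s values. If the dependence through $M_t$ spoils this, I would fall back on the crude bound that $\mathbf{1}\{A_t=1\}$ stochastically dominates independent Bernoulli$(1/M)$ coins, obtained by coupling the exploration draws. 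With that coupling, the calendar time to the $k$-th sample is at most the time to the $k$-th success of those coins, and those coins are independent of the stream-$1$ values. Taking expectations then gives the claimed bound.
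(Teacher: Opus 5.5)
Your reduction $\sup_t\hat{\nu}_t\leq\max\{t_{\mathrm{stable}},\hat{\nu}^{(1)}_{\mathrm{sup}}\}\leq t_{\mathrm{stable}}+(\hat{\nu}^{(1)}_{\mathrm{sup}}-t_{\mathrm{stable}})^+$ is correct and matches the paper. So is your identification of $k:=\sup_t k^{(1)}_t$ with the single-stream supremum via a pre-drawn i.i.d.\ stream-1 sequence, using $N^{(1)}_t\to\infty$ a.s.

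The gap is the step you flag yourself: bounding the calendar-time excess by something with mean $Mk$. Your calendar-time geometric$(1/M)$ domination fails, for two reasons.
\begin{enumerate}
\item When $M_{t-1}\neq 1$, $\mathbb{P}_{M,0}(A_t=1\mid\mathcal{F}_{t-1})=\epsilon_t/M=\min\{1/M,(t-\hat{\nu}_{t-1})^{-1/3}\}$, which is below $1/M$. So $\mathbf{1}\{A_t=1\}$ does not dominate i.i.d.\ Bernoulli$(1/M)$ coins. Coupling the exploration draws can only give coins of decaying success probability $\epsilon_t/M$; the $1/M$ you have after stabilization comes from exploitation.
\item On $\{M_{t-1}=1\}$ the bound $\geq 1/M$ does hold, but $t_{\mathrm{stable}}$ is not a stopping time: $\{t_{\mathrm{stable}}\leq s\}=\{M_i=1\ \forall i\geq s\}$ depends on the whole future. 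So you cannot start a Wald or geometric-domination argument at $t_{\mathrm{stable}}$. Conditioning on future stability biases the subsequent sampling and stream-1 observations, and the coins after $t_{\mathrm{stable}}$ are neither i.i.d.\ nor independent of $k$.
\end{enumerate}

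The missing idea is to count in exploration-draw time rather than calendar time, and to start the count at time $1$ rather than at $t_{\mathrm{stable}}$. For $t>t_{\mathrm{stable}}$, every step with $A_t\neq 1$ is an exploration step. Hence the number of steps in $(t_{\mathrm{stable}},\hat{\nu}^{(1)}_{\mathrm{sup}}]$ is at most $k$ (the stream-1 samples up to $\hat{\nu}^{(1)}_{\mathrm{sup}}$) plus the number of exploration steps with $A_t\neq 1$ in that window. Let $g_k$ be the time at which stream $1$ has been \emph{explored} $k$ times. Since $\hat{\nu}^{(1)}_{\mathrm{sup}}\leq g_k$, you may enlarge the window to $[1,g_k]$, and then
\[
k+\sum_{t=1}^{g_k}\mathbf{1}\{G_t=1,A_t\neq 1\}=\inf\Big\{n\geq 0:\sum_{i=1}^n\mathbf{1}\{E_i=1\}=k\Big\},
\]
where $(E_i)$ is the sequence of uniform exploration draws. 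The right-hand side depends only on an i.i.d.\ Uniform$([M])$ sequence; the history-dependent timing of explorations through $\epsilon_t$ drops out. It is therefore independent of $k$ and has conditional mean $Mk$. This yields $M\,\mathbb{E}_{1,0}[\sup_t\hat{\nu}_t]$ without needing any stopping-time property of $t_{\mathrm{stable}}$. Your remark that the uniform-exploration part is independent of stream 1's values was the right starting point; it only works once calendar time is discarded.
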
 
When $M=1$, $\hat{\nu}_t$ is asymptotically the maximizer of a random walk with negative drift \cite{hinkley1970changepoint}, and thus the probability of it growing arbitrarily large as $t\to\infty$ is exponentially decreasing, allowing us to bound it as a finite constant. The GLR's covariance structure makes using the finite-sample distribution of the worst-case estimate intractable. Thus we approximate it using the estimate generated by CUSUM. This is justified since the GLR estimate has the same asymptotic distribution as the CUSUM estimate \cite{hinkley1970changepoint}. 

\subsection{Gaussian Implementation Results} \label{sec:gaussiansection}
Though we prove optimality directly for Gaussian observations, we prove that the main properties of Gaussian-Decaying-$\epsilon$-FOCuS are satisfied for the more general case of 1-sub-Gaussian observations, which can be extended to $\sigma$-sub-Gaussian observations through standardization. A sequence $(X_t)_{t\geq0}$ has $\sigma$-sub-Gaussian noise if $\forall t\geq0,\forall\lambda\in\mathbb{R}$, $$\mathbb{E}[\exp\left(\lambda\left(X_t-\mathbb{E}[X_t]\right)\right)]\leq\exp\left(\frac{\lambda^2\sigma^2}{2}\right).$$
Propositions \ref{prop:finiteexploitationgaussian}-\ref{prop:changepointboundgaussian} provide bounds for the lower order terms in Theorem \ref{thm:edd} and Proposition \ref{prop:changepointmax}. We prove that
\[
    \mathbb{P}_{M,0}\left(M_t\neq 1\right)
\]
decays quickly enough to be summable.

\begin{prop} \label{prop:finiteexploitationgaussian}
    Consider $M$ 1-sub-Gaussian streams, where at $\nu=0$ stream 1's mean shifts to $\mu_1\neq 0$, where $\mu_1$ is unknown, with all other streams having mean $\mu_0=0$. For all $\lambda>0$, Gaussian-Decaying-$\epsilon$-FOCuS satisfies
    \begin{align}
        \sum_{t=1}^\infty\mathbb{P}_{M,0}\left(M_t\neq 1\right)\leq C_{M,\mu_1}^{(1)}, \nonumber
    \end{align}
    where $C_{M,\mu_1}^{(1)}>0$ is a $\lambda$-independent constant dependent on $M$ and the post-change parameter $\mu_1$.
\end{prop}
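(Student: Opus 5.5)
The plan is to show that $\mathbb{P}_{M,0}(M_t\neq 1)$ decays fast enough in $t$ to be summable, with constants that do not depend on $\lambda$. Since $\lambda$ only affects stopping and not the sampling law up to time $t$, we may analyze the process run forever. We use the fact that exploration is forced. At each time $t$ we have $\hat{\nu}_{t-1}\ge 0$, so $t-\hat{\nu}_{t-1}\le t$ and hence $\epsilon_t\ge \min\{1,M t^{-1/3}\}$. Given $G_t=1$, each stream is chosen with probability $1/M$. Therefore
\begin{align}
    \mathbb{P}_{M,0}(A_t=m\mid\mathcal{F}_{t-1})\ge \min\{1/M,\,t^{-1/3}\} \quad \text{for every } m\in[M].
\end{align}
The sum of these lower bounds up to $t$ grows like $\tfrac{3}{2}t^{2/3}$. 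By a Freedman/Bernstein martingale inequality applied to $N_t^{(m)}-\sum_{s\le t}\mathbb{P}(A_s=m\mid\mathcal{F}_{s-1})$, we get $N_t^{(m)}\ge c\,t^{2/3}$ for all $m$, except on an event of probability at most $\exp(-c' t^{2/3})$. This tail is summable in $t$.

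On the good event, we split $\{M_t\neq1\}$ into two deviations. Write $n_1=N_t^{(1)}$.

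\emph{(i) Stream 1's statistic is too small.} Taking $k=0$ gives $T_t^{(1)}\ge (\sum_{i\le n_1}X^{(1)}_i)^2/(2n_1)$. By sub-Gaussianity, $|\bar X^{(1)}_{n_1}-\mu_1|\le|\mu_1|/2$ except with probability $2\exp(-n_1\mu_1^2/8)$. On the complement, $T_t^{(1)}\ge n_1\mu_1^2/8\ge c\mu_1^2t^{2/3}/8$.

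\emph{(ii) Some null stream's statistic is too large.} For $m\ne1$ with $n=N_t^{(m)}\le t$ observations, the statistic is $T_t^{(m)}=\max_{k}S_{k+1:n}^2/(2(n-k))$, where the $S$ are partial sums of centered 1-sub-Gaussian variables. A union bound over the at most $n$ windows, each with tail $2e^{-x}$, gives
\begin{align}
    \mathbb{P}(T_t^{(m)}\ge x)\le 2n e^{-x}\le 2t e^{-x}.
\end{align}
Setting $x=c\mu_1^2t^{2/3}/8$ and summing over the $M-1$ null streams yields a bound $2(M-1)t\exp(-c\mu_1^2t^{2/3}/8)$, which is summable.

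One subtlety here is that $n_1$ and $N_t^{(m)}$ are random and depend on the observations through the sampling rule. We handle this by indexing each stream's observations by its own sample count, so the $i$-th observation of stream $m$ is i.i.d.\ regardless of when it is sampled. We then take a union bound over the possible values of $n_1\in[ct^{2/3},t]$ and $n\le t$. This costs only polynomial factors in $t$, which are absorbed by the stretched-exponential tails.

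The total is
\begin{align}
    \sum_t\Big[M e^{-c't^{2/3}}+2t^{2}e^{-\mu_1^2ct^{2/3}/8}+2(M-1)t\,e^{-c\mu_1^2t^{2/3}/8}\Big]=:C^{(1)}_{M,\mu_1}<\infty,
\end{align}
which depends only on $M$ and $\mu_1$, not on $\lambda$. The step I expect to be the main obstacle is the concentration of the sample counts $N_t^{(m)}$. The exploration probability depends on $\hat\nu_{t-1}$, which is adaptive, so we must rely on the uniform lower bound $\epsilon_t\ge\min\{1,Mt^{-1/3}\}$ rather than on independence. Making the martingale inequality give a clean $t^{2/3}$ rate may require handling the small-$t$ regime, where $\epsilon_t=1$, separately. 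Those finitely many terms are in any case bounded by a constant depending only on $M$.
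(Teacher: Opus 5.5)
Your proposal is correct and follows essentially the same route as the paper, which also splits $\{M_t\neq 1\}$ into three events: too few stream-1 samples (forced exploration plus a martingale bound), stream~1's statistic being small despite many samples (the $k=0$ term, a union over values of $N_t^{(1)}$, and sub-Gaussian tails), and a null stream's statistic exceeding $\mu_1^2t^{2/3}/8$ (a union bound over windows). The differences are cosmetic: the paper applies plain Azuma--Hoeffding to the stream-1 exploration indicators and gets an $\exp(-9t^{1/3}/128)$ tail, which already suffices, so neither your Freedman/Bernstein refinement nor control of every $N_t^{(m)}$ is needed; and your polynomial prefactors are slightly off (the null-stream term carries a $t(t+1)$ factor rather than $2t$), which is harmless for summability.
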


From the Borel-Cantelli lemma, Proposition \ref{prop:finiteexploitationgaussian} implies that $t_{\mathrm{stable}}$ is almost surely finite, establishing: 
\begin{itemize}
    \item For all $t> t_{\mathrm{stable}}$, during exploitation rounds, only stream 1 is selected, i.e., $G_t=0$ implies $A_t=1$.
    \item For all $t\geq t_{\mathrm{stable}}$, $\hat{\nu}_t=\hat{\nu}_t^{(1)},$
    i.e., the global change-point estimate equals stream 1's local estimate. 
\end{itemize}
\begin{prop} \label{prop:almostsurelychangepointgaussian}
    Assume the conditions in Proposition \ref{prop:finiteexploitationgaussian}. For all $\lambda>0$, Gaussian-Decaying-$\epsilon$-FOCuS satisfies
    \begin{align}
        \mathbb{E}_{M,0}[t_{\mathrm{stable}}]&\leq C_{M,\mu_1}^{(2)}, \nonumber
    \end{align}
    where $C_{M,\mu_1}^{(2)}>0$ is a $\lambda$-independent constant dependent on $M$ and the post-change parameter $\mu_1$.
\end{prop}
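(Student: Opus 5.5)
The plan is to reduce $\mathbb{E}_{M,0}[t_{\mathrm{stable}}]$ to a tail sum and control that tail with the same concentration estimates behind Proposition~\ref{prop:finiteexploitationgaussian}. Since $t_{\mathrm{stable}}$ is a nonnegative integer random variable,
\begin{align}
\mathbb{E}_{M,0}[t_{\mathrm{stable}}]=\sum_{t=0}^{\infty}\mathbb{P}_{M,0}(t_{\mathrm{stable}}>t)\leq 1+\sum_{t=1}^\infty \mathbb{P}_{M,0}\left(\exists i\geq t: M_i\neq 1\right)\leq 1+\sum_{t=1}^\infty\sum_{i\geq t}\mathbb{P}_{M,0}(M_i\neq 1).
\end{align}
Swapping the order of summation gives the bound $1+\sum_{i\geq1} i\,\mathbb{P}_{M,0}(M_i\neq 1)$. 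It therefore suffices to show that $\mathbb{P}_{M,0}(M_i\neq 1)$ decays fast enough, for instance $O(i^{-2-\delta})$ or faster, with constants independent of $\lambda$. Note that $\lambda$ never enters: $M_i$ is defined for all $i$ as if the procedure did not stop, so we may analyze the never-stopping process.

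To obtain this decay, I would reuse the decomposition from the proof of Proposition~\ref{prop:finiteexploitationgaussian}. On the event $\{M_i\neq1\}$, either (a) stream $1$ has been sampled few times, $N_i^{(1)}< c\, i^{2/3}$, or (b) $N_i^{(1)}\geq c\, i^{2/3}$ but some $T_i^{(m)}$, $m\neq 1$, is at least $T_i^{(1)}$.

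\emph{Case (a).} Exploration forces samples of stream 1. Because $\hat\nu_{t-1}\le t-1$, we have $\epsilon_t\geq M t^{-1/3}$, so stream 1 is picked with conditional probability at least $t^{-1/3}$ at every step, independently of the past. A Freedman/Bernstein martingale bound then shows that $\sum_{t\le i}\mathbf 1\{A_t=1\}$ falls below half its conditional mean $\tfrac32 i^{2/3}$ with probability at most $\exp(-c' i^{2/3})$, which is summable against the weight $i$.

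\emph{Case (b).} Here $T_i^{(1)}\geq (\sum_{j\le N} X^{(1)}_j)^2/(2N)$ with $N=N_i^{(1)}$, and sub-Gaussian concentration gives $T_i^{(1)}\geq N\mu_1^2/8$ except with probability $\exp(-c N\mu_1^2)$. For a null stream, a union bound over $k$ and over the at most $i$ possible values of $N_i^{(m)}$ gives $\mathbb{P}(T_i^{(m)}\geq x)\le i^2 e^{-x}$; this uses the sub-Gaussian tail $\mathbb{P}(S_n^2/(2n)\geq x)\le 2e^{-x}$ for each fixed pair. Taking $x=c\mu_1^2 i^{2/3}/8$ bounds case (b) by $M i^2 e^{-c''\mu_1^2 i^{2/3}}+e^{-c\mu_1^2 i^{2/3}}$. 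This decays faster than any polynomial, so $\sum_i i\,\mathbb{P}(M_i\neq1)\le C^{(2)}_{M,\mu_1}$.

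The main obstacle is that $N_i^{(m)}$ and $N_i^{(1)}$ are random and adaptively chosen, so the observation sums are not fixed-length sums. To handle this, I would index observations by per-stream sample count, i.e.\ use the stack-of-rewards construction: each stream's observations form an i.i.d.\ sequence that is independent of when it is read. Then union bounding over all possible counts $n\le i$ and start indices $k\le n$ makes the adaptivity harmless at the cost of only a polynomial factor. If the proof of Proposition~\ref{prop:finiteexploitationgaussian} already yields an explicit stretched-exponential bound on $\mathbb{P}_{M,0}(M_i\neq1)$, I would simply cite it and perform only the tail-sum step above.
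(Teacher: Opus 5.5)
Your proposal is correct and follows the paper's proof. The paper likewise bounds $\mathbb{E}_{M,0}[t_{\mathrm{stable}}]\le 1+\sum_{t\ge1}\sum_{i\ge t}\mathbb{P}_{M,0}(M_i\neq 1)\le 1+(M-1)\sum_{t\ge1}t\,\mathbb{P}_{M,0}(T_t^{(2)}\ge T_t^{(1)})$, and then cites the super-polynomial decay from the proof of Proposition~\ref{prop:finiteexploitationgaussian}, which uses the same three-way split you describe (Lemmas~\ref{lemma:smallN}, \ref{lemma:smallGLRbigNgaussian}, \ref{lemma:larger2GLRgaussian}), so this is exactly your fallback route.

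There is one minor slip: stream~1 is explored with conditional probability at least $\min\{1/M,t^{-1/3}\}$, not $t^{-1/3}$, which only affects finitely many terms. Also, the paper uses Azuma--Hoeffding (rate $e^{-ct^{1/3}}$) where you use Freedman, and either rate suffices.
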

Proposition \ref{prop:changepointboundgaussian} provides an approximation for the worst-case single-stream estimate as $t\to\infty$ in terms of the CUSUM estimate.  As mentioned earlier, this is justified because of the asymptotic equivalence to the estimate generated from the GLR procedure \cite{hinkley1970changepoint}. Following the convention set by Algorithm \ref{alg:decaying_focus}, in the case of ties the smallest index is chosen.
 
\begin{prop}\label{prop:changepointboundgaussian}
    Let $M=1$ and $\nu=0$, such that $\forall i\in\mathbb{N}$, $X_i$ is 1-sub-Gaussian with mean $\mu_1\neq0$. We denote the change-point estimate generated using the CUSUM procedure at time $t$ as
    \begin{align}
        \hat{\nu}_t^{\mathrm{CUSUM}}=\min\left\{\argmax_{0\leq k\leq t}\sum_{i=k+1}^t\log\left(\frac{f_1(X_i)}{f_0(X_i)}\right)\right\}, \nonumber
    \end{align}
    where $f_0$ and $f_1$ denote the densities for the $\mathcal{N}(0,1)$ and $\mathcal{N}(\mu_1,1)$, respectively. Then 
    \begin{align}
        \mathbb{E}_{1,0}\left[\sup_{t\geq 0}\hat{\nu}_t^{\mathrm{CUSUM}}\right]\leq \frac{\exp(\mu_1^2/8)}{(\exp(\mu_1^2/8)-1)^2}. \nonumber
    \end{align}
\end{prop}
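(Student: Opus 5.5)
Write $Y_i=\log(f_1(X_i)/f_0(X_i))=\mu_1X_i-\mu_1^2/2$ and $S_t=\sum_{i=1}^tY_i$, $S_0=0$. Then $\hat{\nu}_t^{\mathrm{CUSUM}}$ is the smallest maximizer of $S_t-S_k$ over $0\le k\le t$, i.e.\ the smallest minimizer of $S_k$ on $\{0,\dots,t\}$. Since $\mathbb{E}[Y_i]=\mu_1^2-\mu_1^2/2=\mu_1^2/2>0$, $S_k$ is a random walk with positive drift. As $t$ grows, the running-minimum location can only move forward, and $\hat{\nu}_t$ is nondecreasing in $t$. Hence
\[
\sup_{t\ge0}\hat{\nu}_t^{\mathrm{CUSUM}}=\kappa:=\min\Bigl\{\argmin_{k\ge0}S_k\Bigr\},
\]
the location of the global minimum of the walk, which exists a.s.\ because $S_k\to\infty$.

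\textbf{Tail bound.} Use $\mathbb{E}[\kappa]=\sum_{n\ge1}\mathbb{P}(\kappa\ge n)$. The event $\{\kappa\ge n\}$ means that some $k\ge n$ has $S_k\le S_j$ for all $j<k$; in particular $S_k\le S_0=0$. So
\[
\mathbb{P}(\kappa\ge n)\le\mathbb{P}(\exists k\ge n: S_k\le 0)\le\sum_{k\ge n}\mathbb{P}(S_k\le0).
\]
For each $k$, apply a Chernoff bound using only 1-sub-Gaussianity. Since $\mu_1Y$-centering gives $Y_i-\mu_1^2/2=\mu_1(X_i-\mu_1)$, which is $|\mu_1|$-sub-Gaussian, we get
\[
\mathbb{P}(S_k\le0)=\mathbb{P}\bigl(\textstyle\sum(Y_i-\mu_1^2/2)\le -k\mu_1^2/2\bigr)\le\exp\bigl(-k\mu_1^2/8\bigr).
\]
The exponent is $(k\mu_1^2/2)^2/(2k\mu_1^2)=k\mu_1^2/8$, which matches the constant in the statement. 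Writing $r=e^{-\mu_1^2/8}<1$, summing over $k\ge n$ gives $\mathbb{P}(\kappa\ge n)\le r^n/(1-r)$. Then
\[
\mathbb{E}[\kappa]\le\sum_{n\ge1}\frac{r^n}{1-r}=\frac{r}{(1-r)^2}=\frac{e^{\mu_1^2/8}}{(e^{\mu_1^2/8}-1)^2},
\]
which is exactly the claimed bound. Equivalently, $\mathbb{E}[\kappa]\le\sum_{k\ge1}k\,\mathbb{P}(S_k\le0)\le\sum_k k r^k$.

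\textbf{Main obstacle.} The probabilistic estimate is routine. The step needing care is the identification of $\sup_t\hat{\nu}_t$ with the global argmin $\kappa$. Two facts must be checked:
\begin{itemize}
\item Monotonicity of $\hat{\nu}_t$ under the smallest-index tie rule: a new running minimum at a later index replaces the old one only with a larger index.
\item The inclusion $\{\kappa\ge n\}\subseteq\bigcup_{k\ge n}\{S_k\le0\}$. It follows because the minimizer satisfies $S_\kappa\le S_0=0$. With ties broken toward the smallest index, $\kappa\ge1$ forces $S_\kappa<0$, which is still contained in $\{S_\kappa\le0\}$.
\end{itemize}
The sub-Gaussian assumption is used only through the Chernoff step, so the $\mathcal{N}$ densities serve only to define $Y_i$.
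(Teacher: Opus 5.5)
Your proof is correct and is essentially the paper's argument. Your global-minimizer location $\kappa$ is exactly the paper's ``last time the estimate changes'' $N$, and both proofs union-bound its tail by events of the form $\{S_k\le 0\}$. Each of these has probability at most $e^{-k\mu_1^2/8}$ by sub-Gaussian concentration, giving $\mathbb{E}[\kappa]\le\sum_{k\ge1}k\,e^{-k\mu_1^2/8}$. Your exact identification $\sup_t\hat{\nu}_t^{\mathrm{CUSUM}}=\kappa$ is slightly tidier than the paper's inequality. Like the paper, you implicitly use independence of the $X_i$ to get sub-Gaussianity of the partial sums.
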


If the observations are Gaussian, explicit ARL and EDD bounds can be provided in terms of the threshold $\lambda$. From \cite[Theorem 1]{siegmund}, if $f_0$ corresponds to $\mathcal{N}(0,1)$, the ARL is bounded via
\begin{align}
    \mathbb{E}_{1,\infty}[\tau]\geq \frac{e^\lambda\sqrt{\pi}}{\sqrt{\lambda}\int_0^{\infty}x g(x)^2dx} \nonumber
\end{align}
as $\lambda\to\infty$, where $g(x)$ is defined as
\begin{align}
    g(x)=2x^{-2}\exp\left[-2\sum_{n=1}^{\infty}n^{-1}\Phi\left(-xn^{1/2}/2\right)\right],\; x>0. \nonumber
\end{align}
If $f_0$ and $f_1$ correspond to $\mathcal{N}(0,1)$ and $\mathcal{N}(\mu_1,1)$, respectively, a constant dependent on $\mu_1$, $C_{\mu_1}>0$ (given in \cite[Equation 3.1]{siegmund}), exists such that the EDD can be bounded via
\begin{align}
    \mathbb{E}_{1,0}[\tau]\leq\frac{2\lambda}{\mu_1^2}+C_{\mu_1} \nonumber
\end{align} 
as $\lambda\to\infty$. These bounds can then be substituted into Theorems \ref{thm:arl} and \ref{thm:edd} to derive the $M$-stream bounds explicitly in terms of $\lambda$.

\subsection{Bernoulli Implementation Results} \label{sec:bernoullisection}
Equivalent properties for Bernoulli-Decaying-$\epsilon$-FOCuS are satisfied for Bernoulli observations. For detecting a mean change in non-parametric $[0,1]$-support observations (common in bandit settings), we can apply a trick from \cite{agrawal}. Given $X_t$, a $\mathrm{Bernoulli}(X_t)$-distributed random variable is sampled, converting it into a Bernoulli sequence with the same mean, allowing detection with the same guarantees.  The following results are analogous to Propositions \ref{prop:finiteexploitationgaussian}-\ref{prop:changepointboundgaussian}, respectively. In a nearly identical manner, we show that
\begin{align}
    \mathbb{P}_{M,0}\left(M_t\neq 1\right)
\end{align}
decays quickly enough to be summable.
\begin{prop} \label{prop:finiteexploitationbernoulli}
    Consider $M$ Bernoulli streams. At $\nu=0$ stream 1's mean shifts to $\mu_1\in(0,1)$, where $\mu_1$ is unknown, with streams $m\neq 1$ having the known mean $\mu_0\in (0,1)$, such that $\mu_0\neq\mu_1$. For all $\lambda>0$, Bernoulli-Decaying-$\epsilon$-FOCuS satisfies
    \begin{align}
        \sum_{t=1}^\infty\mathbb{P}_{M,0}\left(M_t\neq 1\right)\leq C_{M,\mu_1,\mu_0}^{(1)}, \nonumber
    \end{align}
    where $C_{M,\mu_1,\mu_0}^{(1)}>0$ is a threshold-independent constant dependent on $M$, $\mu_0$, and $\mu_1$.
\end{prop}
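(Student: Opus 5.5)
Our plan is to mirror the argument for Proposition~\ref{prop:finiteexploitationgaussian}, replacing sub-Gaussian tail bounds with Chernoff/KL concentration for Bernoulli sums. We first reduce $\{M_t\neq 1\}$ to a union over competitors:
\[
\mathbb{P}_{M,0}(M_t\neq1)\le \sum_{m=2}^M \mathbb{P}_{M,0}\bigl(T_t^{(m)}\ge T_t^{(1)}\bigr).
\]
We then split each event according to the sample counts and a threshold $\theta=\tfrac12 D(\mu_1\|\mu_0)>0$ on the normalized statistic. Specifically, we use $\{T_t^{(m)}\ge T_t^{(1)}\}\subseteq\{T_t^{(1)}<\theta N_t^{(1)}\}\cup\{T_t^{(m)}\ge \theta N_t^{(1)}\}$.

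The first ingredient is a lower bound on sample counts coming from forced exploration. Since $\hat\nu_{t-1}\ge 0$, we have $t-\hat\nu_{t-1}\le t$, so $\epsilon_t\ge \min\{1,M t^{-1/3}\}$. Each stream is therefore sampled at time $t$ with conditional probability at least $t^{-1/3}$. A Freedman/Azuma bound for the martingale $N_t^{(m)}-\sum_{s\le t}\mathbb{P}(A_s=m\mid\mathcal F_{s-1})$ gives
\[
\mathbb{P}\bigl(N_t^{(m)}<\tfrac14 t^{2/3}\bigr)\le \exp(-c\,t^{1/3})
\]
for every $m$, and this is summable in $t$. The bound holds irrespective of the GLR statistic and of $\lambda$, since $\lambda$ only affects stopping and we may consider the process run without stopping.

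The second ingredient controls stream~1's statistic. Taking $k=0$ gives $T_t^{(1)}\ge N_t^{(1)}D(\hat\mu^{(1)}_{1:N_t^{(1)}}\|\mu_0)$. By continuity of $D(\cdot\|\mu_0)$, there is $\delta>0$ such that $|\hat\mu-\mu_1|<\delta$ implies $D(\hat\mu\|\mu_0)\ge\theta$. Hoeffding applied to the first $n$ observations of stream~1 (these are i.i.d.\ Bernoulli$(\mu_1)$ regardless of the adaptive sampling, via the stack-of-rewards construction) gives failure probability $2e^{-2\delta^2 n}$. We then union over $n\ge \tfrac14 t^{2/3}$, which yields a summable bound of order $\exp(-c' t^{2/3})$.

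The third ingredient, and the main obstacle, is the null streams. For $m\ge2$, $T_t^{(m)}$ is a maximum over all $k$ of $(n-k)D(\hat\mu_{k+1:n}\|\mu_0)$, a scan statistic over $O(n^2)$ windows of an i.i.d.\ Bernoulli$(\mu_0)$ sequence. Chernoff's bound in KL form gives $\mathbb{P}\bigl((n-k)D(\hat\mu_{k+1:n}\|\mu_0)\ge x\bigr)\le 2e^{-x}$ for each fixed window. We union over $n\le t$ and all $k<n$, with $x=\theta N_t^{(1)}\ge \tfrac{\theta}{4}t^{2/3}$ on the good event. This gives at most $2t^2 e^{-\theta t^{2/3}/4}$, which is still summable. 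The delicate point is that $N_t^{(1)}$ and $N_t^{(m)}$ are random and dependent through the sampling rule. We handle this exactly as in the first ingredient, by indexing each stream's observations by their own count (the stack-of-rewards construction), so that every bound is a deterministic-$n$ bound followed by a union over $n\le t$.

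Summing the three contributions over $t$ and multiplying by $M-1$ gives a finite constant $C^{(1)}_{M,\mu_1,\mu_0}$. It depends on $M$ through the union bound and the exploration rate, and on $\mu_0,\mu_1$ through $\theta$ and $\delta$. It does not depend on $\lambda$.
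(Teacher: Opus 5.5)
Your proposal is correct and follows the paper's proof. It uses the same union bound over competitors and the same three-way split:
\begin{itemize}
\item too few stream-1 samples, handled by Azuma on the forced exploration as in Lemma~\ref{lemma:smallN};
\item too small a stream-1 statistic, handled by Hoeffding plus a union over $n$ as in Lemma~\ref{lemma:smallGLRbigNBernoulli};
\item too large a null-stream scan statistic, handled by a union over $O(t^2)$ windows as in Lemma~\ref{lemma:larger2GLRBernoulli}.
\end{itemize}
One small slip: the per-step exploration probability for a given stream is $\min\{1/M,t^{-1/3}\}$, which equals $t^{-1/3}$ only for $t\ge M^3$, but this affects only finitely many terms. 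The remaining differences are in the tools. You threshold at $\theta N_t^{(1)}$ rather than $(\mu_1-\mu_0)^2t^{2/3}/2$, and you use continuity of $D(\cdot\|\mu_0)$ instead of Pinsker for stream~1. For the null streams you use the KL-form Chernoff bound $2e^{-x}$ instead of reverse Pinsker plus Hoeffding. This gives exponent $D(\mu_1\|\mu_0)t^{2/3}/8$ without the $\alpha=\min\{\mu_0,1-\mu_0\}$ factor, at the cost of a non-explicit $\delta$ in the stream-1 term.
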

We now bound each of the components of the worst-case change-point estimate bound from Proposition \ref{prop:changepointmax} for Bernoulli-Decaying-$\epsilon$-FOCuS. Similar to Section \ref{sec:gaussiansection}, Proposition \ref{prop:finiteexploitationbernoulli} guarantees the existence of $t_{\mathrm{stable}}$ through the Borel-Cantelli lemma. We guarantee its integrability in Proposition \ref{prop:almostsurelychangepointbernoulli}.
\begin{prop} \label{prop:almostsurelychangepointbernoulli}
    Assume the conditions in Proposition \ref{prop:finiteexploitationbernoulli}. For all $\lambda>0$, Bernoulli-Decaying-$\epsilon$-FOCuS satisfies
    \begin{align}
        \mathbb{E}_{M,0}[t_{\mathrm{stable}}]&\leq C_{M,\mu_1,\mu_0}^{(2)}, \nonumber
    \end{align}
    where $C_{M,\mu_1,\mu_0}^{(2)}>0$ is a $\lambda$-independent constant dependent on $M$, $\mu_0$, and $\mu_1$.
\end{prop}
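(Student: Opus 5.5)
Since $t_{\mathrm{stable}}$ is nonnegative and integer-valued, the plan is to use the tail-sum formula
\[
\mathbb{E}_{M,0}[t_{\mathrm{stable}}]\le 1+\sum_{n=1}^\infty \mathbb{P}_{M,0}(t_{\mathrm{stable}}>n),
\]
together with the inclusion
\[
\{t_{\mathrm{stable}}>n\}\subseteq \bigcup_{i\ge n}\{M_i\neq 1\}.
\]
A union bound then gives
\[
\mathbb{E}_{M,0}[t_{\mathrm{stable}}]\le 1+\sum_{n=1}^\infty\sum_{i\ge n}\mathbb{P}_{M,0}(M_i\neq1)=1+\sum_{i=1}^\infty i\,\mathbb{P}_{M,0}(M_i\neq 1).
\]
So Proposition~\ref{prop:finiteexploitationbernoulli}, which only gives summability of $\mathbb{P}_{M,0}(M_i\ne1)$, is not quite enough. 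I will instead reuse the tail estimate derived in its proof, which mirrors the Gaussian argument, to show that $\mathbb{P}_{M,0}(M_t\neq1)$ decays fast enough that $\sum_t t\,\mathbb{P}_{M,0}(M_t\neq1)<\infty$. The resulting bound will not depend on $\lambda$; this is also how I would argue Proposition~\ref{prop:almostsurelychangepointgaussian}.

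\textbf{Decomposition.} For fixed $t$, split $\{M_t\neq 1\}$ according to the sample counts, using the events
\[
\{N_t^{(1)}<c\,t^{2/3}\},\qquad \{N_t^{(m)}<c\,t^{2/3}\}\ (m\ne1),\qquad \{T_t^{(m)}\ge T_t^{(1)},\ \text{all counts}\ge c\,t^{2/3}\}.
\]

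\textbf{Low counts.} The exploration probability satisfies $\epsilon_s\ge M s^{-1/3}$, because $\hat\nu_{s-1}\ge0$. Each stream is therefore chosen at step $s$ with conditional probability at least $s^{-1/3}$, regardless of the past. The indicators $\mathbf 1\{G_s=1,A_s=m\}$ thus dominate independent Bernoulli$(s^{-1/3})$ variables, whose sum has mean of order $\tfrac32 t^{2/3}$. A Chernoff/Freedman bound then makes each low-count event have probability at most $\exp(-c' t^{2/3})$.

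\textbf{Enough samples but wrong stream.} On the remaining event, I bound $T_t^{(1)}$ from below by the $k=0$ term, $N^{(1)}_t D(\hat\mu^{(1)}\|\mu_0)$. Since $\mu_1\neq\mu_0$, this is at least $N_t^{(1)}D(\mu_1\|\mu_0)/2$ except on an event of probability $\exp(-c''N^{(1)}_t)$ by Hoeffding. For a null stream $m\ne1$, $T^{(m)}_t$ is a maximum over at most $t$ segments. For each segment of length $\ell$, Chernoff for Bernoulli KL gives
\[
\mathbb{P}\bigl(\ell D(\hat\mu\|\mu_0)\ge x\bigr)\le 2e^{-x}.
\]
A union bound over the segments and over $m$ yields
\[
\mathbb{P}\bigl(T_t^{(m)}\ge x\bigr)\le 2Mt\,e^{-x},
\]
and I take $x=c\,t^{2/3}D(\mu_1\|\mu_0)/2$.

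\textbf{Main obstacle.} The sample counts $N^{(m)}_t$ are random and adaptively chosen, so the concentration bounds must hold uniformly over the possible counts. I would handle this by the standard device of indexing by per-stream sample number, using the i.i.d. per-stream reward tables, and taking a union bound over count values up to $t$. This costs only a polynomial factor, so every piece is $O(\mathrm{poly}(t)\,e^{-c t^{2/3}})$ and $\sum_t t\,\mathbb{P}_{M,0}(M_t\neq1)$ converges. The constant depends only on $M,\mu_0,\mu_1$; a finite prefix of terms can be absorbed into it.
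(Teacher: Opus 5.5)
Your proposal is correct and follows the paper's route. The tail-sum formula plus a union bound reduces everything to $\sum_t t\,\mathbb{P}_{M,0}(M_t\neq 1)$, which is then controlled by the same three-way split used in the proof of Proposition~\ref{prop:finiteexploitationbernoulli}: too few stream-1 samples, stream-1 GLR too small despite enough samples, or a null-stream GLR exceeding a level of order $t^{2/3}$ (handled by a union bound over segments). Your Chernoff--KL and stochastic-domination bounds are sharper stand-ins for the paper's reverse-Pinsker and Azuma--Hoeffding steps.

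One small correction: $\epsilon_s\ge\min\{1,Ms^{-1/3}\}$ rather than $Ms^{-1/3}$. So each stream is explored with conditional probability at least $\min\{1/M,s^{-1/3}\}$, which does not affect the $t^{2/3}$-order count bound.
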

Proposition \ref{prop:changepointboundbernoulli} bounds the other component of the worst-case change-point estimate bound, namely the worst-case single-stream estimate as $t\to\infty$. To simplify our analysis, as is done in Section \ref{sec:gaussiansection}, Proposition \ref{prop:changepointboundbernoulli} bounds the worst-case single-stream estimate using the asymptotically equivalent estimate generated by the CUSUM procedure.
\begin{prop}\label{prop:changepointboundbernoulli}
    Let $M=1$ and $\nu=0$, such that $X_i\sim\mathrm{Bernoulli}(\mu_1)$ for all $i\in\mathbb{N}$. Let $\hat{\nu}_t^{\mathrm{CUSUM}}$ be defined as in Proposition \ref{prop:changepointboundgaussian}, with $f_0$ and $f_1$ corresponding to the $\mathrm{Bernoulli}(\mu_0)$ and $\mathrm{Bernoulli}(\mu_1)$ distributions, respectively. Assume $\mu_0\in(0,1)$, $\mu_1\in(0,1)$, and $\mu_0\neq\mu_1$. Then 
    \begin{align}
        \mathbb{E}_{1,0}\left[\sup_{t\geq 0}\hat{\nu}_t^{\mathrm{CUSUM}}\right]\leq \frac{\exp(2\delta^2)}{(\exp(2\delta^2)-1)^2}, \nonumber
    \end{align}
    where $\delta:=\left(\frac{\log\left(\frac{1-\mu_1}{1-\mu_0}\right)}{\log\left(\frac{\mu_0(1-\mu_1)}{\mu_1(1-\mu_0)}\right)}-\mu_1\right).$
\end{prop}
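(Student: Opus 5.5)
We would mirror the argument behind Proposition~\ref{prop:changepointboundgaussian}. The idea is to write the CUSUM change-point estimate as the location of the minimum of a random walk with positive drift. We then bound the tail of its supremum by a sum of Hoeffding probabilities, and sum the tails to get the expectation.

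\textbf{Step 1: reduction to a random walk.} Let $Z_i=\log\frac{f_1(X_i)}{f_0(X_i)}=aX_i+b$, where $a=\log\frac{\mu_1(1-\mu_0)}{\mu_0(1-\mu_1)}$ and $b=\log\frac{1-\mu_1}{1-\mu_0}$. Put $S_k=\sum_{i=1}^k Z_i$ with $S_0=0$. Then $\sum_{i=k+1}^t Z_i=S_t-S_k$. Hence $\hat\nu_t^{\mathrm{CUSUM}}$ is the smallest minimizer of $S_k$ over $0\le k\le t$. If $\sup_{t}\hat\nu_t^{\mathrm{CUSUM}}\ge n$ for some $n\ge 1$, then some $k\ge n$ attains $\min_{j\le t}S_j$ for some $t$, and in particular $S_k\le S_0=0$. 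Therefore
\[
\mathbb{P}_{1,0}\Bigl(\sup_{t\ge0}\hat\nu_t^{\mathrm{CUSUM}}\ge n\Bigr)\le\sum_{k\ge n}\mathbb{P}_{1,0}(S_k\le 0).
\]

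\textbf{Step 2: Hoeffding bound.} Since $\mu_0\neq\mu_1$, we have $a\neq0$. We treat the case $a>0$, i.e.\ $\mu_1>\mu_0$; the case $a<0$ is symmetric with the inequality reversed. Writing $\bar X_k$ for the sample mean,
\[
S_k\le 0 \iff \bar X_k\le -b/a=\mu_1+\delta .
\]
We must check that $\delta\neq0$ with the correct sign. The drift is $\mathbb{E}[Z_i]=a\mu_1+b=D(\mu_1\|\mu_0)>0$, so $-b/a<\mu_1$, i.e.\ $\delta<0$. (In the case $a<0$ the same identity gives $\delta>0$.) Hoeffding's inequality for $[0,1]$-valued variables then gives
\[
\mathbb{P}_{1,0}(S_k\le0)=\mathbb{P}_{1,0}(\bar X_k-\mu_1\le\delta)\le e^{-2k\delta^2}=:q^k,
\]
where $q=e^{-2\delta^2}\in(0,1)$.

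\textbf{Step 3: summation.} Using $\mathbb{E}[Y]=\sum_{n\ge1}\mathbb{P}(Y\ge n)$ for the nonnegative integer variable $Y=\sup_t\hat\nu_t^{\mathrm{CUSUM}}$, and exchanging the order of summation,
\[
\mathbb{E}_{1,0}\Bigl[\sup_{t\ge0}\hat\nu_t^{\mathrm{CUSUM}}\Bigr]\le\sum_{n\ge1}\sum_{k\ge n}q^k=\sum_{k\ge1}kq^k=\frac{q}{(1-q)^2}.
\]
Multiplying numerator and denominator by $e^{4\delta^2}$ gives $\frac{q}{(1-q)^2}=\frac{\exp(2\delta^2)}{(\exp(2\delta^2)-1)^2}$, which is the claimed bound.

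\textbf{Expected difficulty.} The only delicate point is Step 1. We need to justify that the supremum over $t$ of the smallest argmin is controlled by the event $\{\exists k\ge n: S_k\le 0\}$, which uses the tie-breaking convention and $S_0=0$. We also need to verify the sign of $\delta$ via positivity of the KL divergence, so that Hoeffding applies in the correct direction. Both follow the Gaussian case, with Hoeffding's $2\delta^2$ exponent in place of the sub-Gaussian exponent $\mu_1^2/8$.
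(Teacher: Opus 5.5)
Your proposal is correct and is essentially the paper's proof. The paper bounds the probability that the CUSUM estimate changes at time $n$ by $\mathbb{P}_{1,0}\bigl(\sum_{i=1}^n \log(f_0(X_i)/f_1(X_i)) \geq 0\bigr) \leq \exp(-2n\delta^2)$ via Hoeffding, with the same sign analysis of $\delta$ in the cases $\mu_0<\mu_1$ and $\mu_0>\mu_1$, and then sums the tails of the last-change time $N \geq \sup_t \hat{\nu}_t^{\mathrm{CUSUM}}$ to get $\sum_{n\geq 1} n e^{-2n\delta^2}$; your direct inclusion $\{\sup_t \hat{\nu}_t^{\mathrm{CUSUM}} \geq n\} \subseteq \bigcup_{k\geq n}\{S_k \leq 0\}$ simply skips the auxiliary variable $N$ and produces the identical double sum.
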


\subsection{First-Order Optimality Properties of Decaying-$\epsilon$-FOCuS} \label{sec:optimality}
The Gaussian and Bernoulli GLR statistics are asymptotically optimal for Gaussian and Bernoulli observations, respectively, since they are univariate exponential families \cite{lorden1971procedures}. From \eqref{eq:eddboundoptimal}, a threshold $\lambda$ exists satisfying both
$$\mathbb{E}_{1,\infty}[\tau_{\lambda}]\geq M\gamma$$
and
\begin{align}
    \mathbb{E}_{1,0}[\tau_{\lambda}]\leq \sup_{\nu\geq0}\mathbb{E}_{\nu}[\tau_{\lambda}-\nu \mid \tau_{\lambda}>\nu]\leq(1+o(1))\frac{\log(M\gamma)}{D(f_1||f_0)}, \nonumber 
\end{align}
as $\gamma\to\infty$. From Theorem \ref{thm:arl}, that same threshold satisfies
$$\mathbb{E}_{M,\infty}[\tau_{\lambda}]\geq \gamma.$$
From Theorem \ref{thm:edd} and Propositions \ref{prop:changepointmax}-\ref{prop:changepointboundbernoulli}, since the excess delay relative to $M=1$ is approximately $\mathcal{O}(\mathbb{E}_{M,0}[\tau_{\lambda}]^{2/3})=o(1)\mathbb{E}_{M,0}[\tau_{\lambda}]$ as $\gamma\to\infty$, the EDD satisfies
\begin{align}
    \mathbb{E}_{M,0}[\tau_{\lambda}]&\lesssim(1+o(1))\frac{\log(\gamma)}{D(f_1||f_0)}. \nonumber
\end{align}
Therefore, first-order asymptotic optimality is approximately satisfied for the $\nu=0$ surrogate as $\gamma\to\infty$:
\begin{align}
    \inf_{\lambda:\mathbb{E}_{M,\infty}[\tau_{\lambda}]\geq\gamma}\mathbb{E}_{M,0}[\tau_{\lambda}]\lesssim(1+o(1))\frac{\log(\gamma)}{D(f_1||f_0)}. \nonumber
\end{align}

\section{NUMERICAL EXPERIMENTS} \label{sec:numresults}

Experiments were performed using a cluster from the Partnership for an Advanced Computing Environment (PACE) at the Georgia Institute of Technology.  Our results confirm the EDD and ARL are, respectively, linearly and exponentially increasing in the size of $\lambda$. The EDD also scales inversely with the size of $D(f_1||f_0)$. In practice, $\lambda$ can be chosen to control the ARL by taking the logarithm of the ARL, so an ARL of roughly $3000$ requires $\lambda=\log(3000)$.
\subsection{Gaussian-Decaying-$\epsilon$-FOCuS}
In Table \ref{tab:eddgaussian}, we show the ratio between the EDD and the lower bound from \cite[Theorem 1]{lai1998}, calculated as $\frac{\lambda}{D(f_1||f_0)}=\frac{2\lambda}{\mu_1^2}$, for detecting a mean shift from $\mathcal{N}(0,1)$ to $\mathcal{N}(\mu_1,1)$ in stream 1 across a set of change-point locations ($\nu$) and thresholds ($\lambda$), averaged over $500$ simulations. Here $M=10$ and $\mu_1=1$. Fig. \ref{fig:3d} demonstrates how the EDD scales with $M$ and $\mu_1$. To enable comparison between the EDD and ARL, $\lambda$ was set to $\log(3e3)$ as it roughly equals an ARL of 3000. The parameters are $\mu_1=\pm0.1,0.25,1.5,2.0$, $M=10,25,50,100,200,500,1,000$, and $\nu=500$. Our results are averaged over $2,000$ trials. The EDD scales by a factor of $1/\mu_1^2$ since it scales inversely with the KL-divergence, and linearly with $M$ since the amount of exploration per arm is designed to be consistent. We compare the ARL across different values of $\lambda$ and $M$ in Table \ref{tab:arlgaussian}, with all data being generated according to $\mathcal{N}(0,1)$. As the ARL is exponentially increasing in $\lambda$, we use a logarithmic scale for $\lambda$. There does not seem to be much effect from the size of $M$, suggesting the ARL bound in Theorem \ref{thm:arl} is overly conservative.


\begin{table}[H]
\centering
\caption{EDD Ratio for Gaussian-Decaying-$\epsilon$-FOCuS ($\mu_0=0$, $\mu_1=1$, $M=10$)}
\label{tab:eddgaussian}
\begin{tabular}{|c|c|c|c|c|}
\hline
\textbf{$\lambda$} & \textbf{$\nu=0$} & \textbf{$\nu=\mathrm{1e3}$} & \textbf{$\nu=\mathrm{1e4}$} & \textbf{$\nu=\mathrm{1e5}$} \\
\hline
$1e3$ & 3.013 & 2.991 & 3.003 & 3.005 \\
\hline
$2e3$ & 2.423 & 2.422 & 2.421 & 2.418 \\
\hline
$3e3$ & 2.172 & 2.169 & 2.168 & 2.169 \\
\hline
$4e3$ & 2.027 & 2.024 & 2.023 & 2.023 \\
\hline
$5e3$ & 1.927 & 1.924 & 1.921 & 1.924 \\
\hline
$6e3$ & 1.852 & 1.851 & 1.850 & 1.849 \\
\hline
$7e3$ & 1.797 & 1.794 & 1.794 & 1.793 \\
\hline
$8e3$ & 1.751 & 1.751 & 1.749 & 1.750 \\
\hline
$9e3$ & 1.712 & 1.711 & 1.713 & 1.712 \\
\hline
$1e4$ & 1.680 & 1.681 & 1.680 & 1.680 \\
\hline
\end{tabular}
\end{table}

\begin{figure}[H]
    \centering
    \setlength{\figH}{6cm}
    \setlength{\figW}{6cm}
    \begin{tikzpicture}
  \begin{axis}[
    axis background/.style={fill=gray!15},
    axis line style={white},
    x grid style={white}, y grid style={white}, z grid style={white},
    grid=both,
    height=\figH,
    width=\figW,
    view={60}{30},
    colormap/viridis,
    z buffer=sort,
    colorbar,
    xlabel={$\,\mu_1$}, ylabel={$\,M$}, zlabel={EDD},
    ytick={10,200,500,1000},
    scaled z ticks=false
  ]

  \addplot3[
    surf,
    patch type=rectangle,
    mesh/rows=7, mesh/cols=12,
    mesh/ordering=x varies,   
    shader=interp,
    draw=none,                
    faceted color=none,
    opacity=1                 
  ] table[x=mu1,y=M,z=EDD,row sep=\\]{%
    mu1   M     EDD \\
    -2.00  10    49.4\\ -1.50  10    78.0\\ -1.00  10   159.6\\ -0.50  10   537.2\\
    -0.25  10  1465.5\\ -0.10  10  4170.7\\  0.10  10  3983.1\\  0.25  10  1474.7\\
     0.50  10   540.7\\  1.00  10   157.8\\  1.50  10    77.6\\  2.00  10    47.3\\
    -2.00  25   117.5\\ -1.50  25   192.4\\ -1.00  25   377.9\\ -0.50  25  1282.3\\
    -0.25  25  3436.4\\ -0.10  25  5987.7\\  0.10  25  5876.1\\  0.25  25  3402.9\\
     0.50  25  1276.6\\  1.00  25   386.2\\  1.50  25   196.3\\  2.00  25   116.2\\
    -2.00  50   239.0\\ -1.50  50   381.2\\ -1.00  50   758.0\\ -0.50  50  2287.7\\
    -0.25  50  4839.8\\ -0.10  50  6205.6\\  0.10  50  6417.2\\  0.25  50  4855.4\\
     0.50  50  2345.1\\  1.00  50   761.7\\  1.50  50   380.7\\  2.00  50   237.1\\
    -2.00 100   474.1\\ -1.50 100   738.5\\ -1.00 100  1469.2\\ -0.50 100  3768.1\\
    -0.25 100  6351.8\\ -0.10 100  6834.1\\  0.10 100  7108.1\\  0.25 100  6125.5\\
     0.50 100  3820.3\\  1.00 100  1453.5\\  1.50 100   766.0\\  2.00 100   473.7\\
    -2.00 200   908.4\\ -1.50 200  1475.1\\ -1.00 200  2689.4\\ -0.50 200  5552.2\\
    -0.25 200  7115.1\\ -0.10 200  7447.1\\  0.10 200  7265.9\\  0.25 200  7158.1\\
     0.50 200  5779.4\\  1.00 200  2696.0\\  1.50 200  1457.2\\  2.00 200   914.4\\
    -2.00 500  2215.9\\ -1.50 500  3275.8\\ -1.00 500  5295.6\\ -0.50 500  8170.9\\
    -0.25 500  8577.7\\ -0.10 500  8608.6\\  0.10 500  8833.1\\  0.25 500  8335.8\\
     0.50 500  8540.6\\  1.00 500  5466.7\\  1.50 500  3235.0\\  2.00 500  2209.6\\
    -2.00 1000  4085.2\\ -1.50 1000  5656.7\\ -1.00 1000  8235.9\\ -0.50 1000 10162.1\\
    -0.25 1000 10086.4\\ -0.10 1000 10134.4\\  0.10 1000 10242.0\\  0.25 1000 10148.2\\
     0.50 1000 10088.1\\  1.00 1000  8188.5\\  1.50 1000  5650.3\\  2.00 1000  3941.7\\
  };

  \addplot3[mesh, mesh/rows=7, mesh/cols=12, mesh/ordering=x varies,
            draw=black, very thin, opacity=.2, line join=round]
    table[x=mu1,y=M,z=EDD,row sep=\\]{%
    mu1   M     EDD \\
    -2.00  10    49.4\\ -1.50  10    78.0\\ -1.00  10   159.6\\ -0.50  10   537.2\\
    -0.25  10  1465.5\\ -0.10  10  4170.7\\  0.10  10  3983.1\\  0.25  10  1474.7\\
     0.50  10   540.7\\  1.00  10   157.8\\  1.50  10    77.6\\  2.00  10    47.3\\
    -2.00  25   117.5\\ -1.50  25   192.4\\ -1.00  25   377.9\\ -0.50  25  1282.3\\
    -0.25  25  3436.4\\ -0.10  25  5987.7\\  0.10  25  5876.1\\  0.25  25  3402.9\\
     0.50  25  1276.6\\  1.00  25   386.2\\  1.50  25   196.3\\  2.00  25   116.2\\
    -2.00  50   239.0\\ -1.50  50   381.2\\ -1.00  50   758.0\\ -0.50  50  2287.7\\
    -0.25  50  4839.8\\ -0.10  50  6205.6\\  0.10  50  6417.2\\  0.25  50  4855.4\\
     0.50  50  2345.1\\  1.00  50   761.7\\  1.50  50   380.7\\  2.00  50   237.1\\
    -2.00 100   474.1\\ -1.50 100   738.5\\ -1.00 100  1469.2\\ -0.50 100  3768.1\\
    -0.25 100  6351.8\\ -0.10 100  6834.1\\  0.10 100  7108.1\\  0.25 100  6125.5\\
     0.50 100  3820.3\\  1.00 100  1453.5\\  1.50 100   766.0\\  2.00 100   473.7\\
    -2.00 200   908.4\\ -1.50 200  1475.1\\ -1.00 200  2689.4\\ -0.50 200  5552.2\\
    -0.25 200  7115.1\\ -0.10 200  7447.1\\  0.10 200  7265.9\\  0.25 200  7158.1\\
     0.50 200  5779.4\\  1.00 200  2696.0\\  1.50 200  1457.2\\  2.00 200   914.4\\
    -2.00 500  2215.9\\ -1.50 500  3275.8\\ -1.00 500  5295.6\\ -0.50 500  8170.9\\
    -0.25 500  8577.7\\ -0.10 500  8608.6\\  0.10 500  8833.1\\  0.25 500  8335.8\\
     0.50 500  8540.6\\  1.00 500  5466.7\\  1.50 500  3235.0\\  2.00 500  2209.6\\
    -2.00 1000  4085.2\\ -1.50 1000  5656.7\\ -1.00 1000  8235.9\\ -0.50 1000 10162.1\\
    -0.25 1000 10086.4\\ -0.10 1000 10134.4\\  0.10 1000 10242.0\\  0.25 1000 10148.2\\
     0.50 1000 10088.1\\  1.00 1000  8188.5\\  1.50 1000  5650.3\\  2.00 1000  3941.7\\
  };
  \end{axis}
\end{tikzpicture}
    \caption{EDDs for $\mu_1\in[-2,2]$ and $M\in[10,1000].$}
    \label{fig:3d}
\end{figure}

\begin{table}[H]
\centering
\caption{ARL for Gaussian-Decaying-$\epsilon$-FOCuS}
\label{tab:arlgaussian}
\begin{tabular}{|c|c|c|c|c|}
\hline
\textbf{$\lambda$} & \textbf{$M=1$} & \textbf{$M=3$} & \textbf{$M=5$} & \textbf{$M=10$} \\
\hline
$\log(1e3)$ & 1026.98 &  1056.40  & 1128.88 & 1107.77 \\
\hline 
$\log(2e3)$ & 1840.99 &1905.41 & 1825.98 & 1915.30 \\
\hline 
$\log(3e3)$ & 2678.53 & 2781.36 &  2803.96 & 2812.98 \\
\hline 
$\log(4e3)$ & 3446.49  & 3542.42 & 3980.10 & 3930.32\\
\hline 
$\log(5e3)$ & 3941.03 & 4675.64 & 4216.60 & 4532.21 \\
\hline 
\end{tabular}
\end{table}

\subsection{Bernoulli-Decaying-$\epsilon$-FOCuS}
We now provide similar results for the Bernoulli implementation with parameters $\mu_0=0.4$, $\mu_1=0.6$, and $M=10$. Our results are averaged over 500 simulations. We observe the same convergence properties in Table \ref{tab:eddbernoulli}, namely that the ratio between the detection delay and theoretical lower bound $\frac{\lambda}{D(f_1||f_0)}$ approaches 1 as $\lambda$ increases. The ARL values in Table \ref{tab:arlbernoulli} are consistent across different values of $M$, with the ARL increasing exponentially fast with the size of $\lambda$.

\begin{table}[H]
\centering
\caption{EDD Ratio for Bernoulli-Decaying-$\epsilon$-FOCuS ($\mu_0=0.4$, $\mu_1=0.6$, $M=10$)}
\label{tab:eddbernoulli}
\begin{tabular}{|c|c|c|c|c|}
\hline
$\lambda$ & $\nu = 0$ & $\nu = 1e3$ & $\nu = 1e4$ & $\nu = 1e5$ \\
\hline
1e3 & 1.845 & 1.843 & 1.837 & 1.840 \\\hline
2e3 & 1.626 & 1.625 & 1.621 & 1.623 \\\hline
3e3 & 1.525 & 1.523 & 1.527 & 1.524 \\\hline
4e3 & 1.470 & 1.464 & 1.465 & 1.465 \\\hline
5e3 & 1.428 & 1.424 & 1.425 & 1.426 \\\hline
6e3 & 1.397 & 1.396 & 1.395 & 1.395 \\\hline
7e3 & 1.373 & 1.372 & 1.372 & 1.375 \\\hline
8e3 & 1.355 & 1.353 & 1.353 & 1.353 \\\hline
9e3 & 1.337 & 1.337 & 1.336 & 1.336 \\\hline
1e4 & 1.324 & 1.322 & 1.324 & 1.323 \\
\hline
\end{tabular}
\end{table}

\begin{table}[H]
\centering
\caption{ARL for Bernoulli-Decaying-$\epsilon$-FOCuS}
\label{tab:arlbernoulli}
\begin{tabular}{|c|c|c|c|c|}
\hline
$\lambda$ & $M = 1$ & $M = 3$ & $M = 5$ & $M = 10$ \\
\hline
$\log(1e3)$ & 1024.23 & 1089.13 & 1095.54 & 1186.58 \\\hline
$\log(2e3)$ & 1945.45 & 2061.76 & 2103.58 & 2250.28 \\\hline
$\log(3e3)$ & 2738.64 & 2817.35 & 2833.35 & 3111.48 \\\hline
$\log(4e3)$ & 4375.36 & 4667.78 & 4326.18 & 4555.80 \\\hline
$\log(5e3)$ & 4943.46 & 4806.34 & 5151.62 & 5007.66 \\
\hline
\end{tabular}
\end{table}

\section{CONCLUSION}
We introduced \emph{Decaying-$\epsilon$-FOCuS}, a bandit-style multi-stream quickest change detection procedure that combines an efficient implementation of the GLR statistic with a decaying exploration schedule. Relative to a commonly used surrogate, we prove our algorithm is approximately first-order asymptotically optimal. Our approach relies on fewer assumptions than other algorithms in the same setting and is thus more robust to model mis-specification. Future work will address tightening the lower-order terms and handling an unknown pre-change distribution.
\bibliographystyle{apalike}
\bibliography{citations.bib}

@article{hoeffding,
  author    = {Wassily Hoeffding},
  title     = {Probability Inequalities for Sums of Bounded Random Variables},
  journal   = {Journal of the American Statistical Association},
  year      = {1963},
  volume    = {58},
  number    = {301},
  pages     = {13--30}
}

@article{agrawal,
author = {Agrawal, Shipra and Goyal, Navin},
title = {Near-Optimal Regret Bounds for Thompson Sampling},
year = {2017},
publisher = {Association for Computing Machinery},
volume = {64},
number = {5}
}

@ARTICLE{lai1998,
  author={Tze Leung Lai},
  journal={IEEE Transactions on Information Theory}, 
  title={Information bounds and quick detection of parameter changes in stochastic systems}, 
  year={1998},
  volume={44},
  number={7},
  pages={2917-2929}}

@article{siegmund,
author = {D. Siegmund and E. S. Venkatraman},
title = {Using the Generalized Likelihood Ratio Statistic for Sequential Detection of a Change-Point},
volume = {23},
journal = {The Annals of Statistics},
number = {1},
publisher = {Institute of Mathematical Statistics},
pages = {255 -- 271},
year = {1995}
}

@article{gopalan2021bandit,
  title={Bandit Quickest Changepoint Detection},
  author={Gopalan, Aditya and Lakshminarayanan, Braghadeesh and Saligrama, Venkatesh},
  journal={Advances in Neural Information Processing Systems},
  volume={34},
  pages={29064--29073},
  year={2021}
}

@article{lorden1971procedures,
	author = {Lorden, Gary},
	journal = {The Annals of Mathematical Statistics},
	number = {6},
	pages = {1897--1908},
	publisher = {Institute of Mathematical Statistics},
	title = {Procedures for Reacting to a Change in Distribution},
	volume = {42},
	year = {1971}}

@article{page1954continuous,
	author = {Page, Ewan S.},
	journal = {Biometrika},
	number = {1/2},
	pages = {100--115},
	publisher = {JSTOR},
	title = {Continuous Inspection Schemes},
	volume = {41},
	year = {1954}}

@article{romano2023fast,
  title={Fast Online Changepoint Detection via Functional Pruning CUSUM statistics},
  author={Romano, Gaetano and Eckley, Idris A and Fearnhead, Paul and Rigaill, Guillem},
  journal={Journal of Machine Learning Research},
  volume={24},
  pages={1--36},
  year={2023}
}

@inproceedings{xu2021multi,
  title={Multi-Stream Quickest Detection with Unknown Post-Change Parameters Under Sampling Control},
  author={Xu, Qunzhi and Mei, Yajun},
  booktitle={2021 IEEE International Symposium on Information Theory (ISIT)},
  pages={112--117},
  year={2021},
  organization={IEEE}
}

@article{pollak1985optimal,
	author = {Pollak, Moshe},
	journal = {The Annals of Statistics},
	pages = {206--227},
	publisher = {JSTOR},
	title = {Optimal Detection of a Change in Distribution},
	year = {1985}}

@article{tutorial,
	author = {Wang, Haoyun and Xie, Yao},
	journal   = "WIREs Computational Statistics",
    volume    =  16,
    number    =  1,
    year      =  2024,
	title = {Sequential change-point detection: Computation versus statistical performance}}

@article{hinkley1970changepoint,
author = {David V. Hinkley},
title = {Inference about the change-point in a sequence of random variables},
journal = {Biometrika},
volume = {57},
number = {1},
pages = {1--17},
year = {1970}
}

@article{zhangmei,
  title={Bandit Change-Point Detection for Real-Time Monitoring High-Dimensional Data Under Sampling Control},
  author={Zhang, Wanrong and Mei, Yajun},
  journal={Technometrics},
  volume={65},
  number={1},
  pages={33--43},
  year={2023},
  publisher={Taylor \& Francis}
}

@book{JacodProtter2003,
  author    = {Jean Jacod and Philip Protter},
  title     = {Probability Essentials},
  edition   = {2nd},
  series    = {Universitext},
  publisher = {Springer},
  year      = {2003}
}

@inproceedings{pinsker,
  author    = {Igal Sason and Sergio Verd{\'u}},
  title     = {Upper Bounds on the Relative Entropy and R{\'e}nyi Divergence as a Function of Total Variation Distance for Finite Alphabets},
  booktitle = {2015 IEEE Information Theory Workshop},
  year      = {2015},
  pages     = {214--218},
}

@article{besson,
  author  = {Lilian Besson and Emilie Kaufmann and Odalric-Ambrym Maillard and Julien Seznec},
  title   = {Efficient Change-Point Detection for Tackling Piecewise-Stationary Bandits},
  journal = {Journal of Machine Learning Research},
  year    = {2022},
  volume  = {23},
  number  = {77},
  pages   = {1--40}
}

@article{liu, 
    title={A Change-Detection Based Framework for Piecewise-Stationary Multi-Armed Bandit Problem}, 
    volume={32}, 
    number={1}, 
    journal={Proceedings of the AAAI Conference on Artificial Intelligence}, 
    author={Liu, Fang and Lee, Joohyun and Shroff, Ness}, 
    year={2018}
}

@InProceedings{cao,
  title = 	 {Nearly Optimal Adaptive Procedure with Change Detection for Piecewise-Stationary Bandit},
  author =       {Cao, Yang and Wen, Zheng and Kveton, Branislav and Xie, Yao},
  booktitle = 	 {Proceedings of the Twenty-Second International Conference on Artificial Intelligence and Statistics},
  pages = 	 {418--427},
  year = 	 {2019},
  volume = 	 {89},
  series = 	 {Proceedings of Machine Learning Research},
  publisher =    {PMLR}
}

@article{romanonp,
author = {Romano, Gaetano and Eckley, Idris and Fearnhead, Paul},
year = {2023},
pages = {1-14},
title = {A Log-Linear Non-Parametric Online Changepoint Detection Algorithm based on Functional Pruning},
volume = {PP},
journal = {IEEE Transactions on Signal Processing}
}

@book{durrett2019probability,
  author    = {Rick Durrett},
  title     = {Probability: Theory and Examples},
  edition   = {5},
  year      = {2019},
  publisher = {Cambridge University Press}
}

@InProceedings{farinaazuma,
  title = 	 {Stochastic Regret Minimization in Extensive-Form Games},
  author =       {Farina, Gabriele and Kroer, Christian and Sandholm, Tuomas},
  booktitle = 	 {Proceedings of the 37th International Conference on Machine Learning},
  pages = 	 {3018--3028},
  year = 	 {2020},
  volume = 	 {119},
  series = 	 {Proceedings of Machine Learning Research},
  month = 	 {13--18 Jul},
  publisher =    {PMLR}
}

\appendix
\onecolumn

\section{CODE} \label{sec:codeappendix}
For reproducibility of the numerical results in Section \ref{sec:numresults}, we provide an implementation of our algorithm in Python. 
\subsection{Gaussian Implementation}
The first function is used to generate our data, and the second function provides the main implementation of Gaussian-Decaying-$\epsilon$-FOCuS using the FOCuS algorithm from \citet{romano2023fast}.
\begin{python}
import numpy as np

def generate_streaming_observation(stream, time, nu, mu1, M, mu0=0):
    """
    Generate a single observation for streaming algorithms.
    
    Parameters:
    -----------
    stream : int
        Stream index (M-1 has change point, others don't)
    time : int
        Current time step
    nu : int
        Change point location
    mu1 : float
        Post-change mean
    M : int
        Total number of streams
    mu0 : float
        Pre-change mean (default: 0)
        
    Returns:
    --------
    float : Single observation
    """
    if stream == M-1:
        # Last stream (M-1) has the change point
        if time < nu:
            return np.random.normal(mu0, 1)  # Pre-change
        else:
            return np.random.normal(mu1, 1)  # Post-change
    else:
        # Other streams have no change
        return np.random.normal(mu0, 1)

def focus_decay_streaming(M, T, nu, mu1, threshold):
    """
    Streaming implementation of focus_decay algorithm.
    Eliminates memory bottleneck by generating data on-demand.
    """
    # Initialize algorithm state
    S = np.zeros(M)
    N = np.zeros(M)
    quad_pos = [[(0,0,0,0)] for m in range(M)]
    quad_neg = [[(0,0,0,0)] for m in range(M)]
    glr_previous = np.zeros(M)
    v_previous = np.zeros(M)
    
    for t in range(T):
        # Stream selection with epsilon-greedy
        m_t = np.random.choice(np.where(glr_previous == np.max(glr_previous))[0])
        v_t = v_previous[m_t]
        epsilon = min(1, M*(t+1-v_t)**(-1/3))
        exploration = np.random.random() < epsilon # Replaced bernoulli.rvs()
        
        if exploration:
            a_t = np.random.randint(M)
        else:
            a_t = m_t
            
        # Generate single observation on-demand
        X_t = generate_streaming_observation(a_t, t, nu, mu1, M)
        
        # Update statistics
        N[a_t] = N[a_t] + 1
        S[a_t] = S[a_t] + X_t
        
        # Positive update
        k = len(quad_pos[a_t])
        quad_add = [N[a_t], S[a_t], np.inf, t+1]
        i = k
        while (2 * (quad_add[1] - quad_pos[a_t][i - 1][1])
           - (quad_add[0] - quad_pos[a_t][i - 1][0])
           * quad_pos[a_t][i - 1][2]) <= 0 and i >= 1:
            i -= 1
        quad_add[2] = max(0, 2 * (quad_add[0] - quad_pos[a_t][i - 1][0])
                        / (quad_add[2] - quad_pos[a_t][i - 1][2]))
        quad_pos[a_t] = quad_pos[a_t][:i].copy()
        quad_pos[a_t].append(tuple(quad_add))
        # print(quad_pos[a_t])
        
        # Negative update
        k = len(quad_neg[a_t])
        quad_add = [N[a_t], S[a_t], -np.inf, t+1]
        i = k
        while (2*(quad_add[1] - quad_neg[a_t][i-1][1])
            - (quad_add[0] - quad_neg[a_t][i-1][0])
            * quad_neg[a_t][i-1][2]) >= 0 and i >= 1:
            i = i-1
        quad_add[2] = min(0,2 * (quad_add[0] - quad_neg[a_t][i-1][0]) 
                        /(quad_add[2] - quad_neg[a_t][i-1][2]))
        quad_neg[a_t] = quad_neg[a_t][:i].copy()
        quad_neg[a_t].append(tuple(quad_add))
        
        # Calculate GLR and check for detection
        best_glr = 0
        for quad in quad_pos[a_t] + quad_neg[a_t]:
            if N[a_t]-quad[0]>0 and 
                (S[a_t]-quad[1])**2/(2*(N[a_t]-quad[0]))>best_glr:
                best_glr = (S[a_t]-quad[1])**2/(2*(N[a_t]-quad[0]))
                v_previous[a_t] = quad[3]
                glr_previous[a_t] = best_glr
                
        # Early stopping: return immediately upon detection
        if best_glr >= threshold:
            return t + 1, best_glr
            
    return None  # No detection within time horizon
\end{python}

\subsection{Bernoulli Implementation}
As before, we provide two functions to generate our data and to implement our algorithm.  The implementation of Bernoulli-Decaying-$\epsilon$-FOCuS is based on the Bernoulli FOCuS procedure from \citet{romanonp}.
\begin{python}
import numpy as np

def generate_streaming_observation(stream, time, nu, mu0, mu1, M):
    """
    Generate a single observation for streaming algorithms.
    
    Parameters:
    -----------
    stream : int
        Stream index (M-1 has change point, others don't)
    time : int
        Current time step
    nu : int
        Change point location
    mu0 : float
        Pre-change mean
    mu1 : float
        Post-change mean
    M : int
        Total number of streams
        
    Returns:
    --------
    float : Single observation
    """
    if stream == M-1:
        # Last stream (M-1) has the change point
        if time < nu:
            return np.random.binomial(n=1, p=mu0, size=1)[0]  # Pre-change
        else:
            return np.random.binomial(n=1, p=mu1, size=1)[0]  # Post-change
    else:
        # Other streams have no change
        return np.random.binomial(n=1, p=mu0, size=1)[0]

def focus_decay_streaming(M, T, nu, mu0, mu1, threshold):
    """
    Streaming implementation of focus_decay algorithm.
    Eliminates memory bottleneck by generating data on-demand.
    """
    # Initialize algorithm state
    quad_greater = [[[0,0,0]] for m in range(M)]
    quad_smaller = [[[0,0,0]] for m in range(M)]
    glr_previous = np.zeros(M)
    v_previous = np.zeros(M)
    
    for t in range(T):
        # Stream selection with epsilon-greedy
        m_t = np.random.choice(np.where(glr_previous == np.max(glr_previous))[0])
        v_t = v_previous[m_t]
        epsilon = min(1, M*(t+1-v_t)**(-1/3))
        exploration = np.random.random() < epsilon # Replaced bernoulli.rvs()
        
        if exploration:
            a_t = np.random.randint(M)
        else:
            a_t = m_t
            
        # Generate single observation on-demand
        X_t = generate_streaming_observation(a_t, t, nu, mu0, mu1, M)
        
        # Update step
        for idx in range(len(quad_greater[a_t])):
            quad_greater[a_t][idx][0] = quad_greater[a_t][idx][0] + X_t
            quad_greater[a_t][idx][1] = quad_greater[a_t][idx][1] + 1 - X_t
        for idx in range(len(quad_smaller[a_t])):
            quad_smaller[a_t][idx][0] = quad_smaller[a_t][idx][0] + X_t
            quad_smaller[a_t][idx][1] = quad_smaller[a_t][idx][1] + 1 - X_t

        # Pruning step
        k = len(quad_greater[a_t])
        i = k
        while i>1 and quad_greater[a_t][i-1][0]/(quad_greater[a_t][i-1][0]+quad_greater[a_t][i-1][1]) < quad_greater[a_t][i-2][0]/(quad_greater[a_t][i-2][0]+quad_greater[a_t][i-2][1]):
            i = i-1
        if i == 1 and quad_greater[a_t][i-1][0]/(quad_greater[a_t][i-1][0]+quad_greater[a_t][i-1][1]) < mu0:
            i = i-1
        quad_greater[a_t] = quad_greater[a_t][:i]
        quad_greater[a_t].append([0,0,t+1])
        
        k = len(quad_smaller[a_t])
        i = k
        while i>1 and quad_smaller[a_t][i-1][0]/(quad_smaller[a_t][i-1][0]+quad_smaller[a_t][i-1][1]) > quad_smaller[a_t][i-2][0]/(quad_smaller[a_t][i-2][0]+quad_smaller[a_t][i-2][1]):
            i = i-1
        if i == 1 and quad_smaller[a_t][i-1][0]/(quad_smaller[a_t][i-1][0]+quad_smaller[a_t][i-1][1]) > mu0:
            i = i-1
        quad_smaller[a_t] = quad_smaller[a_t][:i]
        quad_smaller[a_t].append([0,0,t+1])
        
        # Calculate GLR and check for detection
        best_glr = 0
        for quad in quad_greater[a_t] + quad_smaller[a_t]:
            if quad[0] == 0 and quad[1]==0:
                q_max = 0
            elif quad[0] == 0:
                q_max = quad[1]*np.log((quad[1]/(quad[0]+quad[1]))/(1-mu0))
            elif quad[1] == 0:
                q_max = quad[0]*np.log((quad[0]/(quad[0]+quad[1]))/mu0)
            else:
                q_max = quad[0]*np.log((quad[0]/(quad[0]+quad[1]))/mu0)+quad[1]*np.log((quad[1]/(quad[0]+quad[1]))/(1-mu0))
            if q_max > best_glr:
                best_glr = q_max
                v_previous[a_t] = quad[2]
                glr_previous[a_t] = best_glr
                
        # Early stopping: return immediately upon detection
        if best_glr >= threshold:
            return t + 1, best_glr
            
    return None  # No detection within time horizon
\end{python}

\section{PROOFS} \label{sec:proofappendix}

\subsection{Basic Facts}

For Hoeffding's inequality, we apply the formulation given in \citet[Theorem 1]{hoeffding}.
\begin{fact}[Hoeffding's inequality] \label{fact:Hoeffding}
    Let $Z_1,\dots,Z_n$ be independent random variables with $Z_i\in[0,1]$ for all $i$. Then
    \begin{equation}
        \mathbb{P}\left(\sum_{i=1}^n(Z_i-\mathbb{E}[Z_i])\geq t\right)\leq \exp\left(-\frac{2t^2}{n}\right) \nonumber 
    \end{equation}
    and
    \begin{equation}
        \mathbb{P}\left(\sum_{i=1}^n(Z_i-\mathbb{E}[Z_i])\leq -t\right)\leq \exp\left(-\frac{2t^2}{n}\right) \nonumber 
    \end{equation}
    for all $t\geq0$. By extension, for all $t\geq 0$,
    \begin{equation}
        \mathbb{P}\left(\left|\sum_{i=1}^n(Z_i-\mathbb{E}[Z_i])\right|\geq t\right)\leq 2\exp\left(-\frac{2t^2}{n}\right). \nonumber 
    \end{equation}
\end{fact}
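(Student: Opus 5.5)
The plan is the standard Chernoff-bound argument. I prove the upper tail first, derive the lower tail by symmetry, and get the two-sided bound from a union bound. Write $Y_i := Z_i - \mathbb{E}[Z_i]$ and $S_n := \sum_{i=1}^n Y_i$.

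\emph{Step 1: exponential Markov and independence.} For any $s>0$, Markov's inequality applied to $e^{sS_n}$ gives
\begin{align}
\mathbb{P}(S_n\geq t)\leq e^{-st}\,\mathbb{E}\left[e^{sS_n}\right]=e^{-st}\prod_{i=1}^n\mathbb{E}\left[e^{sY_i}\right],
\end{align}
where the factorization uses independence of the $Z_i$.

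\emph{Step 2: Hoeffding's lemma (the main obstacle).} I need $\mathbb{E}[e^{sY_i}]\leq e^{s^2/8}$ for each $i$. Write $p=\mathbb{E}[Z_i]\in[0,1]$. By convexity of $z\mapsto e^{sz}$ on $[0,1]$, $e^{sz}\leq (1-z)+z e^{s}$. Taking expectations gives
\begin{align}
\mathbb{E}\left[e^{sY_i}\right]\leq e^{-sp}\left(1-p+pe^{s}\right)=e^{L(s)},\qquad L(s):=-sp+\log\left(1-p+pe^{s}\right).
\end{align}
Direct computation shows $L(0)=0$ and $L'(0)=0$. For the second derivative, $L''(s)=q(1-q)$ with $q=pe^s/(1-p+pe^s)\in[0,1]$, so $L''(s)\leq 1/4$. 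Taylor's theorem with Lagrange remainder then yields $L(s)\leq s^2/8$. The only delicate point is computing $L''$ in this product form; the rest is routine.

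\emph{Step 3: optimize over $s$.} Combining Steps 1 and 2 gives $\mathbb{P}(S_n\geq t)\leq \exp(-st+ns^2/8)$. Choosing $s=4t/n$ gives $\exp(-2t^2/n)$, which is the first inequality. The case $t=0$ is trivial.

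\emph{Step 4: lower tail and two-sided bound.} For the lower tail, apply the upper-tail bound to $Z_i':=1-Z_i\in[0,1]$, which are still independent. Since $\sum_i(Z_i'-\mathbb{E}[Z_i'])=-S_n$, this gives $\mathbb{P}(S_n\leq -t)\leq \exp(-2t^2/n)$. Finally, $\{|S_n|\geq t\}=\{S_n\geq t\}\cup\{S_n\leq -t\}$, so a union bound yields the factor $2$.
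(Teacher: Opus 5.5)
Your proof is correct and complete. The paper gives no argument of its own and simply cites \cite[Theorem 1]{hoeffding}; your route (Chernoff bound, the convexity bound $\mathbb{E}[e^{sZ_i}]\le 1-p+pe^{s}$ with $L''\le 1/4$, the choice $s=4t/n$, then $Z_i\mapsto 1-Z_i$ and a union bound) is the standard proof of that result, so the approaches agree. The only cosmetic difference from Hoeffding's Theorem 1 is that he passes through a sharper relative-entropy bound before relaxing to $\exp(-2t^2/n)$, whereas you go straight to the sub-Gaussian moment-generating-function bound, which is all the paper uses.
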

We use the formulation for the Azuma-Hoeffding inequality given by \cite[Theorem 1]{farinaazuma}.
\begin{fact}[Azuma-Hoeffding inequality] \label{fact:AzumaHoeffding}
    Let $Y_1,\dots,Y_n$ be a martingale difference sequence with $a_k\leq Y_k\leq b_k$ for each $k$, for suitable constants $a_k$, $b_k$. Then for any $t\geq0$, 
    \begin{equation}
        \mathbb{P}\left(\sum_{k=1}^n Y_k\geq t\right)\leq \exp\left(-\frac{2t^2}{\sum_{k=1}^n(b_k-a_k)^2}\right). \nonumber 
    \end{equation}
\end{fact}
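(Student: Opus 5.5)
The plan is the standard exponential-moment (Chernoff) argument, with Hoeffding's lemma applied conditionally at each step. Let $(\mathcal{G}_k)$ be the filtration for which $(Y_k)$ is a martingale difference sequence, so $\mathbb{E}[Y_k\mid\mathcal{G}_{k-1}]=0$, and set $S_n=\sum_{k=1}^n Y_k$. For any $s>0$, Markov's inequality applied to $e^{sS_n}$ gives
\begin{align}
\mathbb{P}(S_n\geq t)\leq e^{-st}\,\mathbb{E}\left[e^{sS_n}\right].
\end{align}
The task is then to bound the moment generating function $\mathbb{E}[e^{sS_n}]$ and optimize over $s$.

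\textbf{Key lemma (conditional Hoeffding's lemma).} If $a\leq Y\leq b$ and $\mathbb{E}[Y\mid\mathcal{G}]=0$, then
\begin{align}
\mathbb{E}[e^{sY}\mid\mathcal{G}]\leq \exp\left(\frac{s^2(b-a)^2}{8}\right).
\end{align}
The steps are as follows.
\begin{enumerate}
\item By convexity of $y\mapsto e^{sy}$ on $[a,b]$,
\begin{align}
e^{sY}\leq \frac{b-Y}{b-a}e^{sa}+\frac{Y-a}{b-a}e^{sb}.
\end{align}
\item Taking conditional expectations and using the zero conditional mean, the right-hand side becomes $\frac{b}{b-a}e^{sa}-\frac{a}{b-a}e^{sb}$, which is deterministic because $a,b$ are constants.
\item Write this quantity as $e^{L(h)}$ with $h=s(b-a)$, $p=-a/(b-a)\in[0,1]$, and
\begin{align}
L(h)=-hp+\log(1-p+pe^{h}).
\end{align}
\item Check that $L(0)=0$ and $L'(0)=0$.
\item Note that $L''(h)=q(1-q)$ with $q=pe^h/(1-p+pe^h)\in[0,1]$, so $L''(h)\leq 1/4$.
\item Taylor's theorem then gives $L(h)\leq h^2/8$, which is the claimed bound.
\end{enumerate}
I expect this second-derivative computation to be the only step needing care. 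The degenerate case $a=b$ is handled directly, since then $Y=0$ almost surely.

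\textbf{Iteration.} Condition on $\mathcal{G}_{n-1}$. Since $S_{n-1}$ is $\mathcal{G}_{n-1}$-measurable, the tower property gives
\begin{align}
\mathbb{E}[e^{sS_n}]=\mathbb{E}\left[e^{sS_{n-1}}\,\mathbb{E}[e^{sY_n}\mid\mathcal{G}_{n-1}]\right]\leq e^{s^2(b_n-a_n)^2/8}\,\mathbb{E}[e^{sS_{n-1}}].
\end{align}
Inducting down to $k=1$ yields
\begin{align}
\mathbb{E}[e^{sS_n}]\leq \exp\left(\frac{s^2}{8}\sum_{k=1}^n(b_k-a_k)^2\right).
\end{align}

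\textbf{Optimization.} Write $V=\sum_{k=1}^n(b_k-a_k)^2$. Combining the two displays,
\begin{align}
\mathbb{P}(S_n\geq t)\leq \exp\left(-st+\frac{s^2V}{8}\right).
\end{align}
The exponent is minimized at $s=4t/V$, which gives the claimed bound $\exp(-2t^2/V)$. The case $t=0$ is trivial. If $V=0$, every $Y_k$ is zero almost surely and the statement holds trivially.
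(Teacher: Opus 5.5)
Your proof is correct. The convexity bound, the identity $L''(h)=q(1-q)\le 1/4$, the tower-property iteration and the choice $s=4t/V$ all check out, as do the degenerate cases. The iteration is legitimate because the $Y_k$ are bounded, so $e^{sS_{n-1}}$ is a bounded $\mathcal{G}_{n-1}$-measurable factor.

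The paper itself gives no proof of this Fact. It imports the inequality verbatim from \cite[Theorem 1]{farinaazuma}. Your argument (Chernoff bound, conditional Hoeffding lemma, iteration) is the standard proof behind that citation.

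What your self-contained version adds is clarity about which hypotheses the paper actually uses downstream. In Lemma~\ref{lemma:smallN} the paper needs only two things: that $Y_i$ is a martingale difference sequence with respect to $(\mathcal{F}_i)$, and that $-1\le Y_i\le 1$. With $n=t$ summands, $V=4t$ and deviation level $3t^{2/3}/8$, your bound gives exactly $\exp(-9t^{1/3}/128)$.

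Your Steps 2--6 also adapt immediately when $a,b$ are only $\mathcal{G}$-measurable but $b-a$ is deterministic, because the final bound depends only on $b-a$. Applied to $Y_i\in[p_i-1,p_i]$ with $p_i=\mathbb{P}_{M,0}(A_i=1,G_i=1\mid\mathcal{F}_{i-1})$, this would improve the exponent in Lemma~\ref{lemma:smallN} by a factor of $4$. The paper does not need this improvement.
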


\begin{fact}[Sub-Gaussian concentration inequality] \label{fact:subgaussian}
    Let $X$ be $\sigma$-sub-Gaussian-distributed with mean $\mu$. Then
    $$\mathbb{P}\left(X\geq \mu+t\right)\leq\exp\left(-\frac{t^2}{2\sigma^2}\right)$$
    and
    $$\mathbb{P}\left(X\leq \mu-t\right)\leq\exp\left(-\frac{t^2}{2\sigma^2}\right)$$
    for all $t\geq 0$. By extension, for all $t\geq0$,
    $$\mathbb{P}\left(|X-\mu|\geq t\right)\leq2\exp\left(-\frac{t^2}{2\sigma^2}\right).$$
\end{fact}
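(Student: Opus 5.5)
The plan is the standard Chernoff (exponential Markov) argument. It uses only the definition of $\sigma$-sub-Gaussianity given in Section~\ref{sec:gaussiansection}: the moment generating function bound $\mathbb{E}[\exp(s(X-\mu))]\leq\exp(s^2\sigma^2/2)$ holds for every $s\in\mathbb{R}$.

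For the upper tail, fix $t\geq 0$ and an arbitrary $s>0$. Since $x\mapsto e^{sx}$ is increasing and positive, Markov's inequality applied to $\exp(s(X-\mu))$ gives
\begin{align}
\mathbb{P}(X\geq\mu+t)=\mathbb{P}\left(e^{s(X-\mu)}\geq e^{st}\right)\leq e^{-st}\,\mathbb{E}\left[e^{s(X-\mu)}\right]\leq\exp\left(-st+\frac{s^2\sigma^2}{2}\right).
\end{align}
This holds for all $s>0$, so I would minimize the exponent over $s$. The quadratic $-st+s^2\sigma^2/2$ is minimized at $s=t/\sigma^2$, where it equals $-t^2/(2\sigma^2)$. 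When $t=0$ the claimed bound is the trivial bound $1$, so that case needs no separate argument.

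For the lower tail, I would reduce to the upper tail by symmetry. Set $Y=-X$, which has mean $-\mu$. The sub-Gaussian condition is required for all real $s$, including negative ones, so
\begin{align}
\mathbb{E}\left[e^{s(Y+\mu)}\right]=\mathbb{E}\left[e^{(-s)(X-\mu)}\right]\leq e^{s^2\sigma^2/2}.
\end{align}
Hence $Y$ is also $\sigma$-sub-Gaussian. Applying the upper-tail bound to $Y$ gives
\begin{align}
\mathbb{P}(X\leq\mu-t)=\mathbb{P}(Y\geq-\mu+t)\leq e^{-t^2/(2\sigma^2)}.
\end{align}

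The two-sided statement then follows from the union bound, since
\begin{align}
\{|X-\mu|\geq t\}=\{X\geq\mu+t\}\cup\{X\leq\mu-t\}.
\end{align}
There is no real obstacle in this argument. The only point requiring care is that the definition allows all real $s$ (the paper writes $\lambda\in\mathbb{R}$ for this parameter), which is exactly what makes the reflection to $-X$ legitimate.
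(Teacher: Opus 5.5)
Your Chernoff argument is correct. Markov's inequality applied to $e^{s(X-\mu)}$ and optimized at $s=t/\sigma^2$ gives the upper tail. The reflection $X\mapsto -X$ is legitimate because the moment generating function bound holds for every real $s$, and it gives the lower tail. A union bound then yields the two-sided inequality. The paper states this as a basic fact without proof, and your argument is the standard justification it implicitly relies on.
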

The following two facts come from \citet[Theorem 23.7 and 10.5]{JacodProtter2003}, respectively.
\begin{fact}[Taking out what is known] \label{fact:takeout}
    If $X$ is $\mathcal{G}$-measurable and $X,Y,XY\in\mathcal{L}^1$, then
    \begin{equation}
        \mathbb{E}[XY|\mathcal{G}]=X\mathbb{E}[Y|\mathcal{G}]\;a.s. \nonumber
    \end{equation}
\end{fact}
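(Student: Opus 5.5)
The plan is to verify directly that $X\,\mathbb{E}[Y\mid\mathcal{G}]$ satisfies the defining properties of $\mathbb{E}[XY\mid\mathcal{G}]$. It is $\mathcal{G}$-measurable as a product of $\mathcal{G}$-measurable functions. We then need
\begin{equation}
\mathbb{E}\left[XY\mathbf{1}_A\right]=\mathbb{E}\left[X\,\mathbb{E}[Y\mid\mathcal{G}]\mathbf{1}_A\right]\quad\text{for all }A\in\mathcal{G},
\end{equation}
together with integrability of $X\,\mathbb{E}[Y\mid\mathcal{G}]$. Almost-sure uniqueness of conditional expectation then gives the claim. I will build $X$ up through the standard machine: indicators, then simple functions, then nonnegative functions, then general functions.

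\textbf{Step 1 (indicators and simple functions).} For $X=\mathbf{1}_B$ with $B\in\mathcal{G}$, apply the defining property of $\mathbb{E}[Y\mid\mathcal{G}]$ to the set $A\cap B\in\mathcal{G}$:
\begin{equation}
\mathbb{E}[\mathbf{1}_B Y\mathbf{1}_A]=\mathbb{E}\left[\mathbb{E}[Y\mid\mathcal{G}]\mathbf{1}_{A\cap B}\right].
\end{equation}
Linearity extends this to simple $\mathcal{G}$-measurable $X$. Integrability is immediate here because $X$ is bounded.

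\textbf{Step 2 (nonnegative $X$ and $Y$).} Assume first $X\ge 0$ and $Y\ge 0$, so that $\mathbb{E}[Y\mid\mathcal{G}]\ge 0$ a.s. Take simple $\mathcal{G}$-measurable $X_n\uparrow X$. The identity from Step 1 holds for each $X_n$. Apply monotone convergence to both sides: on the left $X_nY\mathbf{1}_A\uparrow XY\mathbf{1}_A$, and on the right $X_n\,\mathbb{E}[Y\mid\mathcal{G}]\mathbf{1}_A\uparrow X\,\mathbb{E}[Y\mid\mathcal{G}]\mathbf{1}_A$. Taking $A=\Omega$ yields
\begin{equation}
\mathbb{E}\left[X\,\mathbb{E}[Y\mid\mathcal{G}]\right]=\mathbb{E}[XY]<\infty,
\end{equation}
which supplies the integrability needed for the defining property.

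\textbf{Step 3 (general case).} Write $X=X^+-X^-$ and $Y=Y^+-Y^-$, so that
\begin{equation}
XY=X^+Y^+-X^+Y^--X^-Y^++X^-Y^-.
\end{equation}
Each of the four products is nonnegative, and each is bounded by $|XY|$, since at most one of $X^{+},X^{-}$ and one of $Y^{+},Y^{-}$ is nonzero at any point. Hence each product is integrable, and Step 2 applies to each pair. Linearity of conditional expectation, together with $\mathbb{E}[Y\mid\mathcal{G}]=\mathbb{E}[Y^+\mid\mathcal{G}]-\mathbb{E}[Y^-\mid\mathcal{G}]$, recombines the four identities into the claim.

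The main obstacle is integrability of $X\,\mathbb{E}[Y\mid\mathcal{G}]$, which is not assumed. It is handled by obtaining it from $XY\in\mathcal{L}^1$ in the nonnegative case (Step 2) and then passing to the general case through the pointwise domination $|X^{\pm}Y^{\pm}|\le|XY|$ (Step 3).
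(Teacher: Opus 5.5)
Your argument is correct. The paper gives no proof of this fact: it imports it directly from Jacod and Protter (Theorem 23.7). Your proposal therefore supplies a self-contained justification rather than reproducing or departing from an argument in the paper.

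Your route is the standard one. You handle indicators via the defining relation on $A\cap B$, simple functions by linearity, and nonnegative $X,Y$ by monotone convergence, using $\mathbb{E}[Y\mid\mathcal{G}]\ge 0$ a.s.; the general case then follows from the four-term sign decomposition. You also correctly identify and handle the only delicate point. Integrability of $X\,\mathbb{E}[Y\mid\mathcal{G}]$ is not assumed, so it must be derived from $XY\in\mathcal{L}^1$, which you do by taking $A=\Omega$ in the nonnegative case.

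Two small remarks. First, the hypothesis $X\in\mathcal{L}^1$ is never used in your proof. That is fine, since it is superfluous: only $Y,XY\in\mathcal{L}^1$ are needed. Second, the domination $X^{\pm}Y^{\pm}\le|XY|$ follows simply from $X^{\pm}\le|X|$ and $Y^{\pm}\le|Y|$, without the disjoint-support observation.

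Note also that the paper invokes this fact only once, in the proof of Theorem~\ref{thm:edd}. There $X=\mathbf{1}\{\tau_{\lambda,n}\ge t\}$ with $\{\tau_{\lambda,n}\ge t\}\in\mathcal{F}_{t-1}$, and $Y$ is an indicator as well. Your Step~1 alone therefore already covers everything the paper needs; the monotone-convergence and sign-splitting steps buy the full generality of the stated fact.
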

Here $\mathcal{L}^1$ denotes the set of all integrable random variables.
\begin{fact}[Borel-Cantelli Lemma] \label{fact:BorelCantelli}
    For a sequence of events $A_1,A_2,\dots$, if~$\sum_{n=1}^{\infty}\mathbb{P}(A_n)<\infty$, then~$\mathbb{P}(A_n\; i.o.)=0$, i.e., only finitely many events~$A_n$ occur
    a.s.
\end{fact}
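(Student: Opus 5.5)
We prove this via the tail-union (continuity from above) argument, which uses only countable subadditivity. First we write the event ``$A_n$ infinitely often'' as a limsup set:
\begin{align}
    \{A_n\ i.o.\}=\bigcap_{k=1}^{\infty}\bigcup_{n\geq k}A_n. \nonumber
\end{align}
To check this, note that $\omega$ lies in infinitely many $A_n$ exactly when, for every $k$, some $n\geq k$ has $\omega\in A_n$.

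Next we fix $k\in\mathbb{N}$. The limsup set is contained in $B_k:=\bigcup_{n\geq k}A_n$, so monotonicity and countable subadditivity of $\mathbb{P}$ give
\begin{align}
    \mathbb{P}(A_n\ i.o.)\leq\mathbb{P}\left(\bigcup_{n\geq k}A_n\right)\leq\sum_{n=k}^{\infty}\mathbb{P}(A_n). \nonumber
\end{align}
The right-hand side is the tail of the convergent series $\sum_{n=1}^\infty\mathbb{P}(A_n)<\infty$, so it tends to $0$ as $k\to\infty$. The left-hand side does not depend on $k$. Letting $k\to\infty$ therefore yields $\mathbb{P}(A_n\ i.o.)=0$. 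Equivalently, almost every $\omega$ lies in only finitely many $A_n$, which is the ``only finitely many events occur a.s.'' form of the statement.

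An alternative route, which we would mention as a cross-check, is to set $N:=\sum_{n=1}^\infty\mathbf{1}\{A_n\}$. By the monotone convergence theorem (Tonelli for nonnegative series), $\mathbb{E}[N]=\sum_{n}\mathbb{P}(A_n)<\infty$. Hence $N<\infty$ almost surely, and $\{A_n\ i.o.\}=\{N=\infty\}$.

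There is no real obstacle here. The only point needing care is the set-theoretic identification of $\{A_n\ i.o.\}$ with the limsup set, and the fact that the bound holds for each $k$ separately before passing to the limit. No independence of the $A_n$ is required. This matters for the application to Proposition \ref{prop:finiteexploitationgaussian}, because the events $\{M_t\neq1\}$ are strongly dependent through the adaptive sampling rule.
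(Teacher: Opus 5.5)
Your proof is correct. The tail bound $\mathbb{P}(A_n\ i.o.)\leq\sum_{n\geq k}\mathbb{P}(A_n)\to 0$ is the standard argument. Your cross-check via $N=\sum_n\mathbf{1}\{A_n\}$ with $\mathbb{E}[N]=\sum_n\mathbb{P}(A_n)<\infty$ is equally valid.

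The paper gives no proof of its own; it states this as a Fact cited from Jacod and Protter (Theorem 10.5). As far as I recall, that textbook argument is essentially your second route: a finite expected count forces $N<\infty$ almost surely.

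You are also right that no independence is needed. That is exactly what the paper relies on when it applies the lemma to the dependent events $\{M_t\neq 1\}$ after Proposition~\ref{prop:finiteexploitationgaussian}.
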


The following extension of the second Borel-Cantelli lemma comes from \cite[Theorem 4.3.4]{durrett2019probability}.
\begin{fact}[Conditional Second Borel-Cantelli Lemma] \label{fact:ConditionalBorelCantelli}
    Let $\mathcal{F}_n,n\geq 0$ be a filtration with $\mathcal{F}_0=\{\emptyset,\Omega\}$ and let $B_n,n\geq 1$ be a sequence of events with $B_n\in\mathcal{F}_n$. Then 
    \[
        \{B_n\;i.o.\}=\left\{\sum_{n=1}^\infty\mathbb{P}(B_n|\mathcal{F}_{n-1})=\infty\right\}.
    \]
\end{fact}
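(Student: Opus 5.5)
The plan is to follow the standard martingale argument (Durrett, Theorem 4.3.4). Set $p_n := \mathbb{P}(B_n\mid\mathcal{F}_{n-1})$, and define
\begin{align}
  X_n := \sum_{m=1}^{n}\left(\mathbf{1}_{B_m}-p_m\right), \qquad X_0 = 0 .
\end{align}
Since $B_m\in\mathcal{F}_m$ and $p_m$ is $\mathcal{F}_{m-1}$-measurable, the sequence $(X_n)$ is an $(\mathcal{F}_n)$-martingale. Its increments satisfy $|X_n-X_{n-1}|\le 1$. The statement then reduces to the following dichotomy for martingales with bounded increments: almost surely, either $\lim_n X_n$ exists and is finite, or $\limsup_n X_n=+\infty$ and $\liminf_n X_n=-\infty$.

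Assuming the dichotomy, the conclusion follows by comparing the two series $\sum_n \mathbf{1}_{B_n}$ and $\sum_n p_n$, both of which have nonnegative terms.
\begin{itemize}
\item On the event $\{X_n \text{ converges}\}$, the difference of the partial sums converges to a finite limit. Hence $\sum_n \mathbf{1}_{B_n}=\infty$ if and only if $\sum_n p_n=\infty$.
\item On the event $\{\limsup_n X_n=+\infty\}$, we have $\sum_{m\le n}\mathbf{1}_{B_m}\ge X_n$, which is unbounded above, so $\sum_n \mathbf{1}_{B_n}=\infty$.
\item On the event $\{\liminf_n X_n=-\infty\}$, we have $\sum_{m\le n}p_m\ge -X_n$, which is unbounded above, so $\sum_n p_n=\infty$.
\end{itemize}
In every case the two series diverge together. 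Since $\{B_n \text{ i.o.}\}=\{\sum_n\mathbf{1}_{B_n}=\infty\}$, this gives the claimed equality of events up to a null set.

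The main work is proving the dichotomy. Fix $K>0$ and define the stopping time $N:=\inf\{n: X_n\le -K\}$. Because the increments are bounded by $1$, the stopped process satisfies $X_{n\wedge N}\ge -K-1$. Thus $X_{n\wedge N}+K+1$ is a nonnegative martingale, and by the martingale convergence theorem it converges almost surely. Consequently $X_n$ converges on $\{N=\infty\}=\{\inf_n X_n>-K\}$. Letting $K$ range over $\mathbb{N}$ shows that $X_n$ converges almost surely on $\{\liminf_n X_n>-\infty\}$. Applying the same argument to $-X_n$ shows that $X_n$ converges on $\{\limsup_n X_n<\infty\}$.

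On the complement of these two events, $\liminf_n X_n=-\infty$ and $\limsup_n X_n=+\infty$, which completes the dichotomy. The only delicate points are the stopping-time bound, which is where the bounded increments are essential, and the countable union over $K$. The remaining steps are bookkeeping.
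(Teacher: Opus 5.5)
Your proposal is correct and follows the standard argument in Durrett: the bounded-increment martingale dichotomy (his Theorem 4.3.1) applied to $X_n=\sum_{m\le n}(\mathbf{1}_{B_m}-\mathbb{P}(B_m\mid\mathcal{F}_{m-1}))$, which is the source the paper cites for this fact instead of proving it. Your remark that the equality holds only up to a null set is the right reading of the statement, since the conditional probabilities are defined only almost surely.
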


\begin{fact}[Jensen's Inequality] \label{fact:jensen}
    Let $g:\mathbb{R}\to\mathbb{R}$ be convex, and let $X,g(X)\in \mathcal{L}^1$. Then
    $$g(\mathbb{E}[X])\leq \mathbb{E}[g(X)].$$
    If $g$ is concave then 
    $$g(\mathbb{E}[X])\geq \mathbb{E}[g(X)].$$
\end{fact}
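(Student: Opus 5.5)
The plan is the standard supporting-line argument. Let $m:=\mathbb{E}[X]$, which is a finite real number because $X\in\mathcal{L}^1$. The key step is to produce an affine minorant of $g$ that touches $g$ at $m$. That is, we want a constant $c\in\mathbb{R}$ such that
\begin{align}
    g(x)\geq g(m)+c\,(x-m)\quad\text{for all }x\in\mathbb{R}. \nonumber
\end{align}
Once this holds, we substitute $x=X(\omega)$ pointwise and take expectations. Both sides are integrable: the left by the assumption $g(X)\in\mathcal{L}^1$, and the right because it is affine in $X\in\mathcal{L}^1$. Monotonicity and linearity of expectation then give
\begin{align}
    \mathbb{E}[g(X)]\geq g(m)+c\,(\mathbb{E}[X]-m)=g(\mathbb{E}[X]). \nonumber
\end{align}
This is the convex case.

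To construct $c$, I will use the monotonicity of difference quotients of a convex function. For $s<t<u$, convexity, written with $t$ as a convex combination of $s$ and $u$, yields
\begin{align}
    \frac{g(t)-g(s)}{t-s}\leq\frac{g(u)-g(s)}{u-s}\leq\frac{g(u)-g(t)}{u-t}. \nonumber
\end{align}
Applied around the point $m$, this shows two things. The quotient $(g(x)-g(m))/(x-m)$ is nondecreasing in $x$ on each side of $m$. Every left quotient, taken with $x<m$, is bounded above by every right quotient, taken with $x>m$.

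Consequently the left derivative $g'_-(m)$ and the right derivative $g'_+(m)$ exist, are finite, and satisfy $g'_-(m)\leq g'_+(m)$. Here finiteness is where it matters that $g$ is real-valued on all of $\mathbb{R}$, so that points on both sides of $m$ are available. Choosing any $c\in[g'_-(m),g'_+(m)]$, say $c=g'_+(m)$, the monotonicity of the quotients gives the supporting-line inequality on both sides of $m$, and it holds trivially at $x=m$.

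For the concave case, I apply the convex case to $-g$. This function is convex, and $-g(X)\in\mathcal{L}^1$ exactly when $g(X)\in\mathcal{L}^1$. The convex case gives $-g(\mathbb{E}[X])\leq\mathbb{E}[-g(X)]$, and multiplying by $-1$ gives the reversed inequality.

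The only step requiring care is the existence and finiteness of the supporting slope $c$, which rests entirely on the three-slope inequality above. The rest is monotonicity and linearity of the integral, applied under the stated integrability hypotheses.
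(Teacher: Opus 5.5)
Your proof is correct: the three-slope inequality gives finite one-sided derivatives at $m=\mathbb{E}[X]$ with $g'_-(m)\leq g'_+(m)$, and any slope $c$ between them gives a supporting line. Integrating that line under the stated hypotheses $X,g(X)\in\mathcal{L}^1$, and treating the concave case via $-g$, gives the result. The paper does not prove this statement at all; it records Jensen's inequality as a standard background fact, and your supporting-line argument is the usual textbook proof it implicitly relies on.
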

\begin{fact}[Pinsker's Inequality] \label{fact:pinsker}
    Let $P$ and $Q$ be probability distributions defined on a common measurable space $(\mathcal{A},\mathcal{F})$. Then
    $$\frac{1}{2}|P-Q|^2\leq D(P||Q)$$
    where $$D(P||Q)=\mathbb{E}_P\left[\log\frac{dP}{dQ}\right]$$
    denotes the KL-divergence between $P$ and $Q$, and $|P-Q|$ denotes the total variation distance between $P$ and $Q$. When $P$ and $Q$ are defined on a common discrete set $\mathcal{A}$,
    $$|P-Q|=\sum_{a\in\mathcal{A}}|P(a)-Q(a)|.$$
\end{fact}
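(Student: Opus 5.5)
The plan is the standard two-step proof: reduce to a two-point space, then prove the inequality there by elementary calculus. If $P\not\ll Q$ then $D(P\|Q)=\infty$ and there is nothing to prove, so assume $P\ll Q$. Fix a dominating measure (for instance $\mu=P+Q$) with densities $p=dP/d\mu$ and $q=dQ/d\mu$. Let
\[
A=\{p>q\},\qquad a=P(A),\qquad b=Q(A).
\]
The total variation, which the statement defines as the full $L^1$ distance, then satisfies
\[
|P-Q|=\int|p-q|\,d\mu=2(a-b),
\]
because $\int(p-q)\,d\mu=0$ forces the positive and negative parts of $p-q$ to have equal mass. The target inequality $\tfrac12|P-Q|^2\le D(P\|Q)$ therefore becomes $2(a-b)^2\le D(P\|Q)$.

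\textbf{Step 1 (data processing onto $\{A,A^c\}$).} I would show $D(P\|Q)\ge d(a\|b)$, where $d(a\|b)=a\log\frac ab+(1-a)\log\frac{1-a}{1-b}$ is the Bernoulli divergence. Split $D(P\|Q)=\int_A p\log\frac pq\,d\mu+\int_{A^c}p\log\frac pq\,d\mu$. On each piece apply the log-sum inequality, which is Jensen's inequality (Fact~\ref{fact:jensen}) for the convex function $x\mapsto x\log x$ under the probability measure $q\,d\mu/Q(A)$ restricted to $A$ (and likewise on $A^c$). This gives
\[
\int_A p\log\frac pq\,d\mu\;\ge\; a\log\frac ab,
\qquad
\int_{A^c} p\log\frac pq\,d\mu\;\ge\;(1-a)\log\frac{1-a}{1-b}.
\]
Edge cases such as $Q(A)=0$ or $a\in\{0,1\}$ are handled with the conventions $0\log 0=0$ and $x\log(x/0)=\infty$ for $x>0$.

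\textbf{Step 2 (binary inequality).} Fix $a\in[0,1]$ and set $g(b)=d(a\|b)-2(a-b)^2$ for $b\in(0,1)$. Then $g(a)=0$, and
\[
g'(b)=\frac{b-a}{b(1-b)}-4(b-a)=(b-a)\left(\frac{1}{b(1-b)}-4\right).
\]
Since $b(1-b)\le\tfrac14$, the factor in parentheses is nonnegative. Hence $g'\le0$ for $b<a$ and $g'\ge0$ for $b>a$, so $g$ attains its minimum value $0$ at $b=a$. This proves $d(a\|b)\ge2(a-b)^2$; the endpoint cases $b\in\{0,1\}$ give $d=\infty$ unless $a=b$. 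Combining the two steps,
\[
D(P\|Q)\;\ge\; d(a\|b)\;\ge\; 2(a-b)^2=\tfrac12|P-Q|^2.
\]
In the discrete case, the integrals above are sums over $\mathcal A$, so no separate argument is needed.

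The main obstacle is Step 1 in full measure-theoretic generality. One must justify applying Jensen with integrability only in the extended sense, and handle sets where $q=0$ but $p>0$. Both are excluded by the assumption $P\ll Q$ together with the convention that the divergence is $+\infty$ otherwise. The calculus in Step 2 is routine. Finally, one should check that the normalization matches the statement: because $|P-Q|$ is the full $L^1$ norm rather than the supremum over events, the constant $\tfrac12$ is exactly the one produced by the factor of $2$ in $|P-Q|=2(a-b)$.
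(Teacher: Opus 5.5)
The paper gives no proof of this statement. Pinsker's inequality is listed among the ``Basic Facts'' and simply invoked, and only once, in the proof of Lemma~\ref{lemma:smallGLRbigNBernoulli}. So there is no argument of the authors' to compare against, and your proposal supplies the standard textbook proof, which is correct.

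\textbf{What checks out.}
\begin{itemize}
\item With $A=\{p>q\}$, the identity $|P-Q|=2(a-b)$ holds.
\item The reduction $D(P\|Q)\ge d(a\|b)$ via the log-sum (Jensen) inequality on $A$ and $A^c$ is valid.
\item The derivative $g'(b)=(b-a)\bigl(\frac{1}{b(1-b)}-4\bigr)$ is right.
\item Your normalization remark is accurate. With $|P-Q|$ the full $L^1$ distance, $\tfrac12|P-Q|^2=2(a-b)^2$, which is the classical $D\ge 2\,\mathrm{TV}^2$.
\end{itemize}

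\textbf{Points to tighten.}
\begin{itemize}
\item Splitting $\int p\log(p/q)\,d\mu$ over $A$ and $A^c$ is legitimate because the negative part is integrable. This follows from $p\log(p/q)\ge p-q$ pointwise.
\item The paper's Fact~\ref{fact:jensen} requires $g:\mathbb{R}\to\mathbb{R}$ and $g(X)\in\mathcal{L}^1$. It does not literally cover $x\log x$ when $\phi(X)$ may fail to be integrable. You need the extended version for convex functions bounded below, as you note.
\item For $a\in\{0,1\}$, the point $b=a$ lies outside $(0,1)$. So ``minimum at $b=a$'' should be read as a limit as $b\to a$, which your one-sided monotonicity argument already delivers.
\end{itemize}

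\textbf{What each route buys.} The only form the paper actually uses is the Bernoulli instance $D(\hat\mu\|\mu_0)\ge 2(\hat\mu-\mu_0)^2$. That is exactly your Step~2: the space is already the two-point set, and $|P-Q|=2|a-b|$ directly. Step~1 is needed only for the general-measure version stated in the Fact. Citing the inequality buys the authors brevity, while your argument makes the appendix self-contained.
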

We restate a reverse inequality from \cite{pinsker} for the case of finite sets.
\begin{fact}[Reverse Pinsker Inequality] \label{fact:revpinsker}
    Let $P$ and $Q$ be probability measures defined on a common finite set $\mathcal{A}$, and assume that $Q$ is strictly positive on $\mathcal{A}$. Let $Q_{\mathrm{min}}:=\min_{a\in\mathcal{A}}Q(a)$. Then
    $$D(P||Q)\leq\log\left(1+\frac{|P-Q|^2}{2Q_{\mathrm{min}}}\right).$$
\end{fact}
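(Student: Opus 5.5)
The plan is to go through the $\chi^2$-divergence: first bound $D(P||Q)$ by the logarithm of $1+\chi^2(P||Q)$ using Jensen's inequality (Fact \ref{fact:jensen}), then bound $\chi^2$ by the squared total variation distance.

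First, if $P(a)=0$ for some $a$, that term contributes $0$ to $D(P||Q)$. So
\[
D(P||Q)=\sum_{a:P(a)>0}P(a)\log\frac{P(a)}{Q(a)}=\mathbb{E}_P\left[\log\frac{P(X)}{Q(X)}\right],
\]
and this is finite because $Q$ is strictly positive. Since $\log$ is concave, Fact \ref{fact:jensen} gives
\[
D(P||Q)\leq\log\mathbb{E}_P\left[\frac{P(X)}{Q(X)}\right]=\log\sum_{a\in\mathcal{A}}\frac{P(a)^2}{Q(a)}.
\]
Expanding with $\sum_a P(a)=\sum_a Q(a)=1$ shows
\[
\sum_{a}\frac{P(a)^2}{Q(a)}=1+\sum_{a}\frac{(P(a)-Q(a))^2}{Q(a)}.
\]
Because $\log$ is increasing, it then suffices to prove
\[
\sum_{a}\frac{(P(a)-Q(a))^2}{Q(a)}\leq\frac{|P-Q|^2}{2Q_{\mathrm{min}}}.
\]

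The second step is to pull out the denominators using $Q(a)\geq Q_{\mathrm{min}}$, which gives
\[
\sum_{a}\frac{(P(a)-Q(a))^2}{Q(a)}\leq\frac{1}{Q_{\mathrm{min}}}\sum_{a}d_a^2,\qquad d_a:=P(a)-Q(a).
\]
What remains is the purely combinatorial bound $\sum_a d_a^2\leq |P-Q|^2/2$.

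I expect this last inequality to be the only step needing care, since the naive estimate $\sum_a d_a^2\leq(\sum_a|d_a|)^2$ loses the factor $1/2$. The fix is to use $\sum_a d_a=0$. Split $\mathcal{A}$ into $\mathcal{A}_+=\{a:d_a>0\}$ and $\mathcal{A}_-=\{a:d_a<0\}$. Then
\[
\sum_{a\in\mathcal{A}_+}d_a=\sum_{a\in\mathcal{A}_-}|d_a|=\frac{|P-Q|}{2}=:s.
\]
On each part the terms share a sign, so the sum of squares is at most the square of the sum: $\sum_{\mathcal{A}_+}d_a^2\leq s^2$ and $\sum_{\mathcal{A}_-}d_a^2\leq s^2$. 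Adding these gives $\sum_a d_a^2\leq 2s^2=|P-Q|^2/2$, which completes the chain and yields the stated inequality.
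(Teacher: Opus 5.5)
Your proof is correct and complete. The paper gives no argument for this statement: it is imported as a black box from Sason and Verd\'u~\cite{pinsker}. Your write-up is therefore a self-contained replacement for that citation rather than an alternative to an in-paper proof.

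The route you take is the standard one for this bound. Jensen's inequality gives $D(P\|Q)\le\log(1+\chi^2(P\|Q))$, which is the same as $D_1\le D_2$ for R\'enyi divergences. Then $\chi^2(P\|Q)\le Q_{\min}^{-1}\sum_a d_a^2$, and finally $\sum_a d_a^2\le|P-Q|^2/2$. Each step checks out, and your sign-splitting argument correctly recovers the factor $1/2$. An equivalent one-line version is $\sum_a d_a^2\le\max_a|d_a|\sum_a|d_a|$ together with $\max_a|d_a|\le|P-Q|/2$, which also follows from $\sum_a d_a=0$.

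The only cosmetic point is that Fact~\ref{fact:jensen} is stated for $g:\mathbb{R}\to\mathbb{R}$, while you apply it to $\log$ on $(0,\infty)$. This is harmless, since $P(X)/Q(X)>0$ $P$-almost surely and $\mathcal{A}$ is finite.

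A side benefit of your route is that the intermediate bound $D(P\|Q)\le\log(1+\chi^2(P\|Q))$ is sharper than the stated Fact. In Lemma~\ref{lemma:larger2GLRBernoulli} it would give $D(\hat{\mu}\|\mu_0)\le(\hat{\mu}-\mu_0)^2/(\mu_0(1-\mu_0))$ directly. That improves the paper's $2(\hat{\mu}-\mu_0)^2/\alpha$ by the factor $2\max\{\mu_0,1-\mu_0\}\in[1,2)$.
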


\subsection{Distribution-Independent Proofs}
We first begin with a bound on the probability of $N_t^{(1)}$ being overly small.
\begin{lem} \label{lemma:smallN}
    Apply~Decaying-$\epsilon$-FOCuS on \(M>1\) streams, with a change-point $\nu=0$ in stream~$1$. For $t\geq M^3$,
    \begin{align*}
        \mathbb{P}_{M,0}\left(N_t^{(1)}\leq\frac{t^{2/3}}{2}\right)\leq \exp\left(-\frac{9t^{1/3}}{128}\right).
    \end{align*}
\end{lem}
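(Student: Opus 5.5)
Since $\hat{\nu}_{s-1}\geq 0$, we have $s-\hat{\nu}_{s-1}\leq s$, so the exploration probability satisfies $\epsilon_s\geq\min\{1,M s^{-1/3}\}$. On an exploration round, stream $1$ is chosen with probability $1/M$. Hence, conditionally on $\mathcal{F}_{s-1}$,
\[
\mathbb{P}_{M,0}(G_s=1,A_s=1\mid\mathcal{F}_{s-1})\geq \frac{\epsilon_s}{M}\geq \min\{1/M,\,s^{-1/3}\}=:p_s .
\]
The plan is to lower bound $N_t^{(1)}\geq\sum_{s=1}^t \mathbf{1}\{G_s=1,A_s=1\}$, whose conditional means are at least $p_s$. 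We then apply the Azuma--Hoeffding inequality (Fact~\ref{fact:AzumaHoeffding}) to the martingale difference sequence
\[
Y_s:=\mathbb{P}_{M,0}(G_s=1,A_s=1\mid\mathcal{F}_{s-1})-\mathbf{1}\{G_s=1,A_s=1\}.
\]
This sequence is bounded in an interval of length $1$. The event $G_s=1,A_s=1$ is $\mathcal{F}_s$-measurable, and by construction the randomization $G_s,A_s$ is fresh given $\mathcal{F}_{s-1}$.

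Next we compute the drift. Using $p_s\geq s^{-1/3}$ for $s\geq M^3$ and $p_s=1/M$ for $s<M^3$, we get
\[
\sum_{s=1}^t p_s\geq \sum_{s=1}^t s^{-1/3}\cdot\min\{1, s^{1/3}/M\}.
\]
It is simpler to use the comparison $\sum_{s=1}^t s^{-1/3}\geq \int_1^{t+1}x^{-1/3}dx\geq \tfrac32\big((t+1)^{2/3}-1\big)$ and to account for the first $M^3$ terms, which lose at most $\sum_{s<M^3}(s^{-1/3}-1/M)\leq \tfrac32 M^2$. Because $t\geq M^3$, this loss is at most $\tfrac32 t^{2/3}$, which is too crude. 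So instead we will aim for $\sum_{s\le t}p_s\geq \tfrac{3}{2}t^{2/3}-\tfrac12 M^2\geq t^{2/3}\cdot c$ for some explicit $c$. The exact integral over $[M^3,t]$ gives $\tfrac32(t^{2/3}-M^2)$, and adding the $M^3\cdot\frac1M=M^2$ contribution from the early rounds yields
\[
\sum_{s=1}^t p_s\geq \tfrac32 t^{2/3}-\tfrac12M^2\geq t^{2/3}.
\]
This is the main bookkeeping step, and it is where the constants must be tuned to match $9/128$. The target constant suggests a deviation of order $\tfrac38 t^{2/3}$ or similar. I expect the authors' drift lower bound to be something like $\tfrac{7}{8}t^{2/3}$ or $\tfrac{3}{4}t^{2/3}$ after crude bounding; for instance, $\tfrac{2^{-1/3}\cdot3}{2}$-type constants come from $\int$ comparisons.

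Finally, on the event $N_t^{(1)}\leq t^{2/3}/2$ we have
\[
\sum_{s\le t}Y_s\geq \sum_{s\le t} p_s-\tfrac12 t^{2/3}\geq a\,t^{2/3},
\]
where $a$ is the drift constant minus $1/2$. Azuma--Hoeffding with $n=t$ and unit ranges then gives
\[
\mathbb{P}\leq\exp\!\big(-2a^2t^{4/3}/t\big)=\exp(-2a^2t^{1/3}).
\]
Matching this to $9/128$ requires $a=3/16$, so the drift must be shown to be at least $\tfrac{11}{16}t^{2/3}$. This is comfortably below the $t^{2/3}$ derived above, so any reasonable crude estimate suffices. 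The main obstacle is handling the $\min\{1,\cdot\}$ regime and the integral comparison cleanly for all $t\geq M^3$ (and all $M>1$) so that the drift constant exceeds $11/16$. Verifying the martingale-difference property with respect to $\mathcal{F}_s$ is routine.
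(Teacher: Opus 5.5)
Your proposal is correct and follows essentially the same route as the paper: lower-bound $N_t^{(1)}$ by the number of exploration rounds that pick stream~1, apply Azuma--Hoeffding to $Y_s=\mathbb{P}_{M,0}(G_s=1,A_s=1\mid\mathcal{F}_{s-1})-\mathbf{1}\{G_s=1,A_s=1\}$, and lower-bound the drift by splitting $\sum_s\min\{1/M,s^{-1/3}\}$ at $s=M^3$. The paper gets a drift of $\tfrac78t^{2/3}$ and uses the constant range $[-1,1]$, which is exactly where $9/128=2(3/8)^2/4$ comes from. Two small slips, neither of which breaks anything. First, your ``length-$1$'' range is the $\mathcal{F}_{s-1}$-measurable interval $[q_s-1,q_s]$, so it needs the predictable-bounds version of Azuma--Hoeffding rather than Fact~\ref{fact:AzumaHoeffding} as stated with constants. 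Second, your drift bound $\tfrac32t^{2/3}-\tfrac12M^2$ overcounts by $\tfrac1M$, because the $s=M^3$ term appears in both pieces; the correct bound is $\tfrac32t^{2/3}-\tfrac12M^2-\tfrac1M\geq t^{2/3}-\tfrac1M\geq\tfrac78t^{2/3}$ for $M\geq2$. That corrected drift already yields the claimed bound even with the constant range $[-1,1]$.
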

\begin{proof}
    For each $i\in\mathbb{N}$, let
    \[
        Y_i:=\mathbb{P}_{M,0}\left(A_i=1,G_i=1|\mathcal{F}_{i-1}\right)-\mathbf{1}\{A_i=1,G_i=1\}=\min\left\{\frac{1}{M},\frac{1}{(i-\hat{\nu}_{i-1})^{1/3}}\right\}-\mathbf{1}\{A_i=1,G_i=1\}.
    \]
    $(Y_i)_{i\in\mathbb{N}}$ is a martingale difference sequence adapted to $(\mathcal{F}_{i})_{i\geq 0}$ satisfying $Y_i\in[-1,1]$ since $\mathbb{E}_{M,0}[Y_i|\mathcal{F}_{i-1}]=0$. From the Azuma-Hoeffding inequality (Fact \ref{fact:AzumaHoeffding}), we have for $t\geq M^3$,   
    \begin{align}
       \mathbb{P}_{M,0}\left(N_t^{(1)}\leq\frac{t^{2/3}}{2}\right)&\leq\mathbb{P}_{M,0}\left(\sum_{i=1}^t Y_i\geq\sum_{i=1}^t\min\left\{\frac{1}{M},\frac{1}{(i-\hat{\nu}_{i-1})^{1/3}}\right\}-\frac{t^{2/3}}{2}\right) \label{eq:explorationseries} \\
       &\leq\mathbb{P}_{M,0}\left(\sum_{i=1}^t Y_i\geq \frac{3t^{2/3}}{8}\right) \label{eq:preazumabound} \\
       &\leq\exp\left(-\frac{9t^{1/3}}{128}\right). \label{eq:azumabound}
    \end{align}
    \eqref{eq:explorationseries} follows since 
    \[
        N_t^{(1)}=\sum_{i=1}^t\mathbf{1}\{A_i=1\}\geq\sum_{i=1}^t\mathbf{1}\{A_i=1,G_i=1\}.
    \]
    \eqref{eq:preazumabound} follows since, given $t\geq M^3$ with $M\in\{2,3,\dots\}$,
    \[
        \sum_{i=1}^t\min\left\{\frac{1}{M},\frac{1}{i^{1/3}}\right\}-\frac{t^{2/3}}{2} \geq M^2-\frac{1}{M}+\int_{M^3}^t\frac{1}{x^{1/3}}dx-\frac{t^{2/3}}{2}\geq \frac{3t^{2/3}}{8}.
    \]
\end{proof}

We now introduce three supporting lemmas which are necessary for the proof of Proposition \ref{prop:changepointmax}.
\begin{lem} \label{lemma:almostsureequivalence}
    Consider applying~Decaying-$\epsilon$-FOCuS on \(M>1\) streams, with a change-point $\nu=0$ in stream~$1$. Given stream 1's local estimates $\left(\hat{\nu}_t^{(1)}\right)_{t\in\mathbb{N}_0}$ and the corresponding stream-1 observation indices $\left(k_t^{(1)}\right)_{t\in\mathbb{N}_0}$, let
    \[
        k_\mathrm{sup}^{(1)}:=\sup_{t\geq0} k_{t}^{(1)}. 
    \]
    Then, under $\mathbb{P}_{M,0}$,
    \[
        k_\mathrm{sup}^{(1)}\stackrel{\mathrm{a.s.}}{=}\sup_{n\geq 0}\min\left\{\underset{{0\leq k\leq n}}{\operatorname{argmax}}\sup_{\theta\in\mathbf{\Theta}}\sum_{i=k+1}^{n}\log\left(\frac{f_\theta\left(X_i^{(1)}\right)}{f_0\left(X_i^{(1)}\right)}\right)\right\}.
    \]
\end{lem}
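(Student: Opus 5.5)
The plan is to show that stream 1's local estimate, sampled along calendar time, visits every value of the single-stream argmax sequence indexed by the number of stream-1 observations, and nothing else. Write $g_n:=\min\{\operatorname{argmax}_{0\le k\le n}\sup_{\theta}\sum_{i=k+1}^n\log(f_\theta(X_i^{(1)})/f_0(X_i^{(1)}))\}$, which is a deterministic function of the first $n$ stream-1 observations $X_1^{(1)},\dots,X_n^{(1)}$. We must show that $\sup_{t\ge0}k_t^{(1)}=\sup_{n\ge0}g_n$ almost surely.

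\textbf{Step 1: $k_t^{(1)}$ is a value of $g$.} By the definition of the algorithm, $k_t^{(1)}$ is recomputed only when $A_t=1$. In that case it is exactly the smallest maximizer of the GLR objective over the first $N_t^{(1)}$ stream-1 observations. For both the Gaussian and the Bernoulli statistics, the displayed closed forms equal the sup over $\theta$ of the summed log-likelihood ratio, so $k_t^{(1)}=g_{N_t^{(1)}}$. When $A_t\neq1$, the index is carried over, so $k_t^{(1)}=k_{t-1}^{(1)}$. Since $N_t^{(1)}$ is unchanged in that case, the relation $k_t^{(1)}=g_{N_t^{(1)}}$ continues to hold. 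Initially $k_0^{(1)}=0=g_0$. By induction, $k_t^{(1)}=g_{N_t^{(1)}}$ for all $t$. This gives $\{k_t^{(1)}:t\ge0\}=\{g_n:n\in\{N_t^{(1)}:t\ge0\}\}$.

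\textbf{Step 2: $N_t^{(1)}$ hits every integer.} $N_t^{(1)}$ is nondecreasing and increments by at most one per step. Its range is therefore $\{0,1,\dots,\sup_tN_t^{(1)}\}$, so it suffices to show $N_t^{(1)}\to\infty$ almost surely under $\mathbb P_{M,0}$. One route uses Lemma \ref{lemma:smallN}: the bound $\sum_t\exp(-9t^{1/3}/128)<\infty$ together with Borel--Cantelli (Fact \ref{fact:BorelCantelli}) gives $N_t^{(1)}>t^{2/3}/2$ eventually, almost surely. A more direct route uses Fact \ref{fact:ConditionalBorelCantelli} applied to $B_t=\{A_t=1,G_t=1\}$. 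Since $\hat\nu_{t-1}\ge0$, we have $\mathbb P(B_t\mid\mathcal F_{t-1})\ge\min\{1/M,t^{-1/3}\}$, whose sum diverges, so $B_t$ occurs infinitely often. The case $M=1$ is trivial, since then $A_t=1$ always.

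\textbf{Step 3: combine.} By Steps 1 and 2, on an event of probability one the two index sets coincide, so the two suprema, possibly infinite, agree. The main point requiring care is the identification in Step 1, specifically that the carried-over value and the algorithm's tie-breaking by the smallest index match the $\min\operatorname{argmax}$ in $g_n$. A second subtlety is that the stream-1 observations $X_i^{(1)}$ are indexed by stream-1 sample count, so the random sampling times do not alter which values appear. I expect Step 2's almost-sure divergence of $N_t^{(1)}$ to be the only genuinely probabilistic step, and the conditional Borel--Cantelli route handles it cleanly.
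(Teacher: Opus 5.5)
Your proposal is correct and follows essentially the paper's proof. The paper also applies the conditional second Borel--Cantelli lemma to $\{A_t=1,G_t=1\}$ with the lower bound $\min\{1/M,t^{-1/3}\}$ to get $N_t^{(1)}\to\infty$ a.s., hence $\{N_t^{(1)}:t\ge0\}=\mathbb{N}_0$, and then equates the two suprema. Your Step 1 identification $k_t^{(1)}=g_{N_t^{(1)}}$ and your unit-increment remark only spell out what the paper uses implicitly.
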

\begin{proof}
    From Fact \ref{fact:ConditionalBorelCantelli}, stream 1 is sampled through random exploration infinitely many times as $t\to\infty$ since
    \begin{align}
        \sum_{t=1}^\infty\mathbb{P}\left(A_t=1,G_t=1|\mathcal{F}_{t-1}\right)= \sum_{t=1}^\infty\min\left\{\frac{1}{M},\frac{1}{\left(t-\hat{\nu}_{t-1}\right)^{1/3}}\right\}\geq\sum_{t=1}^\infty\min\left\{\frac{1}{M},\frac{1}{t^{1/3}}\right\}=\infty. \label{eq:borelcantelliarg}
    \end{align}
    Consequently, $N_t^{(1)}\to\infty$ almost surely, so $\{N_t^{(1)}:t\ge0\}=\mathbb{N}_0$ almost surely. Hence
    \begin{align}
        \sup_{t\geq0}\min\left\{\underset{{0\leq k\leq N_t^{(1)}}}{\operatorname{argmax}}\sup_{\theta\in\mathbf{\Theta}}\sum_{i=k+1}^{N_t^{(1)}}\log\left(\frac{f_\theta\left(X_i^{(1)}\right)}{f_0\left(X_i^{(1)}\right)}\right)\right\} \stackrel{\mathrm{a.s.}}{=}\sup_{n\geq 0}\min\left\{\underset{{0\leq k\leq n}}{\operatorname{argmax}}\sup_{\theta\in\mathbf{\Theta}}\sum_{i=k+1}^{n}\log\left(\frac{f_\theta\left(X_i^{(1)}\right)}{f_0\left(X_i^{(1)}\right)}\right)\right\}. 
    \end{align}
\end{proof}

\begin{lem} \label{lemma:calendartimeobservationtimeconversion}
    Consider the same conditions as Lemma \ref{lemma:almostsureequivalence}. Denote
    \[
        \hat{\nu}_\mathrm{sup}^{(1)}:=\sup_{t\geq0} \hat{\nu}_t^{(1)}. 
    \]
    Then, on the event $\left\{k_\mathrm{sup}^{(1)}<\infty\right\}$,
    \[
        \hat{\nu}_\mathrm{sup}^{(1)}=\inf\left\{n\geq 0:N_n^{(1)}=k_\mathrm{sup}^{(1)}\right\}.
    \]
    Moreover, $\left\{k_\mathrm{sup}^{(1)}<\infty\right\}=\left\{\hat{\nu}_{\mathrm{sup}}^{(1)}<\infty\right\}$.
\end{lem}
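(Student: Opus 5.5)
The plan is to work pathwise, on a fixed sample path, using the definition $\hat{\nu}_t^{(1)}=\inf\{n\geq0:N_n^{(1)}=k_t^{(1)}\}$ together with two elementary properties of the counting process $n\mapsto N_n^{(1)}$. It is nondecreasing, it increases by at most one per step, and $N_0^{(1)}=0$. Write $\eta(k):=\inf\{n\geq0:N_n^{(1)}=k\}$. Then $\hat{\nu}_t^{(1)}=\eta(k_t^{(1)})$ whenever stream 1 has just been sampled. When it has not, $\hat{\nu}_t^{(1)}=\hat{\nu}_{t-1}^{(1)}$; we also set $k_t^{(1)}=k_{t-1}^{(1)}$ in that case, which is consistent with the statistic being unchanged. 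Since $k_t^{(1)}\leq N_t^{(1)}$ and $N$ moves in unit steps, $\eta(k_t^{(1)})$ is a finite calendar time. So the identity $\hat{\nu}_t^{(1)}=\eta(k_t^{(1)})$ holds for every $t$: by induction on $t$, the base case $\hat{\nu}_0^{(1)}=0=\eta(0)$ holds, and both sides are frozen when $A_t\neq1$.

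The key monotonicity fact is that $\eta$ is nondecreasing on the set of values $N^{(1)}$ actually attains. Indeed, if $k<k'$ then the first time $N^{(1)}$ reaches $k'$ comes after the first time it reaches $k$. It is moreover strictly increasing, because $N^{(1)}$ moves by unit increments.

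First suppose $k_{\mathrm{sup}}^{(1)}<\infty$. The $k_t^{(1)}$ are integers bounded by $k_{\mathrm{sup}}^{(1)}$, so the supremum is attained at some $t^*$. Monotonicity then gives $\hat{\nu}_t^{(1)}=\eta(k_t^{(1)})\leq\eta(k_{\mathrm{sup}}^{(1)})$ for all $t$, with equality at $t^*$. Hence $\hat{\nu}_{\mathrm{sup}}^{(1)}=\eta(k_{\mathrm{sup}}^{(1)})$, which is the claimed formula. This also shows that $\{k_{\mathrm{sup}}^{(1)}<\infty\}\subseteq\{\hat{\nu}_{\mathrm{sup}}^{(1)}<\infty\}$.

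For the reverse inclusion, suppose $k_{\mathrm{sup}}^{(1)}=\infty$. Then there are times $t_j$ with $k_{t_j}^{(1)}\to\infty$. Strict monotonicity of $\eta$ on integers gives $\eta(k)\geq k$, since $N$ gains at most one per step. Therefore $\hat{\nu}_{t_j}^{(1)}\geq k_{t_j}^{(1)}\to\infty$, so $\hat{\nu}_{\mathrm{sup}}^{(1)}=\infty$, and the two events coincide.

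The only delicate point is the convention for $k_t^{(1)}$ at times when stream 1 is not sampled, since the paper defines $k_t^{(m)}$ only at update times. I would state explicitly that $k_t^{(1)}$ is carried forward, so that the identity $\hat{\nu}_t^{(1)}=\eta(k_t^{(1)})$ holds uniformly in $t$. This does not change $k_{\mathrm{sup}}^{(1)}$, because it adds no new values to the set over which the supremum is taken. Everything else is deterministic bookkeeping, and Lemma \ref{lemma:almostsureequivalence} is not needed for this statement.
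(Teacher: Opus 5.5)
Your proof is correct and is essentially the paper's argument: the integer-valued $k_t^{(1)}$ attain their finite supremum, and monotonicity of the hitting-time map $k\mapsto\inf\{n\geq 0:N_n^{(1)}=k\}$ converts this into the stated identity for $\hat{\nu}_{\mathrm{sup}}^{(1)}$. Your bound $\inf\{n\geq 0:N_n^{(1)}=k\}\geq k$, hence $\hat{\nu}_t^{(1)}\geq k_t^{(1)}$, justifies the reverse inclusion more cleanly than the paper's one-line ``the reverse is true as well.'' The carry-forward convention you flag is in fact automatic, since the paper's formula for $k_t^{(1)}$ depends only on $N_t^{(1)}$ and stream~1's observations.
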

\begin{proof}
    On the event $\left\{k_\mathrm{sup}^{(1)}<\infty\right\}$, $k_t^{(1)}=k_\mathrm{sup}^{(1)}$ for some $t\in\mathbb{N}_0$, implying
    \begin{align}
        \sup_{t\geq 0}\hat{\nu}_t^{(1)}=\sup_{t\geq 0}\inf\left\{n\geq 0:N_n^{(1)}=k_t^{(1)}\right\}\geq\inf\left\{n\geq 0:N_n^{(1)}=k_\mathrm{sup}^{(1)}\right\}. \label{eq:ltesup}
    \end{align}
    The mapping $k\mapsto \inf\left\{n\geq 0:N_n^{(1)}=k\right\}$ is monotonically non-decreasing with respect to $k$. 
    We then have
    \begin{align}
        \sup_{t\geq 0}\hat{\nu}_t^{(1)}=\sup_{t\geq 0}\inf\left\{n\geq 0:N_n^{(1)}=k_t^{(1)}\right\}\leq\inf\left\{n\geq 0:N_n^{(1)}=k_\mathrm{sup}^{(1)}\right\}. \label{eq:gtesup}
    \end{align}
    Therefore
    \begin{align}
        \hat{\nu}_\mathrm{sup}^{(1)}=\inf\left\{n\geq 0:N_n^{(1)}=k_\mathrm{sup}^{(1)}\right\}. \label{eq:nusupksup}
    \end{align}
    $\left\{k_\mathrm{sup}^{(1)}<\infty\right\}=\left\{\hat{\nu}_\mathrm{sup}^{(1)}<\infty\right\}$ since each observation index $k_t^{(1)}$ has a well-defined corresponding time step, $\hat{\nu}_t^{(1)}$, at which the observation occurs. On the event $\left\{k_\mathrm{sup}^{(1)}<\infty\right\}$, $k_t^{(1)}=k_\mathrm{sup}^{(1)}$ for some $t\in\mathbb{N}_0$. The preceding monotonicity argument then shows that the corresponding $\hat{\nu}_t^{(1)}$ attains $\hat{\nu}_\mathrm{sup}^{(1)}$ and is therefore finite. The reverse is true as well. Therefore $\left\{k_\mathrm{sup}^{(1)}<\infty\right\}=\left\{\hat{\nu}_\mathrm{sup}^{(1)}<\infty\right\}$.
\end{proof}

\begin{lem} \label{lemma:esequenceconversion}
    Denote the calendar time by which stream 1 has been explored $k$ times as
    \[
        g_k=\inf\left\{n\geq 0:\sum_{t=1}^n\mathbf{1}\{A_t=1,G_t=1\}=k\right\}.
    \]
    Denote the sequence of exploration decisions as $(E_i)_{i\in\mathbb{N}}$, where, if $s_i$ denotes the time step of the $i^{th}$ occurrence of $\{G_t=1\}$, $A_{s_i}=E_i$. Then, for $k\in\mathbb{N}_0$, on the event $\left\{g_k<\infty\right\}$,
    \[
        k+\sum_{t=1}^{g_k}\mathbf{1}\left\{A_t\neq 1,G_t=1\right\}=\inf\left\{n\geq0:\sum_{i=1}^n\mathbf{1}\{E_i=1\}=k\right\}.
    \]
\end{lem}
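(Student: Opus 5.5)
The identity is a counting statement about the subsequence of exploration rounds. The plan is to reindex the sum over calendar time $t\le g_k$ as a sum over the exploration rounds $s_1<s_2<\cdots$ that occur by time $g_k$.

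First, fix an outcome in $\{g_k<\infty\}$ and treat $k=0$ separately. If $k=0$, then $g_0=0$, so the left side is $0$; the right side is $\inf\{n\ge0:\sum_{i=1}^n\mathbf{1}\{E_i=1\}=0\}=0$, and the identity holds.

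For $k\ge1$:
\begin{enumerate}
\item Let $J:=\sum_{t=1}^{g_k}\mathbf{1}\{G_t=1\}$ be the number of exploration rounds up to and including $g_k$. By the definition of $s_i$, we have $\{t\le g_k: G_t=1\}=\{s_1,\dots,s_J\}$.
\item Because $A_{s_i}=E_i$, the counts split as
\[
\sum_{t=1}^{g_k}\mathbf{1}\{A_t=1,G_t=1\}=\sum_{i=1}^{J}\mathbf{1}\{E_i=1\},\qquad \sum_{t=1}^{g_k}\mathbf{1}\{A_t\neq1,G_t=1\}=\sum_{i=1}^{J}\mathbf{1}\{E_i\neq1\}.
\]
\item By the definition of $g_k$ as an infimum, the first count equals $k$, so the left side of the lemma equals $J$.
\item It remains to show $J=\inf\{n:\sum_{i=1}^n\mathbf{1}\{E_i=1\}=k\}$. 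We have $\sum_{i\le J}\mathbf{1}\{E_i=1\}=k$, so the infimum is at most $J$.
\item For the reverse inequality, note that $g_k$ is the first time the explored-stream-1 count reaches $k$. Since this count increases by exactly one at each increment, $A_{g_k}=1$ and $G_{g_k}=1$. Hence $g_k=s_J$ and $E_J=1$.
\item Then $\sum_{i\le J-1}\mathbf{1}\{E_i=1\}$ equals the count at time $g_k-1$, which is $k-1<k$. Since the partial sums are nondecreasing, no $n<J$ attains $k$, and the infimum equals $J$.
\end{enumerate}

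The main obstacle is step 5, showing that $g_k$ is itself an exploration round with $A_{g_k}=1$, so that $g_k=s_J$. This follows because the counting process is integer-valued with unit jumps and $g_k$ is its first hitting time of $k$. I would state this explicitly, using that the count is $k-1$ at time $g_k-1$ by minimality. Everything else is bookkeeping that matches calendar time with exploration index.
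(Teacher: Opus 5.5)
Your proposal is correct and follows essentially the same route as the paper: both rewrite the left side as the number $J$ (the paper's $K$) of exploration rounds up to $g_k$, identify the stream-1 count among $E_1,\dots,E_J$ with $k$, and use the minimality of $g_k$ to rule out any $n<J$. Your step 5 derives $g_k=s_J$ from the unit jumps of the counting process, which makes explicit a point the paper simply asserts when it writes $s_K=g_k$.
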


\begin{proof}
    The case $k=0$ holds trivially. For $k\in\mathbb{N}$, given $g_k<\infty$, 
    \[
        k+\sum_{t=1}^{g_k}\mathbf{1}\{A_t\neq 1,G_t=1\}=\sum_{t=1}^{g_k}\mathbf{1}\{A_t=1,G_t=1\}+\sum_{t=1}^{g_k}\mathbf{1}\{A_t\neq 1,G_t=1\}=\sum_{t=1}^{g_k}\mathbf{1}\{G_t=1\}.
    \]
    Let
    \[
        K=\sum_{t=1}^{g_{k}}\mathbf{1}\{G_t=1\}.
    \]
    Since $g_k<\infty$, the first $K$ exploration times
    $s_1<\dots<s_K=g_k$ are well-defined. By the definition of $E_i$, since there are $K$ exploration trials up to $g_k$,
    \[
        k=\sum_{t=1}^{g_k}\mathbf{1}\{A_t=1,G_t=1\}=\sum_{i=1}^K\mathbf{1}\{A_{s_i}=1\}=\sum_{i=1}^K\mathbf{1}\{E_i=1\}.
    \]
    Moreover, for every $n<K$, we have $s_n<g_k$, and hence
    \[
        \sum_{i=1}^n\mathbf{1}\{E_i=1\}
        <
        k,
    \]
    since $g_k$ is the first time at which stream 1 has been explored $k$ times. Thus, $K$ is the smallest number of exploration trials required to
    explore stream 1 $k$ times. Therefore
    \[
        K=
        \inf\left\{n\geq 0:\sum_{i=1}^n\mathbf{1}\{E_i=1\}=k\right\}.
    \]
\end{proof}

We now prove Proposition \ref{prop:changepointmax}, which bounds the expected worst-case change-point estimate.
\begin{proof}[Proof of Proposition \ref{prop:changepointmax}]
    Let $k_\mathrm{sup}^{(1)}$, $\hat{\nu}_\mathrm{sup}^{(1)}$, and $g_k$ be defined as in Lemmas \ref{lemma:almostsureequivalence}-\ref{lemma:esequenceconversion}, respectively. Recall
    \[
        t_{\mathrm{stable}}=\inf\left\{t\in\mathbb{N}: \forall i \geq t,M_i=1\right\}.
    \]
    For $t\geq t_{\mathrm{stable}}$, $\hat{\nu}_t=\hat{\nu}_t^{(1)}$. Likewise, for $t>t_{\mathrm{stable}}$, $G_t=0$ implies $A_t=1$ and so $\mathbf{1}\{A_t\neq 1\}=\mathbf{1}\{A_t\neq 1,G_t=1\}$. Given $\left\{k_\mathrm{sup}^{(1)}<\infty\right\}$ and $\{g_k<\infty\}$ for all $k\in\mathbb{N}_0$ hold true,
    \begin{align}
        \sup_{t\geq 0}\hat{\nu}_t&\leq\max\left\{\hat{\nu}_\mathrm{sup}^{(1)},t_{\mathrm{stable}}\right\} \nonumber \\
        &=t_{\mathrm{stable}}+\sum_{t=1}^\infty\mathbf{1}\left\{t_{\mathrm{stable}}<t\leq \hat{\nu}_{\mathrm{sup}}^{(1)}\right\}\left(\mathbf{1}\{A_t=1\}+\mathbf{1}\{A_t\neq 1,G_t=1\}\right) \nonumber \\
        &\leq t_{\mathrm{stable}}+k_\mathrm{sup}^{(1)}+\sum_{t=1}^{g_{k_\mathrm{sup}^{(1)}}}\mathbf{1}\left\{A_t\neq1,G_t=1\right\} \label{eq:convergenceboundaline5} \\
        &=t_\mathrm{stable}+\sum_{k=0}^\infty \inf\left\{n\geq0:\sum_{i=1}^n\mathbf{1}\{E_i=1\}=k\right\}\mathbf{1}\left\{k_\mathrm{sup}^{(1)}=k\right\}. \label{eq:finalconversioneseq}
    \end{align}
    \eqref{eq:convergenceboundaline5} follows since, from Lemma \ref{lemma:calendartimeobservationtimeconversion},  
    \begin{align}
        \hat{\nu}_{\mathrm{sup}}^{(1)}=\inf\left\{n\geq0:\sum_{t=1}^n\mathbf{1}\left\{A_t=1\right\}=k_{\mathrm{sup}}^{(1)}\right\}\leq\inf\left\{n\geq0:\sum_{t=1}^n\mathbf{1}\left\{A_t=1,G_t=1\right\}=k_{\mathrm{sup}}^{(1)}\right\}= g_{k_\mathrm{sup}^{(1)}}. \nonumber 
    \end{align}
    \eqref{eq:finalconversioneseq} follows from Lemma \ref{lemma:esequenceconversion}. By Lemma \ref{lemma:almostsureequivalence}, $k_\mathrm{sup}^{(1)}$ has the same distribution regardless of the value of $M$.  It thus has the same distribution as $\sup_{t\geq 0}\hat{\nu}_t$ under $\mathbb{P}_{1,0}$ since the calendar time indices and observation time indices are the same. Henceforth we assume $k_\mathrm{sup}^{(1)}<\infty$ almost surely since otherwise the expected value is infinite and the result follows trivially.  For all $k\in\mathbb{N}_0$, we have $g_k<\infty$ almost surely, since \eqref{eq:borelcantelliarg} implies $\{A_t= 1,G_t=1\}$ occurs infinitely often almost surely. Continuing from \eqref{eq:finalconversioneseq},
    \begin{align}
        \mathbb{E}_{M,0}\left[\sup_{t\geq 0}\hat{\nu}_t\right]&\leq\mathbb{E}_{M,0}\left[t_{\mathrm{stable}}\right]+\sum_{k=0}^\infty\mathbb{E}_{M,0}\left[\inf\left\{n\geq0:\sum_{i=1}^n\mathbf{1}\{E_i=1\}=k\right\}\mathbf{1}\left\{k_\mathrm{sup}^{(1)}=k\right\}\right]\nonumber \\
        &=\mathbb{E}_{M,0}\left[t_{\mathrm{stable}}\right]+\sum_{k=0}^\infty\mathbb{E}_{M,0}\left[\inf\left\{n\geq 0:\sum_{i=1}^n\mathbf{1}\{E_i=1\}=k\right\}\right]\mathbb{P}_{M,0}\left(k_{\mathrm{sup}}^{(1)}=k\right) \label{eq:independencestep} \\
        &=\mathbb{E}_{M,0}\left[t_{\mathrm{stable}}\right]+\sum_{k=0}^\infty\mathbb{P}_{M,0}\left(k_{\mathrm{sup}}^{(1)}=k\right)Mk \label{eq:negbinexpectedvalue} \\
        &=\mathbb{E}_{M,0}\left[t_{\mathrm{stable}}\right]+M\mathbb{E}_{1,0}\left[\sup_{t\geq 0}\hat{\nu}_t\right].
    \end{align}
    \eqref{eq:independencestep} follows since, from Lemma \ref{lemma:almostsureequivalence}, the sequence of exploration decisions $(E_i)_{i\in\mathbb{N}}\overset{\mathrm{i.i.d.}}{\sim}\mathrm{Uniform}([M])$ is independent of $k_\mathrm{sup}^{(1)}$, which only depends on stream 1's local i.i.d. observations. \eqref{eq:negbinexpectedvalue} follows since the expected number of independent $\mathrm{Uniform}([M])$-distributed trials until $\{E_i=1\}$ occurs $k$ times is $Mk$. 
\end{proof}
We now provide the proof for the ARL bound from Theorem \ref{thm:arl}. 
\begin{proof}[Proof of Theorem \ref{thm:arl}]
Given a threshold $\lambda>0$, for any $t\in\mathbb{N}$ we have
\begin{align}
    \mathbb{P}_{M,\infty}\left(\tau_\lambda> t\right)&=\mathbb{P}_{M,\infty}\left(\bigcap_{m=1}^M\left\{\max_{0\leq i\leq j\leq N_{t}^{(m)}}\sup_{\theta\in\Theta}\sum_{k=i+1}^j\log\left(\frac{f_\theta\left(X_k^{(m)}\right)}{f_0\left(X_k^{(m)}\right)}\right)<\lambda\right\}\right) \label{eq:nodetection2aline1} \\
    &\geq\mathbb{P}_{M,\infty}\left(\bigcap_{m=1}^M\left\{\max_{0\leq i\leq j\leq t}\sup_{\theta\in\Theta}\sum_{k=i+1}^j\log\left(\frac{f_\theta\left(X_k^{(m)}\right)}{f_0\left(X_k^{(m)}\right)}\right)<\lambda\right\}\right) \label{eq:nodetection2aline2} \\
    &=\mathbb{P}_{1,\infty}\left(\max_{0\leq m<M}\left\{\underset{{mt\leq i\leq j\leq (m+1)t}}{\operatorname{max}} \sup_{\theta\in\Theta}\sum_{k=i+1}^j\log\left(\frac{f_\theta\left(X_k\right)}{f_0\left(X_k\right)}\right)\right\}<\lambda\right) \label{eq:nodetection2aline3} \\
    &\geq\mathbb{P}_{1,\infty}\left(\underset{{0\leq i\leq j\leq Mt}}{\operatorname{max}} \sup_{\theta\in\Theta}\sum_{k=i+1}^j\log\left(\frac{f_\theta\left(X_k\right)}{f_0\left(X_k\right)}\right)<\lambda\right) \label{eq:nodetection2aline5} \\
    &=\mathbb{P}_{1,\infty}(\tau_\lambda>Mt). \label{eq:nodetection2a} 
\end{align}
\eqref{eq:nodetection2aline2} follows since $N_{t}^{(m)}\leq t$.  \eqref{eq:nodetection2aline3} follows since all observations are i.i.d.  $f_0$ under $\mathbb{P}_{M,\infty}$ and $\mathbb{P}_{1,\infty}$. From \eqref{eq:nodetection2a},
\begin{align}
    \mathbb{E}_{M,\infty}\left[\tau_\lambda\right]=\sum_{t=0}^\infty\mathbb{P}_{M,\infty}\left(\tau_\lambda>t\right)\geq \sum_{t=0}^\infty\mathbb{P}_{1,\infty}\left(\tau_\lambda> Mt\right)=\sum_{t=0}^\infty\mathbb{P}_{1,\infty}\left(\lceil\tau_\lambda/M\rceil> t\right)=\frac{\mathbb{E}_{1,\infty}[\tau_\lambda]}{M}. \nonumber
\end{align}
\end{proof}

We now provide the proof for the EDD bound from Theorem \ref{thm:edd}.
\begin{proof}[Proof of Theorem \ref{thm:edd}]
    We partition the EDD as
    \begin{align}
        \mathbb{E}_{M,0}\left[\tau_\lambda\right]  = \mathbb{E}_{M,0}\left[\sum_{t=1}^{\tau_\lambda}\mathbf{1}\{A_t=1\}\right]+\mathbb{E}_{M,0}\left[\sum_{t=1}^{\tau_\lambda}\mathbf{1}\left\{G_t=0,A_t\neq1\right\}\right]+\mathbb{E}_{M,0}\left[\sum_{t=1}^{\tau_\lambda}\mathbf{1}\left\{G_t=1,A_t\neq1\right\}\right]. \label{eq:partition}
    \end{align}
    Denote the earliest time stream 1's local GLR statistic exceeds $\lambda$ as
    $$\tau_\lambda^{(1)}:=\inf\left\{t>0:T_t^{(1)}\geq\lambda\right\}.$$
    Since the number of stream-1 observations required for $T_t^{(1)}$ to cross $\lambda$ is independent of $M$,
    \begin{align}
        \mathbb{E}_{M,0}\left[\sum_{t=1}^{\tau_\lambda}\mathbf{1}\{A_t=1\}\right]\leq\mathbb{E}_{M,0}\left[\sum_{t=1}^{\tau_\lambda^{(1)}}\mathbf{1}\{A_t=1\}\right]=\mathbb{E}_{1,0}\left[\tau_\lambda\right]. \label{eq:asymptoticbound1}
    \end{align}
    Since $\left\{G_t=0,A_t\neq1\right\}\subseteq \{M_{t-1}\neq 1\}$ and $\mathbb{P}_{M,0}(G_1=0)=0$,
    \begin{align}
        \mathbb{E}_{M,0}\left[\sum_{t=1}^{\tau_\lambda}\mathbf{1}\left\{G_t=0,A_t\neq1\right\}\right]\leq\sum_{t=2}^{\infty}\mathbb{P}_{M,0}\left(G_t=0,A_t\neq1\right) \leq\sum_{t=1}^\infty\mathbb{P}_{M,0}\left(M_t\neq 1\right). \label{eq:exploitnotstream1}
    \end{align}
    For each $n\in\mathbb{N}$, denote the truncated stopping time as
    \[
        \tau_{\lambda,n}:=\tau_\lambda\wedge n.
    \]
    Denote the worst-case change-point estimate up to $\tau_{\lambda,n}$ as
    \[\hat{\nu}_{\mathrm{max},n}:=\max_{0\leq t<{\tau_{\lambda,n}}}\hat{\nu}_t.
    \] 
    We bound the expected time spent exploring streams $m\neq 1$ up to $\tau_{\lambda,n}$ as
    \begin{align}
        \mathbb{E}_{M,0}\left[\sum_{t=1}^{\tau_{\lambda,n}}\mathbf{1}\left\{G_t=1,A_t\neq1\right\}\right]&= \sum_{t=1}^{\infty}\mathbb{E}_{M,0}\left[\mathbf{1}\{G_t=1,A_t\neq1,\tau_{\lambda,n}\geq t\}\right] \label{eq:stoppingseriesline1} \\
        &= \sum_{t=1}^{\infty}\mathbb{E}_{M,0}\left[\mathbb{P}_{M,0}\left(G_t=1,A_t\neq1,\tau_{\lambda,n}\geq t|\mathcal{F}_{t-1}\right)\right] \label{eq:stoppingseriesline2} \\
        &= \sum_{t=1}^{\infty}\mathbb{E}_{M,0}\left[\mathbb{P}_{M,0}\left(G_t=1,A_t\neq1|\mathcal{F}_{t-1}\right)\mathbf{1}\left\{\tau_{\lambda,n}\geq t\right\}\right] \label{eq:stoppingseriesline3} \\
        &=\frac{M-1}{M}\mathbb{E}_{M,0}\left[ \sum_{t=1}^{\tau_{\lambda,n}}\min\left\{1,\frac{M}{(t-\hat{\nu}_{t-1})^{1/3}}\right\}\right] \label{eq:stoppingseries} \\
        &\leq\frac{M-1}{M}\left(\mathbb{E}_{M,0}\left[ \hat{\nu}_{\mathrm{max},n}\right]+\mathbb{E}_{M,0}\left[ \sum_{t=\hat{\nu}_{\mathrm{max},n}+1}^{\tau_{\lambda,n}}\min\left\{1,\frac{M}{(t-\hat{\nu}_{\mathrm{max},n})^{1/3}}\right\}\right]\right) \label{eq:boundvmaxterm}\\
        &\leq\frac{M-1}{M}\left(\mathbb{E}_{M,0}\left[ \hat{\nu}_{\mathrm{max},n}\right]+\mathbb{E}_{M,0}\left[ \frac{3M}{2}\tau_{\lambda,n}^{2/3}\right]\right) \label{eq:sublinearexpline2} \\
        &\leq(M-1)\left(\frac{1}{M}\mathbb{E}_{M,0}\left[ \sup_{t\geq 0}\hat{\nu}_t\right]+\frac{3}{2}\mathbb{E}_{M,0}\left[\tau_{\lambda,n}\right]^{2/3}\right). \label{eq:sublinearexp}
    \end{align}
    \eqref{eq:stoppingseriesline3} follows from Fact \ref{fact:takeout} since $\left\{\tau_{\lambda,n}\geq t\right\}\in\mathcal{F}_{t-1}$. \eqref{eq:stoppingseries} follows since 
    $$\mathbb{P}_{M,0}\left(G_t=1,A_t\neq1|\mathcal{F}_{t-1}\right)=\min\left\{\frac{M-1}{M},\frac{M-1}{(t-\hat{\nu}_{t-1})^{1/3}}\right\}.$$
    \eqref{eq:sublinearexpline2} follows since $Mt^{-1/3}$ is a decreasing function of $t$, so it can be bounded using integration as
    \begin{align}
        \sum_{t=\hat{\nu}_{\mathrm{max},n}+1}^{\tau_{\lambda,n}}\min\left\{1,\frac{M}{(t-\hat{\nu}_{\mathrm{max},n})^{1/3}}\right\}\leq\sum_{t=1}^{\tau_{\lambda,n}-\hat{\nu}_{\mathrm{max},n}}\frac{M}{t^{1/3}}\leq\int_0^{\tau_{\lambda,n}-\hat{\nu}_{\mathrm{max},n}}Mt^{-1/3}dt\leq\frac{3M}{2}\tau_{\lambda,n}^{2/3}. \nonumber
    \end{align}
    \eqref{eq:sublinearexp} follows from Fact \ref{fact:jensen}. Since $\tau_{\lambda,n}\uparrow\tau_\lambda$, we apply the monotone convergence theorem to \eqref{eq:sublinearexp}, giving
    \begin{align}
        \mathbb{E}_{M,0}\left[\sum_{t=1}^{\tau_{\lambda}}\mathbf{1}\left\{G_t=1,A_t\neq1\right\}\right]&=\mathbb{E}_{M,0}\left[\lim_{n\to\infty}\sum_{t=1}^{\tau_{\lambda,n}}\mathbf{1}\left\{G_t=1,A_t\neq1\right\}\right] \nonumber \\
        &=\lim_{n\to\infty}\mathbb{E}_{M,0}\left[\sum_{t=1}^{\tau_{\lambda,n}}\mathbf{1}\left\{G_t=1,A_t\neq1\right\}\right] \nonumber \\
        &\leq \lim_{n\to\infty}(M-1)\left(\frac{1}{M}\mathbb{E}_{M,0}\left[ \sup_{t\geq 0}\hat{\nu}_t\right]+\frac{3}{2}\mathbb{E}_{M,0}\left[\tau_{\lambda,n}\right]^{2/3}\right) \nonumber \\
        &= (M-1)\left(\frac{1}{M}\mathbb{E}_{M,0}\left[ \sup_{t\geq 0}\hat{\nu}_t\right]+\frac{3}{2}\mathbb{E}_{M,0}\left[\tau_{\lambda}\right]^{2/3}\right). \nonumber 
    \end{align}
\end{proof}

\subsection{Gaussian Implementation Proofs}
The following lemmas are specific to Gaussian-Decaying-$\epsilon$-FOCuS and hold for 1-sub-Gaussian observations.
\begin{lem} \label{lemma:smallGLRbigNgaussian}
    Consider an agent using~Gaussian-Decaying-$\epsilon$-FOCuS on \(M>1\) 1-sub-Gaussian streams. Suppose a change-point $\nu=0$ exists in stream~$1$ shifting its mean from $\mu_0=0$ to $\mu_1\neq 0$. Then for $t\in\mathbb{N}$, 
    \begin{align}
        \mathbb{P}_{M,0}\left(T_t^{(1)}<\frac{\mu_1^2t^{2/3}}{8},N_t^{(1)}>\frac{t^{2/3}}{2}\right)\leq t\exp\left(-\frac{t^{2/3}\mu_1^2}{4}\left(1-\frac{1}{\sqrt{2}}\right)^2\right). \nonumber
    \end{align} 
\end{lem}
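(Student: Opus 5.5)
The plan is to lower-bound the GLR statistic $T_t^{(1)}$ by its $k=0$ term. This turns the event into a large-deviation event for the partial sum of stream 1's own observations. We then take a union bound over the possible values of $N_t^{(1)}$ and apply Fact \ref{fact:subgaussian}.

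\textbf{Reduction to a partial sum.} Write $n=N_t^{(1)}$ and $S_n=\sum_{i=1}^n X_i^{(1)}$. Taking $k=0$ in the definition of the Gaussian GLR statistic gives $T_t^{(1)}\geq S_n^2/(2n)$. Suppose $T_t^{(1)}<\mu_1^2t^{2/3}/8$ and $n>t^{2/3}/2$. Then
\[
|S_n|<\tfrac{|\mu_1|}{2}\sqrt{n\,t^{2/3}}<\tfrac{|\mu_1|}{2}\sqrt{2n^2}=\tfrac{|\mu_1|n}{\sqrt{2}},
\]
using $t^{2/3}<2n$. The mean of $S_n$ is $n\mu_1$, so this forces a deviation toward zero of size at least $n|\mu_1|(1-1/\sqrt{2})$. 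When $\mu_1>0$ this is $S_n-n\mu_1\leq -n\mu_1(1-1/\sqrt2)$, and the case $\mu_1<0$ is symmetric. Only a one-sided tail is therefore needed, which is why no factor $2$ appears in the stated bound.

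\textbf{Union bound over $n$.} The number of samples $n$ is random and depends on the adaptive sampling rule. However, the observation sequence $(X_i^{(1)})_{i\in\mathbb{N}}$ of stream 1 is i.i.d. with mean $\mu_1$ and 1-sub-Gaussian, whatever the sampling decisions are. This is the same decoupling used in Lemma \ref{lemma:almostsureequivalence}: the $i$-th sample of stream 1 is a fixed draw, and sampling only determines when it is revealed. Since $N_t^{(1)}\leq t$, the event in the statement is contained in
\[
\bigcup_{t^{2/3}/2<n\leq t}\Bigl\{\bigl|S_n-n\mu_1\bigr|\geq n|\mu_1|(1-1/\sqrt2)\text{ toward }0\Bigr\},
\]
where each event concerns only the first $n$ samples of stream 1. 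This union has at most $t$ terms.

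\textbf{Tail bound for each $n$.} A sum of $n$ independent 1-sub-Gaussian variables is $\sqrt n$-sub-Gaussian, since the moment generating functions multiply. Fact \ref{fact:subgaussian} then bounds the one-sided probability for each $n$ by
\[
\exp\!\left(-\frac{n\mu_1^2(1-1/\sqrt2)^2}{2}\right)\leq \exp\!\left(-\frac{t^{2/3}\mu_1^2(1-1/\sqrt2)^2}{4}\right),
\]
where the inequality uses $n>t^{2/3}/2$. Multiplying by the at most $t$ terms gives the claimed bound.

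\textbf{Main obstacle.} The only delicate point is justifying the union bound despite $N_t^{(1)}$ being random and dependent on the data. I would handle it by working on the product space of stream-indexed i.i.d. sequences, as above, so that each fixed-$n$ event has the unconditional i.i.d. tail probability. The rest is routine algebra.
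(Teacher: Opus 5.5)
Your proof is correct and follows essentially the same route as the paper: split on the value $n$ of $N_t^{(1)}$, lower-bound the GLR by its $k=0$ term, use $n>t^{2/3}/2$ to reduce to $|\hat{\mu}_{1:n}^{(1)}|<|\mu_1|/\sqrt{2}$, and apply the one-sided sub-Gaussian tail to each of at most $t$ terms. The differences are cosmetic: you apply the $k=0$ reduction before the union bound rather than after, and you state explicitly the decoupling of stream 1's observation sequence from the adaptive sampling, which the paper uses implicitly when it drops the constraint $N_t^{(1)}=n$.
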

\begin{proof}
    For $t\in\mathbb{N}$, we decompose $\left\{T_t^{(1)}<\frac{\mu_1^2t^{2/3}}{8},N_t^{(1)}>\frac{t^{2/3}}{2}\right\}$ based on $N_t^{(1)}$:
    \begin{align}
        \mathbb{P}_{M,0}\left(T_t^{(1)}<\frac{\mu_1^2t^{2/3}}{8},N_t^{(1)}>\frac{t^{2/3}}{2}\right)&=\mathbb{P}_{M,0}\left(\bigcup_{n=\lfloor t^{2/3}/2\rfloor+1}^t\left\{T_t^{(1)}<\frac{\mu_1^2t^{2/3}}{8},N_t^{(1)}=n\right\}\right) \nonumber \\
        &=\sum_{n=\lfloor t^{2/3}/2\rfloor+1}^t\mathbb{P}_{M,0}\left(\max_{0\leq k\leq N_t^{(1)}}\frac{\left(\sum_{i=k+1}^{N_t^{(1)}}X_i^{(1)}\right)^2}{2(N_t^{(1)}-k)}<\frac{\mu_1^2t^{2/3}}{8},N_t^{(1)}=n\right) \nonumber \\
        &\leq\sum_{n=\lfloor t^{2/3}/2\rfloor+1}^t\mathbb{P}_{M,0}\left(\max_{0\leq k\leq n}\frac{\left(\sum_{i=k+1}^{n}X_i^{(1)}\right)^2}{2(n-k)}<\frac{\mu_1^2t^{2/3}}{8}\right) \nonumber \\
        &\leq\sum_{n=\lfloor t^{2/3}/2\rfloor+1}^t\mathbb{P}_{M,0}\left(|\hat{\mu}_{1:n}^{(1)}|<\frac{|\mu_1|t^{1/3}}{2\sqrt{n}}\right) \nonumber \\
        &\leq\sum_{n=\lfloor t^{2/3}/2\rfloor+1}^t\mathbb{P}_{M,0}\left(|\hat{\mu}_{1:n}^{(1)}|<\frac{|\mu_1|}{\sqrt{2}}\right). \label{eq:unionbound1t1ltntgt} 
    \end{align}
    First consider the case $\mu_1>0$. Applying Fact \ref{fact:subgaussian}, for $n\geq\lfloor t^{2/3}/2\rfloor+1$,
    \begin{align}
        \mathbb{P}_{M,0}\left(|\hat{\mu}_{1:n}^{(1)}|<\frac{|\mu_1|}{\sqrt{2}}\right)&\leq\mathbb{P}_{M,0}\left(\hat{\mu}_{1:n}^{(1)}<\frac{\mu_1}{\sqrt{2}}\right) \nonumber \\ 
        &\leq\exp\left(-\frac{n\mu_1^2}{2}\left(1-\frac{1}{\sqrt{2}}\right)^2\right) \nonumber \\ 
        &\leq\exp\left(-\frac{t^{2/3}\mu_1^2}{4}\left(1-\frac{1}{\sqrt{2}}\right)^2\right). \label{eq:exponentialsubgaussianbound1}
    \end{align}
    The case $\mu_1<0$ follows by symmetry, using the upper-tail bound in Fact~\ref{fact:subgaussian}. 
    Substituting \eqref{eq:exponentialsubgaussianbound1} into \eqref{eq:unionbound1t1ltntgt} gives
    \begin{align}
        \mathbb{P}_{M,0}\left(T_t^{(1)}<\frac{\mu_1^2t^{2/3}}{8},N_t^{(1)}>\frac{t^{2/3}}{2}\right)\leq t\exp\left(-\frac{t^{2/3}\mu_1^2}{4}\left(1-\frac{1}{\sqrt{2}}\right)^2\right). \nonumber 
    \end{align}
\end{proof}

\begin{lem} \label{lemma:larger2GLRgaussian}
Consider an agent using~Gaussian-Decaying-$\epsilon$-FOCuS on \(M>1\) 1-sub-Gaussian streams. Suppose a change-point $\nu=0$ exists in stream~$1$ shifting its mean from $\mu_0=0$ to $\mu_1\neq 0$.  Then for $t\in\mathbb{N}$, 
\begin{align}
    \mathbb{P}_{M,0}\left(T_t^{(2)}\geq T_t^{(1)}, T_t^{(1)}\geq\frac{\mu_1^2t^{2/3}}{8}\right)\leq t(t+1)\exp\left(-\frac{\mu_1^2t^{2/3}}{8}\right). \nonumber
\end{align}  
\end{lem}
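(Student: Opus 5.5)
The event in question forces stream 2's GLR statistic to be at least $\mu_1^2 t^{2/3}/8$. Stream 2 is a pre-change (mean $0$) 1-sub-Gaussian stream, so this is a large-deviation event for a null GLR statistic. The plan is therefore to drop the condition involving stream 1 and bound
\[
\mathbb{P}_{M,0}\left(T_t^{(2)}\geq\frac{\mu_1^2t^{2/3}}{8}\right).
\]

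\textbf{Step 1: remove the random sample size.} Since $N_t^{(2)}\leq t$, the event $\{T_t^{(2)}\geq c\}$ with $c=\mu_1^2t^{2/3}/8$ is contained in
\[
\bigcup_{n=0}^{t}\ \bigcup_{k=0}^{n-1}\left\{\frac{\left(\sum_{i=k+1}^{n}X_i^{(2)}\right)^2}{2(n-k)}\geq c\right\}.
\]
The $n=0$ and $k=n$ terms give statistic $0<c$, so they contribute nothing. The subtle point is that the index $n$ is random and the choice of when stream 2 is sampled depends on past data. I handle this the same way as in Lemma~\ref{lemma:smallGLRbigNgaussian}: regard $X_1^{(2)},X_2^{(2)},\dots$ as a fixed i.i.d. sequence of stream-2 observations. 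The sampling rule only decides how many of them are revealed, and it never alters their values. Under this view the inclusion above holds pathwise. The union has at most $\sum_{n=1}^t n\leq t(t+1)$ pairs $(k,n)$, which produces the prefactor in the statement.

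\textbf{Step 2: bound each term.} For a fixed pair with $m=n-k\geq1$, the sum $S=\sum_{i=k+1}^{n}X_i^{(2)}$ is $\sqrt{m}$-sub-Gaussian with mean $0$, because independent 1-sub-Gaussian variables add variance proxies. The event in question is $|S|\geq\sqrt{2mc}$. By Fact~\ref{fact:subgaussian},
\[
\mathbb{P}\left(|S|\geq\sqrt{2mc}\right)\leq 2\exp\left(-\frac{2mc}{2m}\right)=2e^{-c}.
\]
This bound does not depend on $m$, which is the key feature of the Gaussian GLR. With $c=\mu_1^2t^{2/3}/8$ each term is at most $2\exp(-\mu_1^2t^{2/3}/8)$.

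\textbf{Step 3: collect the factors.} Summing over pairs gives at most $2\cdot\tfrac{t(t+1)}{2}\exp(-\mu_1^2t^{2/3}/8)$. The factor $2$ from the two-sided tail is absorbed because the number of pairs with $0\leq k<n\leq t$ is exactly $t(t+1)/2$. This yields the claimed bound $t(t+1)\exp(-\mu_1^2t^{2/3}/8)$.

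The main obstacle is Step 1: justifying rigorously that the adaptive, data-dependent sampling does not spoil the i.i.d. tail bound. I would make this precise with the standard stack-of-rewards model, in which each stream has its own pre-generated i.i.d. sequence. Under that model the union bound over deterministic $(k,n)$ with $n\leq t$ applies directly.
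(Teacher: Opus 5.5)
Your proposal is correct and follows the paper's proof essentially step for step: you drop the stream-1 condition, dominate $T_t^{(2)}$ by the maximum over all $0\le i<j\le t$ of the normalized stream-2 partial sums, union-bound over the $t(t+1)/2$ pairs, and apply the two-sided sub-Gaussian tail to get $2\exp(-\mu_1^2t^{2/3}/8)$ per pair. The only difference is that you explicitly justify, via the stack-of-rewards model, treating stream 2's observations as a fixed i.i.d. sequence despite the adaptive sampling, whereas the paper uses this implicitly.
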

\begin{proof} 
    For $t\in\mathbb{N}$, we can apply the union bound and concentration inequality from Fact \ref{fact:subgaussian} to get
    \begin{align}
        \mathbb{P}_{M,0}\left(T_t^{(2)}\geq T_t^{(1)},T_t^{(1)}\geq\frac{\mu_1^2t^{2/3}}{8}\right)&\leq\mathbb{P}_{M,0}\left(T_t^{(2)}\geq\frac{\mu_1^2t^{2/3}}{8}\right) \nonumber \\
        &\leq\mathbb{P}_{M,0}\left(\underset{{0\leq i<j\leq t}}{\operatorname{max}} \frac{\left|\sum_{k=i+1}^{j}X_k^{(2)}\right|}{\sqrt{j-i}}\geq \frac{|\mu_1|t^{1/3}}{2}\right) \nonumber \\
        &\leq\sum_{i=0}^{t-1}\sum_{j=i+1}^t\mathbb{P}_{M,0}\left(\frac{|\sum_{k=i+1}^jX_k^{(2)}|}{\sqrt{j-i}}\geq \frac{|\mu_1|t^{1/3}}{2}\right) \label{eq:gaussianunion} \\
        &\leq t(t+1)\exp\left(-\frac{\mu_1^2t^{2/3}}{8}\right). \nonumber
    \end{align}
\end{proof}

Proposition \ref{prop:finiteexploitationgaussian} bounds the sum of the probability of $M_t\neq 1$ each time step. Proposition \ref{prop:almostsurelychangepointgaussian} bounds the first moment of $t_{\mathrm{stable}}$. Proposition \ref{prop:changepointboundgaussian} bounds the expected worst-case CUSUM change-point estimate.

\begin{proof}[Proof of Proposition \ref{prop:finiteexploitationgaussian}]
    We first note that
    \[
        \left\{M_t\neq 1\right\}\subseteq\bigcup_{m=2}^M\left\{T_t^{(m)}\geq T_t^{(1)}\right\}.
    \]
    Applying the union bound gives
    \begin{align}
        \sum_{t=1}^\infty\mathbb{P}_{M,0}\left(M_t\neq 1\right)&\leq\sum_{t=1}^\infty\sum_{m=2}^M\mathbb{P}_{M,0}\left(T_t^{(m)}\geq T_t^{(1)}\right)  \nonumber \\
        &=\sum_{t=1}^\infty(M-1)\mathbb{P}_{M,0}\left(T_t^{(2)}\geq T_t^{(1)}\right) \label{eq:probwrongmaximum likelihood estimate2} \\
        &\leq(M-1)\Bigg(\sum_{t=1}^\infty\mathbb{P}_{M,0}\left(T_t^{(2)}\geq T_t^{(1)}, T_t^{(1)}\geq\frac{\mu_1^2t^{2/3}}{8}\right) \label{eq:finitesumgaussian1} \\
        &\quad+\sum_{t=1}^\infty\mathbb{P}_{M,0}\left(T_t^{(1)}<\frac{\mu_1^2t^{2/3}}{8},N_t^{(1)}>\frac{t^{2/3}}{2}\right) \label{eq:finitesumgaussian2} \\
        &\quad+\sum_{t=1}^\infty\mathbb{P}_{M,0}\left(N_t^{(1)}\leq\frac{t^{2/3}}{2}\right)\Bigg). \label{eq:finitesumgaussian3}
    \end{align}
    \eqref{eq:probwrongmaximum likelihood estimate2} follows since streams $m\neq1$ are identically distributed. By Lemmas \ref{lemma:larger2GLRgaussian}, \ref{lemma:smallGLRbigNgaussian}, and \ref{lemma:smallN}, respectively, the three expressions \eqref{eq:finitesumgaussian1}, \eqref{eq:finitesumgaussian2}, \eqref{eq:finitesumgaussian3} decay super-polynomially in $t$ and are therefore summable.

\end{proof}

\begin{proof}[Proof of Proposition \ref{prop:almostsurelychangepointgaussian}]
    Recall
    \[
        t_{\mathrm{stable}}=\inf\left\{s\in\mathbb{N}: \forall i \geq s,M_i=1\right\}.
    \]
    Then for $t\in\mathbb{N}$,
    \[
        \left\{t_{\mathrm{stable}}>t\right\}=\left\{\exists i\geq t:M_i\neq 1\right\}=\bigcup_{i=t}^\infty\{M_i\neq 1\}\subseteq\bigcup_{i=t}^\infty\bigcup_{m=2}^M\left\{T_i^{(m)}\geq T_i^{(1)}\right\}.
    \]
    Applying the union bound and the fact that streams $m\neq 1$ are identically distributed gives
    \begin{align}
        \mathbb{E}_{M,0}[t_{\mathrm{stable}}]=\sum_{t=0}^{\infty}\mathbb{P}_{M,0}\left(t_{\mathrm{stable}}>t\right)\leq1+\sum_{t=1}^{\infty}\sum_{i=t}^\infty\mathbb{P}_{M,0}\left(M_i\neq 1\right)=1+(M-1)\sum_{t=1}^{\infty}t\mathbb{P}_{M,0}\left(T_t^{(2)}\geq T_t^{(1)}\right).
    \end{align}
    From the proof of Proposition \ref{prop:finiteexploitationgaussian}, $t\mathbb{P}_{M,0}\left(T_t^{(2)}\geq T_t^{(1)}\right)$ decays super-polynomially and is thus summable.  
\end{proof}

\begin{proof}[Proof of Proposition \ref{prop:changepointboundgaussian}]
    $\hat{\nu}_t^{\mathrm{CUSUM}}$ is the maximizer of a random walk with negative drift starting at 0 \citep{hinkley1970changepoint}:
    \begin{align}
        \hat{\nu}_t^{\mathrm{CUSUM}}=\min\left\{\argmax_{0\leq k\leq t}\sum_{i=k+1}^t\log\left(\frac{f_1\left(X_i\right)}{f_0\left(X_i\right)}\right)\right\}=\min\left\{\argmax_{0\leq k\leq t}\sum_{i=1}^k\log\left(\frac{f_0\left(X_i\right)}{f_1\left(X_i\right)}\right)\right\}.
        \label{eq:knownestimate}
    \end{align}
    The event that the estimate changes at time $n\in\mathbb{N}$ satisfies
    $$\left\{\hat{\nu}_n^{\mathrm{CUSUM}}\neq\hat{\nu}_{n-1}^{\mathrm{CUSUM}}\right\}=\left\{\sum_{i=1}^n\log\left(\frac{f_0(X_i)}{f_1(X_i)}\right)>\max_{0\leq k<n}\sum_{i=1}^k\log\left(\frac{f_0(X_i)}{f_1(X_i)}\right)\right\}\subseteq\left\{-\mu_1\sum_{i=1}^t X_i+t\mu_1^2/2>0\right\}.$$
    From Fact \ref{fact:subgaussian},
    \begin{align}
        \mathbb{P}_{1,0}\left(\hat{\nu}_n^{\mathrm{CUSUM}}\neq\hat{\nu}_{n-1}^{\mathrm{CUSUM}}\right)\leq\exp\left(-\frac{n\mu_1^2}{8}\right). \label{eq:finiteseriesCUSUMgaussian}
    \end{align}
    Denote the last time step where a change occurs in the sequence of estimates as
    \[
    N:=\sup\left\{\left\{n\in\mathbb{N}:\hat{\nu}_n^{\mathrm{CUSUM}}\neq\hat{\nu}_{n-1}^{\mathrm{CUSUM}}\right\}\cup\{0\}\right\}.
    \]
    Since $\hat{\nu}_n^{\mathrm{CUSUM}}\leq n$ for all $n\in\mathbb{N}$, the supremum is bounded by $N$. Applying the union bound and \eqref{eq:finiteseriesCUSUMgaussian},
    \begin{align}
        \mathbb{E}_{1,0}\left[\sup_{t\geq0}\hat{\nu}_t^{\mathrm{CUSUM}}\right]&\leq\mathbb{E}_{1,0}\left[N\right] \nonumber \\
        &=\sum_{n=0}^\infty\mathbb{P}_{1,0}\left(N>n\right) \nonumber \\
        &=\sum_{n=0}^\infty\mathbb{P}_{1,0}\left(\bigcup_{i=n+1}^\infty
        \left\{\hat{\nu}_i^{\mathrm{CUSUM}}\neq\hat{\nu}_{i-1}^{\mathrm{CUSUM}}\right\}\right) \nonumber \\
        &\leq\sum_{n=0}^\infty\sum_{i=n+1}^\infty\mathbb{P}_{1,0}\left(\hat{\nu}_i^{\mathrm{CUSUM}}\neq\hat{\nu}_{i-1}^{\mathrm{CUSUM}}\right) \nonumber \\
        &=\sum_{n=1}^\infty n\exp\left(-\frac{n\mu_1^2}{8}\right) \nonumber \\
        &=\frac{\exp(\mu_1^2/8)}{(\exp(\mu_1^2/8)-1)^2}.
    \end{align}
\end{proof}

\subsection{Bernoulli Analysis}
The following lemmas are direct analogues of Lemmas \ref{lemma:smallGLRbigNgaussian}-\ref{lemma:larger2GLRgaussian}, respectively, for Bernoulli-Decaying-$\epsilon$-FOCuS.

\begin{lem} \label{lemma:smallGLRbigNBernoulli}
    Consider applying~Bernoulli-Decaying-$\epsilon$-FOCuS on \(M>1\) Bernoulli streams. Suppose a change-point $\nu=0$ exists shifting stream 1's mean from $\mu_0\in(0,1)$ to $\mu_1\in(0,1)$, with $\mu_1\neq\mu_0$. Then for $t\in\mathbb{N}$, 
    \begin{align}
        \mathbb{P}_{M,0}\left(T_t^{(1)}<\frac{(\mu_1-\mu_0)^2t^{2/3}}{2},N_t^{(1)}>\frac{t^{2/3}}{2}\right)\leq2t\exp\left(-t^{2/3}\left(1-\frac{1}{\sqrt{2}}\right)^2(\mu_1-\mu_0)^2\right). \nonumber
    \end{align} 
\end{lem}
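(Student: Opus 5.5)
We follow the proof of Lemma \ref{lemma:smallGLRbigNgaussian}, replacing the sub-Gaussian quadratic statistic with the Bernoulli KL statistic. First we decompose the event according to the value of $N_t^{(1)}$:
\begin{align}
\mathbb{P}_{M,0}\left(T_t^{(1)}<\tfrac{(\mu_1-\mu_0)^2t^{2/3}}{2},N_t^{(1)}>\tfrac{t^{2/3}}{2}\right)\leq\sum_{n=\lfloor t^{2/3}/2\rfloor+1}^{t}\mathbb{P}_{M,0}\left(\max_{0\leq k\leq n}(n-k)D\left(\hat{\mu}^{(1)}_{k+1:n}\|\mu_0\right)<\tfrac{(\mu_1-\mu_0)^2t^{2/3}}{2}\right).
\end{align}
This step is justified because, on $\{N_t^{(1)}=n\}$, the first $n$ stream-1 observations are i.i.d.\ $\mathrm{Bernoulli}(\mu_1)$ regardless of the sampling rule. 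Taking the single term $k=0$ in the maximum bounds each summand by $\mathbb{P}(nD(\hat{\mu}_{1:n}^{(1)}\|\mu_0)<(\mu_1-\mu_0)^2t^{2/3}/2)$.

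Next we turn the KL condition into a condition on the sample mean using Pinsker's inequality (Fact \ref{fact:pinsker}). For Bernoulli distributions, the total variation distance is $|P-Q|=2|\hat{\mu}-\mu_0|$, so $D(\hat{\mu}\|\mu_0)\geq 2(\hat{\mu}-\mu_0)^2$. Hence the event $nD<(\mu_1-\mu_0)^2t^{2/3}/2$ implies $|\hat{\mu}_{1:n}^{(1)}-\mu_0|<|\mu_1-\mu_0|\,t^{1/3}/(2\sqrt{n})$. Since $n>t^{2/3}/2$, the right-hand side is at most $|\mu_1-\mu_0|/\sqrt{2}$.

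By the triangle inequality, this forces $|\hat{\mu}_{1:n}^{(1)}-\mu_1|>|\mu_1-\mu_0|(1-1/\sqrt{2})$. We bound this deviation with the two-sided Hoeffding inequality (Fact \ref{fact:Hoeffding}), which gives
\[
\mathbb{P}\left(|\hat{\mu}_{1:n}^{(1)}-\mu_1|\geq(1-1/\sqrt{2})|\mu_1-\mu_0|\right)\leq 2\exp\left(-2n(1-1/\sqrt{2})^2(\mu_1-\mu_0)^2\right).
\]
We could use the one-sided version, treating the cases $\mu_1>\mu_0$ and $\mu_1<\mu_0$ symmetrically, but the factor $2$ in the stated bound suggests the two-sided version was intended. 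Using $2n>t^{2/3}$, each term is at most $2\exp(-t^{2/3}(1-1/\sqrt2)^2(\mu_1-\mu_0)^2)$. Summing over at most $t$ values of $n$ gives the stated bound.

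The main obstacle is getting the constants in the Pinsker step right. Pinsker's inequality as stated in Fact \ref{fact:pinsker} uses the $\ell_1$ distance, which for Bernoulli distributions equals $2|p-q|$, and this is what produces the coefficient $2$ in $D\geq 2(\hat{\mu}-\mu_0)^2$. That coefficient in turn sets the threshold $(\mu_1-\mu_0)^2t^{2/3}/2$ so that it matches the $1/\sqrt{2}$ fraction. We also need the convention $D(\hat{\mu}\|\mu_0)$ at $\hat{\mu}\in\{0,1\}$ to be finite and nonnegative so that Pinsker applies; this holds since $\mu_0\in(0,1)$.
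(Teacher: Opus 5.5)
Your proposal is correct and follows the paper's proof step for step. Both arguments decompose over $\{N_t^{(1)}=n\}$ and bound each summand by the $k=0$ term. They then apply Pinsker's inequality with $|P-Q|=2|\hat{\mu}-\mu_0|$, use the triangle inequality to get a deviation of size $(1-1/\sqrt{2})|\mu_1-\mu_0|$, apply two-sided Hoeffding with $2n>t^{2/3}$, and sum over at most $t$ values of $n$.
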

\begin{proof}
    For $t\in\mathbb{N}$, following the proof of Lemma \ref{lemma:smallGLRbigNgaussian},
    \begin{align}
        &\mathbb{P}_{M,0}\left(T_t^{(1)}<\frac{(\mu_1-\mu_0)^2t^{2/3}}{2},N_t^{(1)}>\frac{t^{2/3}}{2}\right) \nonumber \\
        &\quad=\mathbb{P}_{M,0}\left(\bigcup_{n=\lfloor t^{2/3}/2\rfloor+1}^t\left\{T_t^{(1)}<\frac{(\mu_1-\mu_0)^2t^{2/3}}{2},N_t^{(1)}=n\right\}\right) \nonumber \\
        &\quad=\sum_{n=\lfloor t^{2/3}/2\rfloor+1}^t\mathbb{P}_{M,0}\left(\max_{0\leq k< N_t^{(1)}}\left(N_t^{(1)}-k\right)D\left(\hat{\mu}_{k+1:N_t^{(1)}}^{(1)}||\mu_0\right)<\frac{(\mu_1-\mu_0)^2t^{2/3}}{2},N_t^{(1)}=n\right) \nonumber \\
        &\quad\leq\sum_{n=\lfloor t^{2/3}/2\rfloor+1}^t\mathbb{P}_{M,0}\left(\max_{0\leq k< n}\left(n-k\right)D\left(\hat{\mu}_{k+1:n}^{(1)}||\mu_0\right)<\frac{(\mu_1-\mu_0)^2t^{2/3}}{2}\right) \nonumber \\
        &\quad\leq\sum_{n=\lfloor t^{2/3}/2\rfloor+1}^t\mathbb{P}_{M,0}\left(\left|\hat{\mu}_{1:n}^{(1)}-\mu_0\right|<\frac{|\mu_1-\mu_0|t^{1/3}}{2\sqrt{n}}\right) \label{eq:bernoulliunionbound1t1ltntgt} \\
        &\quad\leq\sum_{n=\lfloor t^{2/3}/2\rfloor+1}^t\mathbb{P}_{M,0}\left(|\mu_1-\mu_0|-\left|\hat{\mu}_{1:n}^{(1)}-\mu_1\right|<\frac{|\mu_1-\mu_0|t^{1/3}}{2\sqrt{n}}\right). \label{eq:reversetriangleinequality}
    \end{align}
    \eqref{eq:bernoulliunionbound1t1ltntgt} follows from Pinsker's inequality (Fact \ref{fact:pinsker}). \eqref{eq:reversetriangleinequality} follows from the triangle inequality. Applying Hoeffding's inequality (Fact \ref{fact:Hoeffding}), for $n\geq\lfloor t^{2/3}/2\rfloor+1$ we have
    \begin{align}
        \mathbb{P}_{M,0}\left(|\mu_1-\mu_0|-\left|\hat{\mu}_{1:n}^{(1)}-\mu_1\right|<\frac{|\mu_1-\mu_0|t^{1/3}}{2\sqrt{n}}\right)&\leq\mathbb{P}_{M,0}\left(\left|\hat{\mu}_{1:n}^{(1)}-\mu_1\right|>\left(1-\frac{1}{\sqrt{2}}\right)|\mu_1-\mu_0|\right) \nonumber \\
        &\leq 2\exp\left(-2n\left(1-\frac{1}{\sqrt{2}}\right)^2(\mu_1-\mu_0)^2\right) \nonumber \\
        &\leq 2\exp\left(-t^{2/3}\left(1-\frac{1}{\sqrt{2}}\right)^2(\mu_1-\mu_0)^2\right). \nonumber 
    \end{align}
    Inserting it back into gives
    \begin{align}
        \mathbb{P}_{M,0}\left(T_t^{(1)}<\frac{(\mu_1-\mu_0)^2t^{2/3}}{2},N_t^{(1)}>\frac{t^{2/3}}{2}\right)\leq2t\exp\left(-t^{2/3}\left(1-\frac{1}{\sqrt{2}}\right)^2(\mu_1-\mu_0)^2\right). \nonumber 
    \end{align}
    
\end{proof}

\begin{lem} \label{lemma:larger2GLRBernoulli}
Consider applying~Bernoulli-Decaying-$\epsilon$-FOCuS on \(M>1\) Bernoulli streams. Suppose a change-point $\nu=0$ exists shifting stream 1's mean from $\mu_0\in(0,1)$ to $\mu_1\in(0,1)$, with $\mu_1\neq\mu_0$. Then for $t\in\mathbb{N}$, 
\begin{align}
    \mathbb{P}_{M,0}\left(T_t^{(2)}\geq T_t^{(1)}, T_t^{(1)}\geq\frac{(\mu_1-\mu_0)^2t^{2/3}}{2}\right)\leq t(t+1)\exp\left(-\frac{\alpha(\mu_1-\mu_0)^2t^{2/3}}{2}\right), \nonumber
\end{align}
with $\alpha:=\min\{\mu_0,1-\mu_0\}$. 
\end{lem}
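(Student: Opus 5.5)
We mirror the proof of Lemma \ref{lemma:larger2GLRgaussian}. First drop the comparison with stream 1:
\[
\mathbb{P}_{M,0}\left(T_t^{(2)}\geq T_t^{(1)},\,T_t^{(1)}\geq c\,t^{2/3}\right)\leq \mathbb{P}_{M,0}\left(T_t^{(2)}\geq c\,t^{2/3}\right),\qquad c:=\frac{(\mu_1-\mu_0)^2}{2}.
\]
Since $N_t^{(2)}\leq t$, the event $\{T_t^{(2)}\geq c t^{2/3}\}$ is contained in the event that some window of stream 2's first $t$ observations (i.i.d.\ $\mathrm{Bernoulli}(\mu_0)$ under $\mathbb{P}_{M,0}$) satisfies $(j-i)D(\hat{\mu}_{i+1:j}^{(2)}||\mu_0)\geq c t^{2/3}$, where $0\leq i<j\leq t$. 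The variables $X_k^{(2)}$ beyond $N_t^{(2)}$ can be regarded as a fixed i.i.d.\ sequence, since stream 2's observations are i.i.d.\ regardless of the sampling rule. A union bound over the at most $t(t+1)/2\leq t(t+1)$ pairs $(i,j)$ reduces the problem to bounding a single-window probability by $\exp(-\alpha c\, t^{2/3})$.

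For a fixed window of length $n=j-i$ we must convert a large KL divergence into a large deviation of the sample mean. Here we use the reverse Pinsker inequality (Fact \ref{fact:revpinsker}) with $Q=\mathrm{Bernoulli}(\mu_0)$, so $Q_{\min}=\alpha$ and $|P-Q|=2|\hat\mu-\mu_0|$. This gives
\[
D(\hat\mu||\mu_0)\leq\log\left(1+\frac{2(\hat\mu-\mu_0)^2}{\alpha}\right)\leq\frac{2(\hat\mu-\mu_0)^2}{\alpha}.
\]
Hence $nD(\hat\mu_{i+1:j}||\mu_0)\geq c t^{2/3}$ implies
\[
\left|\sum_{k=i+1}^j (X_k^{(2)}-\mu_0)\right|\geq \sqrt{\alpha c\, n\, t^{2/3}/2}.
\]
Hoeffding's inequality (Fact \ref{fact:Hoeffding}, two-sided) then bounds this probability by $2\exp(-\alpha c\, t^{2/3})$, which is independent of $n$.

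The constants need care: the factor $2$ from the two-sided Hoeffding bound must be absorbed by counting pairs as $t(t+1)/2$, and the exponent must come out as at least $\alpha(\mu_1-\mu_0)^2t^{2/3}/2$. The bound above gives exponent $\alpha c\, t^{2/3}=\alpha(\mu_1-\mu_0)^2t^{2/3}/2$, which is exactly the claimed one. This bookkeeping is the main obstacle: losing a factor 2 in Pinsker's total variation normalization would break the stated exponent. If the reverse Pinsker constant turns out looser, I would fall back on a Chernoff bound, using $\mathbb{P}(nD(\hat\mu||\mu_0)\geq x)\leq 2e^{-x}$ for each tail separately. That yields exponent $c t^{2/3}\geq\alpha c\, t^{2/3}$, since $\alpha\leq 1/2<1$, so the claim holds either way.
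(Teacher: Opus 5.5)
Your proof is correct and follows the paper's argument: drop the comparison with stream 1, take a union bound over the $t(t+1)/2$ windows $0\le i<j\le t$ of stream 2, use the reverse Pinsker bound $D(\hat{\mu}||\mu_0)\le 2(\hat{\mu}-\mu_0)^2/\alpha$, and apply two-sided Hoeffding to get $2\exp\left(-\alpha(\mu_1-\mu_0)^2t^{2/3}/2\right)$ per window, with the factor $2$ absorbed by the pair count as you describe. The Chernoff fallback is not needed, though it is also valid and would give the sharper exponent $(\mu_1-\mu_0)^2t^{2/3}/2$.
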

\begin{proof} 
    For $t\in\mathbb{N}$, applying the same union bound logic as in the proof of Lemma \ref{lemma:larger2GLRgaussian},
    \begin{align}
        &\mathbb{P}_{M,0}\left(T_t^{(2)}\geq T_t^{(1)},T_t^{(1)}\geq\frac{(\mu_1-\mu_0)^2t^{2/3}}{2}\right) \nonumber \\
        &\quad\leq\mathbb{P}_{M,0}\left(\underset{{0\leq i<j\leq t}}{\operatorname{max}} (j-i)D\left(\hat{\mu}_{i+1:j}^{(2)}||\mu_0\right)\geq\frac{(\mu_1-\mu_0)^2t^{2/3}}{2}\right) \nonumber \\
        &\quad\leq\sum_{i=0}^{t-1}\sum_{j=i+1}^t\mathbb{P}_{M,0}\left(\sqrt{(j-i)D\left(\hat{\mu}_{i+1:j}^{(2)}||\mu_0\right)}\geq \frac{|\mu_1-\mu_0|t^{1/3}}{\sqrt{2}}\right). \label{eq:bernoulliunion}
    \end{align}
    From the reverse Pinsker inequality in Fact \ref{fact:revpinsker}, we have
    \begin{align}
        D\left(\hat{\mu}_{i+1:j}^{(2)}||\mu_0\right)\leq\log\left(1+\frac{2\left(\hat{\mu}_{i+1:j}^{(2)}-\mu_0\right)^2}{\alpha}\right)\leq\frac{2\left(\hat{\mu}_{i+1:j}^{(2)}-\mu_0\right)^2}{\alpha}. \label{eq:reversepinskersamplemean}
    \end{align}
    Therefore, continuing from \eqref{eq:bernoulliunion} and applying Hoeffding's inequality (Fact \ref{fact:Hoeffding}), we have
    \begin{align}
        \mathbb{P}_{M,0}\left(T_t^{(2)}\geq T_t^{(1)},T_t^{(1)}\geq\frac{(\mu_1-\mu_0)^2t^{2/3}}{2}\right)&\leq\sum_{i=0}^{t-1}\sum_{j=i+1}^t\mathbb{P}_{M,0}\left(\sqrt{\frac{2(j-i)\left(\hat{\mu}_{i+1:j}^{(2)}-\mu_0\right)^2}{\alpha}}\geq \frac{|\mu_1-\mu_0|t^{1/3}}{\sqrt{2}}\right) \nonumber \\
        &=\sum_{i=0}^{t-1}\sum_{j=i+1}^t\mathbb{P}_{M,0}\left(\left|\hat{\mu}_{i+1:j}^{(2)}-\mu_0\right|\sqrt{j-i}\geq \frac{\sqrt{\alpha}|\mu_1-\mu_0|t^{1/3}}{2}\right) \nonumber \\
        &\leq\sum_{i=0}^{t-1}\sum_{j=i+1}^t2\exp\left(-2\left(\frac{\sqrt{\alpha}|\mu_1-\mu_0|t^{1/3}}{2}\right)^2\right) \nonumber \\
        &= t(t+1)\exp\left(-\frac{\alpha(\mu_1-\mu_0)^2t^{2/3}}{2}\right). \nonumber 
    \end{align}
\end{proof}
We now prove the analogues of Propositions \ref{prop:finiteexploitationgaussian}-\ref{prop:changepointboundgaussian} for Bernoulli-Decaying-$\epsilon$-FOCuS.

\begin{proof}[Proof of Proposition \ref{prop:finiteexploitationbernoulli}]
    Following the logic of the proof of Proposition \ref{prop:finiteexploitationgaussian},
    \begin{align}
        \sum_{t=1}^\infty\mathbb{P}_{M,0}\left(M_t\neq 1\right)&\leq\sum_{t=1}^\infty\sum_{m=2}^M\mathbb{P}_{M,0}\left(T_t^{(m)}\geq T_t^{(1)}\right)  \nonumber \\
        &=\sum_{t=1}^\infty(M-1)\mathbb{P}_{M,0}\left(T_t^{(2)}\geq T_t^{(1)}\right) \nonumber \\
        &\leq(M-1)\Bigg(\sum_{t=1}^\infty\mathbb{P}_{M,0}\left(T_t^{(2)}\geq T_t^{(1)}, T_t^{(1)}\geq\frac{(\mu_1-\mu_0)^2t^{2/3}}{2}\right) \label{eq:finitesumbernoulli1} \\
        &\quad+\sum_{t=1}^\infty\mathbb{P}_{M,0}\left(T_t^{(1)}<\frac{(\mu_1-\mu_0)^2t^{2/3}}{2},N_t^{(1)}>\frac{t^{2/3}}{2}\right) \label{eq:finitesumbernoulli2} \\
        &\quad+\sum_{t=1}^\infty\mathbb{P}_{M,0}\left(N_t^{(1)}\leq\frac{t^{2/3}}{2}\right)\Bigg). \label{eq:finitesumbernoulli3}
    \end{align}
    From Lemmas \ref{lemma:larger2GLRBernoulli}, \ref{lemma:smallGLRbigNBernoulli}, and \ref{lemma:smallN}, respectively, the expressions in \eqref{eq:finitesumbernoulli1}, \eqref{eq:finitesumbernoulli2}, and \eqref{eq:finitesumbernoulli3} decay super-polynomially with respect to $t$ and are thus summable.
\end{proof}

\begin{proof}[Proof of Proposition \ref{prop:almostsurelychangepointbernoulli}]
    The first part of the proof is nearly identical to the proof of Proposition \ref{prop:almostsurelychangepointgaussian} and yields 
    \begin{align}
        \mathbb{E}_{M,0}[t_{\mathrm{stable}}]\leq 1+(M-1)\sum_{t=1}^{\infty}t\mathbb{P}_{M,0}\left(T_t^{(2)}\geq T_t^{(1)}\right). \label{eq:et0partitionedbernoulli} 
    \end{align}
    From the proof of Proposition \ref{prop:finiteexploitationbernoulli}, $\mathbb{P}_{M,0}\left(T_t^{(2)}\geq T_t^{(1)}\right)$ decays super-polynomially with respect to $t$ and thus $t\mathbb{P}_{M,0}\left(T_t^{(2)}\geq T_t^{(1)}\right)$ is summable and can be bounded as a constant.
\end{proof}

\begin{proof}[Proof of Proposition \ref{prop:changepointboundbernoulli}]
    $\hat{\nu}_t^{\mathrm{CUSUM}}$ is the maximizer of a random walk with negative drift. This follows since
    \begin{align}
        \mathbb{E}_{1,0}\left[\log\left(\frac{f_0\left(X_i\right)}{f_1\left(X_i\right)}\right)\right]=\mu_1\log\left(\frac{\mu_0(1-\mu_1)}{\mu_1(1-\mu_0)}\right) + \log\left(\frac{1-\mu_0}{1-\mu_1}\right)=-D(\mu_1||\mu_0)<0. \label{eq:klbernoulli}
    \end{align}
    Recall that
    \[
        \delta:=\left(\frac{\log\left(\frac{1-\mu_1}{1-\mu_0}\right)}{\log\left(\frac{\mu_0(1-\mu_1)}{\mu_1(1-\mu_0)}\right)}-\mu_1\right).
    \]
    If $\mu_0<\mu_1$, $\log\left(\frac{\mu_0(1-\mu_1)}{\mu_1(1-\mu_0)}\right)<0$ and $\delta<0$. Following the proof of Proposition \ref{prop:changepointboundgaussian}, we can use Hoeffding's inequality (Fact \ref{fact:Hoeffding}) to bound the probability of the change-point estimate changing at time $n\in\mathbb{N}$ as
    \begin{align}
        \mathbb{P}_{1,0}\left(\hat{\nu}_n^{\mathrm{CUSUM}}\neq\hat{\nu}_{n-1}^{\mathrm{CUSUM}}\right)&\leq\mathbb{P}_{1,0}\left(\sum_{i=1}^n\log\left(\frac{f_0\left(X_i\right)}{f_1\left(X_i\right)}\right)\geq0\right) \nonumber \\
        &=\mathbb{P}_{1,0}\left(\sum_{i=1}^nX_i\leq \frac{n\log\left(\frac{1-\mu_1}{1-\mu_0}\right)}{\log\left(\frac{\mu_0(1-\mu_1)}{\mu_1(1-\mu_0)}\right)}\right) \nonumber \\
        &=\mathbb{P}_{1,0}\left(\sum_{i=1}^nX_i-n\mu_1\leq n\delta\right) \nonumber \\
        &\leq\exp\left(-2n\delta^2\right). \nonumber 
    \end{align}
    If $\mu_0>\mu_1$, $\log\left(\frac{\mu_0(1-\mu_1)}{\mu_1(1-\mu_0)}\right)>0$ and $\delta>0$. By Hoeffding's inequality (Fact \ref{fact:Hoeffding}), we have for $n\in\mathbb{N}$,
    \begin{align}
        \mathbb{P}_{1,0}\left(\hat{\nu}_n^{\mathrm{CUSUM}}\neq\hat{\nu}_{n-1}^{\mathrm{CUSUM}}\right)&\leq\mathbb{P}_{1,0}\left(\sum_{i=1}^n\log\left(\frac{f_0\left(X_i\right)}{f_1\left(X_i\right)}\right)\geq0\right) \nonumber \\
        &=\mathbb{P}_{1,0}\left(\sum_{i=1}^nX_i\geq \frac{n\log\left(\frac{1-\mu_1}{1-\mu_0}\right)}{\log\left(\frac{\mu_0(1-\mu_1)}{\mu_1(1-\mu_0)}\right)}\right) \nonumber \\
        &=\mathbb{P}_{1,0}\left(\sum_{i=1}^nX_i-n\mu_1\geq n\delta\right) \nonumber \\
        &\leq\exp\left(-2n\delta^2\right). \nonumber 
    \end{align}
    Let $N$ be defined as in the proof of Proposition \ref{prop:changepointboundgaussian}. We then have
    \begin{align}
        \mathbb{E}_{1,0}\left[\sup_{t\geq 0}\hat{\nu}_t^{\mathrm{CUSUM}}\right]\leq\mathbb{E}_{1,0}[N]\leq\sum_{n=1}^\infty n\exp\left(-2n\delta^2\right)=\frac{\exp(2\delta^2)}{(\exp(2\delta^2)-1)^2}.
    \end{align}
\end{proof}

\vfill

\end{document}